\newcommand{\eqr}[1]{Eq.\thinspace(#1)}
\newcommand{\pfrac}[2]{\frac{\partial #1}{\partial #2}}
\newcommand{\pfraca}[1]{\frac{\partial}{\partial #1}}
\newcommand{\mvec}[1]{\mathbf{#1}}
\newcommand{\gvec}[1]{\boldsymbol{#1}}
\newcommand{\gcs}{\nabla}
\newcommand{\gvs}{\nabla_{\mvec{v}}}
\newcommand{\gps}{\nabla_{\mvec{z}}}
\newcommand{\dtv}{\thinspace d\mvec{v}}
\newcommand{\dtx}{\thinspace d\mvec{x}}
\newtheorem{proposition}{Proposition}
\newtheorem{corollary}{Corollary}
\newtheorem{lemma}{Lemma}
\newtheorem{remark}{Remark}
\journal{Journal of Computational Physics}
\begin{document}

\begin{frontmatter}



\title{Discontinuous Galerkin algorithms for fully kinetic plasmas}

\author[label1]{J. Juno}
\author[label2]{A. Hakim}
\author[label1,label3]{J. TenBarge}
\author[label2]{E. Shi}
\author[label1]{W. Dorland}
\address[label1]{IREAP, University of Maryland, College Park, MD, 20742}
\address[label2]{Princeton Plasma Physics Laboratory, Princeton, NJ 08543-0451}
\address[label3]{Princeton University, Princeton, NJ, 08544}

\begin{abstract}
We present a new algorithm for the discretization of the non-relativistic Vlasov-Maxwell system of equations for the study of plasmas in the kinetic regime. Using the discontinuous Galerkin finite element method for the spatial discretization, we obtain a high order accurate solution for the plasma's distribution function. Time stepping for the distribution function is done explicitly with a third order strong-stability preserving Runge-Kutta method. Since the Vlasov equation in the Vlasov-Maxwell system is a high dimensional transport equation, up to six dimensions plus time, we take special care to note various features we have implemented to reduce the cost while maintaining the integrity of the solution, including the use of a reduced high-order basis set. A series of benchmarks, from simple wave and shock calculations, to a five dimensional turbulence simulation, are presented to verify the efficacy of our set of numerical methods, as well as demonstrate the power of the implemented features. 
\end{abstract}

\begin{keyword}
Discontinuous Galerkin
\sep Vlasov-Maxwell


\end{keyword}

\end{frontmatter}
\tableofcontents

\section{Introduction}
\label{}
\begin{par}
Plasmas are ubiquitous in nature, and the study of plasmas has application to a wide variety of problems, from the development of nuclear fusion, to understanding the dynamic interaction between the solar wind and the Earth's magnetosphere, to elucidating the mysteries of large scale astrophysical phenomena such as binary star collisions or the accretion disks of black holes. Unfortunately, many plasmas of interest are only weakly collisional and far from equilibrium, making the system best described by kinetic theory, in this case, the Vlasov equation. The use of kinetic theory significantly complicates the theoretical analysis and simulation of the plasma's dynamics due to the increased dimensionality of the corresponding equations, which are solved in a combined position and velocity phase space, along with the large collection of waves and instabilities that the kinetic system supports. This complication is usually mitigated in one of two ways: either a reduction of the system via asymptotic expansions of the corresponding equations in appropriate limits for the problem of interest or direct numerical simulation of the plasma by approximating it as a collection of ``macroparticles," and employing the particle-in-cell (PIC) algorithm \citep{birdsallbook}. At scales much larger than the ion gyroradius, fluid models for the plasma are excellent tools for analyzing the macroscopic evolution of the plasma. Likewise, reductions of the system which retain kinetic effects, such as the gyrokinetic expansion of the Vlasov equation \citep{Frieman:1982, Brizard:2007}, which averages over the fast cyclotron motion of the particles in the plasma to eliminate one velocity space dimension, have been quite lucrative in both fusion and astrophysical contexts. The PIC algorithm has a long history of success in revealing many novel kinetic features of the plasma's dynamics because, outside of the approximation of integrating along ``macroparticle" trajectories as opposed to individual particle trajectories, the algorithm makes no approximations of the plasma's dynamics and retains the vast array of kinetic phenomena intrinsic to the system.
\end{par}
\begin{par}
However, these two approaches are not without their disadvantages. Fluid models derived from asymptotic expansions of the Vlasov equation often require sub-grid models for the microphysics being approximated, and, while physically motivated, these closures are not always well-suited for the plasma being modeled. Even kinetic reductions which are closed, sometimes rigorously so, such as the aforementioned gyrokinetics, are almost certainly not valid in all contexts; for example, large amplitude fluctuations and waves above the cyclotron frequency observed in the solar wind and solar corona are beyond the scope of traditional gyrokinetic formulations. Thus, a tool like the PIC algorithm is ideal for studying physics in regimes where no tractable and physically reasonable asymptotic reduction exists, but the approximation of the plasma as a collection of macroparticles introduces counting noise into the resulting solutions, which can pollute the results quite severely, or at the very least complicate the analysis. One can always mollify this concern by increasing the number of particles in the simulation, but the counting noise decreases like $1/\sqrt{N}$, where $N$ is the number of particles per grid cell. This scaling can make the study of physics where the signal-to-noise ratio is low especially challenging. For example, Camporeale \emph{et al.} \cite{Camporeale:2016} have demonstrated that a large number of particles-per-cell is required to correctly identify wave-particle resonances and compare well with linear theory. One could employ the delta-f PIC method \citep{Kunz:2014b}, but noise mitigation techniques such as the delta-f PIC method can break down if the distribution function deviates significantly from its initial value.
\end{par}
\begin{par}
Therefore, if the problem of interest requires a full Vlasov-Maxwell description and is characterized by a low signal-to-noise ratio, an alternative approach which directly solves the Vlasov equation and gives access to a noise-free solution is desired. While computationally infeasible in the past due to the need to solve a six-dimensional system (three position and three velocity), plus time, to accurately capture the dynamics of the plasma, in recent years, directly solving the Vlasov equation has become a more popular alternative approach to illuminating the microphysics of the plasma's evolution. Previous Vlasov work has focused almost exclusively on the so-called hybrid framework, treating the electrons as a fluid, usually massless and isothermal, to save significantly on computational cost \citep{Valentini:2007, Valentini2010, Greco2012, Perrone2013, Servidio2014, Valentini2016, Kempf:2012, Kempf2013, Pokhotelov2013}, with some exceptions \citep{Wettervik2016, Vencels:2016}. In this paper, we present the addition of a multi-species Vlasov-Maxwell solver to \texttt{Gkeyll}, a modular framework in which a variety of plasma physics and fluid dynamics solvers are currently being built  \citep{Hakim:2006, Hakim:2008, Ng:2015, Shi:2015, Wang:2015, Shi2017, Cagas:2017, Cagas:2017b}. Within \texttt{Gkeyll}, the Vlasov-Maxwell system is discretized in space with a discontinuous Galerkin finite element method, and discretized in time with a strong stability preserving Runge-Kutta method to form a fully explicit update. The discontinuous Galerkin method combines the power of finite element methods, including high order accuracy and the ability to handle complicated geometries, with the advantages of finite volume methods, including the introduction of limiters to enforce stability and physicality of the solution, and locality of data, for efficient parallelization \citep{Cockburn:1998vt, Cockburn:2001vr, Hesthaven2008}. The high order accuracy and locality of data in particular make our approach especially advantageous; higher order polynomials provide a level of accuracy equivalent to refining the grid at a fraction of the cost, and the locality of data significantly reduces the amount of communication required in the update, enabling the algorithm to scale to well on both standard computing architecture and many-core devices. In fact, the discontinuous Galerkin algorithm has been gaining increased attention as a means of discretizing high dimensional transport equations and various flavors of the Vlasov equation, including Vlasov-Poisson, Vlasov-Ampere, and the aforementioned Vlasov-Maxwell \citep{Cheng:2011, Cheng2014630, Cheng:2014}.
\end{par}
\begin{par}
The paper is organized as follows: Section \ref{sec:Vlasov} provides an overview of the relevant plasma kinetic equation, with discussion of relevant conservation properties in the continuous system. Section \ref{sec:DGVlasov} describes the Runge-Kutta discontinuous Galerkin discretization and proves relevant conservation properties of the discrete system in the continuous time limit. Section \ref{sec:Serendipity} covers the details of the implementation of the numerical algorithm, including the choice of basis functions to mitigate the computational cost in higher dimensions. Section \ref{sec:Results} demonstrates the functionality of the algorithm, including scaling studies and performance analysis of the code on high dimensional benchmarks. Finally, Section \ref{sec:Summary} summarizes our findings and provides future directions for further numerical improvements while discussing current problems in plasma physics which are now within grasp with this new numerical tool.
\end{par}
\section{The Vlasov-Maxwell system} \label{sec:Vlasov}
\subsection{Basic equations}
\begin{par}
The time evolution for the distribution function $f_s(t, \mvec{x}, \mvec{v})$ for species $s$ in a plasma is given by the Vlasov equation,
\begin{align}
 \pfrac{f_s}{t} + \gcs\cdot (\mvec{v} f_s) + \gvs\cdot (\mvec{F}_s f_s) = 0, \label{eqn:vm}
\end{align}
where $\mvec{F}_s=q_s/m_s (\mvec{E}+\mvec{v}\times\mvec{B})$ is the Lorentz force. In the Lorentz force, $q_s$ and $m_s$ are the charge and mass of species $s$ respectively, and $\mvec{E}$ and $\mvec{B}$ are the electric and magnetic fields. The operators $\gcs$ and $\gvs$ are the gradient operators in configuration and velocity space respectively. Here we are considering two simultaneous limits: the non-relativistic\footnote{The special relativistic limit of the Vlasov equation can be obtained by substitution of the Lorentz boost factors where appropriate and a change of variables from velocity to momentum,
\begin{align}
\pfrac{f_s}{t} + \gcs \cdot \left ( \frac{\mvec{p}}{m_s \gamma } f_s \right ) + \nabla_{\mvec{p}} \cdot \left (q_s \left (\mvec{E} + \frac{\mvec{p}}{m_s \gamma} \times \mvec{B} \right ) f_s \right ) = 0 \label{eq:relativisticVlasov}, 
\end{align}
where $\gamma$ is the Lorentz boost factor,
\begin{align}
\gamma = \frac{1}{\sqrt{1 - \frac{|\mvec{p}|^2}{m^2 c^2}}} \label{eq:LorentzBoost}.
\end{align}
}
and the collisionless\footnote{The weakly collisional or collisional limits can be obtained by the addition of the Landau-Fokker-Planck \citep{Rosenbluth1957} collision operator to the right hand side of \eqr{\ref{eqn:vm}},
\begin{align}
\left( \pfrac{f_s}{t} \right )_{\textrm{collisions}} = \sum_{s'} \nu_{s,s'} \gvs \cdot \int d\mvec{v}' \overleftrightarrow{\mvec{U}}(\mvec{v}, \mvec{v}') \cdot \left ( f_{s'}(\mvec{v}') \gvs f_s(\mvec{v}) - \frac{m_s}{m_{s'}} f_s(\mvec{v}) \nabla_{\mvec{v}'} f_{s'}(\mvec{v}') \right ) \label{eq:FPOperator},
\end{align}
where
\begin{align}
\nu_{s,s'} = \frac{q^2_s q^2_{s'} \ln(n_s \lambda_{Ds}^3)}{8 \pi m_s \epsilon_0}
\end{align}
is the collision frequency of species $s$ colliding with species $s'$, and $\overleftrightarrow{\mvec{U}}(\mvec{v}, \mvec{v}')$ is the Landau tensor,
\begin{align}
\overleftrightarrow{\mvec{U}}(\mvec{v}, \mvec{v}') = \frac{1}{|\mvec{v} - \mvec{v}'|} \left (\mvec{I} - \frac{(\mvec{v} - \mvec{v}')(\mvec{v} - \mvec{v}')}{|\mvec{v} - \mvec{v}'|^2} \right ).
\end{align}
Here, $\mvec{I}$ is the identity tensor. In the definition of the collision frequency, $\lambda_{Ds} = \sqrt{\frac{\epsilon_0 T_s}{n_s q_s^2}}$ is the Debye length. Often, the term in the logarithm, $n_s \lambda_{Ds}^3$, is abbreviated as $\Lambda$ and called the plasma parameter. Reductions of the collision operator, such as the Bhatnagar-Gross-Krook (BGK) \citep{Bhatnagar:1954} collision operator,
\begin{align}
\left( \pfrac{f_s}{t} \right )_{\textrm{collisions}} = \nu (f_{eq} - f_s) \label{eq:BGKOperator},
\end{align}
where $f_{eq}$ is some equilibrium distribution function to which the distribution function should relax, are also sometimes considered the weakly collisional or collisional limit. However, given the form of the Landau-Fokker-Planck collision operator, reduced collision operators which maintain a Fokker-Planck structure, i.e., a drag and diffusion term, such as the Lenard-Bernstein \citep{Kirkwood1947, Lenard1958, Dougherty1967} collision operator,
\begin{align}
\left( \pfrac{f_s}{t} \right )_{\textrm{collisions}} = \nu \gvs \cdot ( (\mvec{v} - \mvec{u}_s) f_s + v_{th_s}^2 \nabla_{\mvec{v}} f_s ) \label{eq:LBOperator},
\end{align}
where $\mvec{u}_s$ and $v_{th_s}^2$ are related to the first and second moments of the distribution function respectively, are often preferred. Note that, while the Landau-Fokker-Planck operator naturally generalizes to a multi-species plasma, as well as the relativistic Vlasov equation, reduced collision operators such as the BGK and Lenard-Bernstein collision operators may require significant modification to include the effects of cross-species collisions and relativistic effects.
} limits. 
\end{par}
\begin{par}
Since the motion of charged particles creates currents and electromagnetic fields, the electric and magnetic fields in the Vlasov equation evolve self-consistently by the coupling of the Vlasov equation to Maxwell's equations,
\begin{align}
  \frac{\partial \mvec{B}}{\partial t} + \nabla\times\mvec{E} &= 0, \label{eq:dbdt} \\
  \epsilon_0\mu_0\frac{\partial \mvec{E}}{\partial t} - \nabla\times\mvec{B} &= -\mu_0\mvec{J}, \label{eq:dedt} \\
  \nabla\cdot\mvec{E} &= \frac{\varrho_c}{\epsilon_0}, \label{eq:divE} \\
  \nabla\cdot\mvec{B} &= 0, \label{eq:divB}
\end{align}
where $\varrho_c$ and $\mvec{J}$ are the charge and current densities of the plasma. Charge and current densities are determined by moments of the distribution function
\begin{align}
  \varrho_c = \sum_s  q_s \langle 1 \rangle_s,
\end{align}
and
\begin{align}
  \mvec{J} = \sum_s q_s \langle \mvec{v} \rangle_s,
\end{align}
where the moment operator for any function $\varphi(\mvec{v})$ is defined as
\begin{align}
  \langle \varphi(\mvec{v}) \rangle_s
  \equiv
  \int_{-\infty}^{\infty} \varphi(\mvec{v}) f_s(t,\mvec{x},\mvec{v}) \dtv.
  \label{eq:momentDefinition}
\end{align}
If there are external fields, $\mvec{B}_0$ and $\mvec{E}_0$, e.g., created by coils or electrodes, we can replace $\mvec{B} \rightarrow \mvec{B}+\mvec{B}_0$ and $\mvec{E} \rightarrow \mvec{E}+\mvec{E}_0$ in the Lorentz force term. External fields do not appear in Maxwell's equations.
\end{par}
\subsection{Conservation Properties of the Vlasov-Maxwell system}
\begin{par}
As we mentioned previously, we are interested in deriving a discretization of the non-relativistic, collisionless, Vlasov-Maxwell system. Henceforth, when referring to the Vlasov equation, or the Vlasov-Maxwell system of equations, we emphasize that we are referring to the form of the Vlasov equation given in in \eqr{\ref{eqn:vm}}. Before proceeding to the derivation of the discretization, there are several key conservation properties of the continuous Vlasov-Maxwell system which we review here for the purposes of exploring these same conservation relations in the discrete system. To prove various properties of the Vlasov-Maxwell system we need to assume that $f(t,\mvec{x},\mvec{v}\rightarrow\pm\infty)\rightarrow 0$ faster than $\mvec{v}^n$ for finite $n$. Also, we will assume that either the configuration space is periodic, or that the distribution function vanishes on the boundaries.
\begin{proposition}
The Vlasov-Maxwell system conserves particles.
\end{proposition}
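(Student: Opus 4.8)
The plan is to show that the total particle number, obtained by integrating the distribution function over the entire phase space, is a constant of the motion. For a single species I would define
\begin{align}
N_s(t) = \int \int f_s(t,\mvec{x},\mvec{v}) \dtx \dtv = \int \iprod{1}{s} \dtx,
\end{align}
so that the claim reduces to $dN_s/dt = 0$ for each $s$, with the total $N = \sum_s N_s$ then following immediately. The natural first step is to integrate the Vlasov equation \eqr{\ref{eqn:vm}} over all of configuration and velocity space. Because the domain of integration is fixed in time, the time derivative commutes with the integral and the first term produces exactly $dN_s/dt$. It then remains to show that the two flux terms integrate to zero.

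For the configuration-space flux $\int \int \gcs \cdot (\mvec{v} f_s) \dtx \dtv$, I would apply the divergence theorem in $\mvec{x}$ (noting that $\mvec{v}$ is an independent coordinate and so passes through $\gcs$), converting the volume integral into a surface integral over the configuration-space boundary. This surface term vanishes under either of the two hypotheses stated above: if the configuration space is periodic the contributions from opposite faces cancel, and if instead $f_s$ vanishes on the boundary the integrand is identically zero there. For the velocity-space flux $\int \int \gvs \cdot (\mvec{F}_s f_s) \dtx \dtv$, I would similarly apply the divergence theorem in $\mvec{v}$, producing a surface integral over a sphere of radius $|\mvec{v}| \to \infty$ with integrand proportional to $\mvec{F}_s f_s$.

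The one step that genuinely requires the decay hypothesis, and which I expect to be the crux, is this velocity-space surface term. The Lorentz force $\mvec{F}_s = (q_s/m_s)(\mvec{E} + \mvec{v}\times\mvec{B})$ grows linearly in $|\mvec{v}|$ through the magnetic term, so the integrand does not vanish merely because $f_s \to 0$; one must invoke the stated assumption that $f_s(t,\mvec{x},\mvec{v}\to\pm\infty)\to 0$ faster than any power $\mvec{v}^n$, which dominates the linear growth of $\mvec{F}_s$ and forces $\mvec{F}_s f_s \to 0$ on the sphere at infinity. With both flux terms eliminated, the integrated Vlasov equation collapses to $dN_s/dt = 0$, and summing over all species $s$ establishes conservation of the total particle number. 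As an aside, one could also note that $\gvs \cdot \mvec{F}_s = 0$, since $\mvec{E}$ is independent of $\mvec{v}$ and $\mvec{v}\times\mvec{B}$ is divergence-free in velocity space; this fact is not needed here but will be central to the energy and momentum conservation arguments that follow.
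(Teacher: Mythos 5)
Your proposal is correct and follows essentially the same route as the paper, which simply integrates \eqr{\ref{eqn:vm}} over all of phase space and notes that the equation is already in conservation-law form so the flux terms reduce to vanishing boundary contributions. You have merely made explicit the details the paper leaves implicit, in particular the observation that the velocity-space surface term requires the super-polynomial decay of $f_s$ to dominate the linear growth of the $\mvec{v}\times\mvec{B}$ part of the Lorentz force.
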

\begin{proof}
Upon integration of \eqr{\ref{eqn:vm}} over all of phase-space and summation over all species, we obtain
  \begin{align}
    \frac{d}{dt}\int_\Omega \sum_s \langle 1 \rangle_s \dtx = 0.
  \end{align}
This relation is a straightforward consequence of the fact that the Vlasov-Maxwell system is written as a conservation law in phase-space. We further note that this relation holds independently for each species because in the absence of recombination and ionization, there is no mechanism for conversion of one particle species to another.
\end{proof}
\begin{proposition}
The collisionless Vlasov-Maxwell system conserves the $L_2$ norm of the distribution function, i.e.,
  \begin{align}
    \frac{d}{dt}\int_K \frac{1}{2} f_s^2\thinspace d\mvec{z} = 0,
  \end{align}
  where the integration is taken over the complete phase-space, $K$.
\end{proposition}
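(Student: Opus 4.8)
The plan is to exploit the incompressibility of the phase-space flow, which lets the Vlasov equation be recast in advective form and lets the evolution of $\tfrac{1}{2}f_s^2$ be written as a pure phase-space divergence. First I would observe that the configuration-space advection velocity $\mvec{v}$ is divergence-free with respect to $\gcs$, since $\mvec{v}$ is an independent phase-space coordinate and does not depend on $\mvec{x}$. The less trivial---and central---fact is that the velocity-space flow $\mvec{F}_s$ is likewise divergence-free, $\gvs\cdot\mvec{F}_s = 0$. I would establish this by noting that $\mvec{E}$ is independent of $\mvec{v}$, so only the magnetic term contributes, and then applying the identity $\gvs\cdot(\mvec{v}\times\mvec{B}) = \mvec{B}\cdot(\gvs\times\mvec{v}) - \mvec{v}\cdot(\gvs\times\mvec{B})$; both curls vanish, since $\gvs\times\mvec{v} = 0$ and $\mvec{B}$ does not depend on $\mvec{v}$.

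With both divergences vanishing, I would rewrite the conservative form \eqr{\ref{eqn:vm}} in the equivalent advective form $\partial_t f_s + \mvec{v}\cdot\gcs f_s + \mvec{F}_s\cdot\gvs f_s = 0$, where the expansion of each flux using the product rule discards the term proportional to $\gcs\cdot\mvec{v}$ or $\gvs\cdot\mvec{F}_s$. Multiplying through by $f_s$ and applying the chain rule converts each term into the corresponding operator acting on $\tfrac{1}{2}f_s^2$, giving $\partial_t(\tfrac{1}{2}f_s^2) + \mvec{v}\cdot\gcs(\tfrac{1}{2}f_s^2) + \mvec{F}_s\cdot\gvs(\tfrac{1}{2}f_s^2) = 0$. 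Because $\gcs\cdot\mvec{v} = 0$ and $\gvs\cdot\mvec{F}_s = 0$, the two advective terms can be folded back into divergence form, yielding $\partial_t(\tfrac{1}{2}f_s^2) + \gcs\cdot(\mvec{v}\,\tfrac{1}{2}f_s^2) + \gvs\cdot(\mvec{F}_s\,\tfrac{1}{2}f_s^2) = 0$.

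Finally I would integrate this relation over the complete phase space $K$ and apply the divergence theorem to each flux term. The configuration-space surface integral vanishes by the assumed periodicity in $\mvec{x}$ or the vanishing of $f_s$ on the spatial boundary, and the velocity-space surface integral vanishes because $f_s\to 0$ as $\mvec{v}\to\pm\infty$ faster than any polynomial, which dominates the linear $|\mvec{v}|$ growth of $\mvec{F}_s$. What survives is $\frac{d}{dt}\int_K \tfrac{1}{2}f_s^2\,d\mvec{z} = 0$, as claimed.

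The main obstacle is the identity $\gvs\cdot\mvec{F}_s = 0$: everything hinges on the magnetic part of the Lorentz force being velocity-space solenoidal, since this is precisely what permits passing freely between the conservative and advective forms, and hence what makes $\tfrac{1}{2}f_s^2$ a conserved density rather than merely $f_s$ itself. The boundary arguments, by contrast, are routine given the stated decay and periodicity assumptions.
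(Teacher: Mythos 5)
Your proposal is correct and follows essentially the same route as the paper: both rest on the incompressibility of the phase-space flow $(\mvec{v},\mvec{F}_s)$ — with the magnetic part of the Lorentz force being velocity-space solenoidal as the key fact — to convert the evolution of $\tfrac{1}{2}f_s^2$ into a pure phase-space divergence that integrates to zero under the stated decay and periodicity assumptions. The only cosmetic difference is that you pass through the advective form explicitly while the paper integrates by parts first and then invokes incompressibility; these are equivalent manipulations.
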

\begin{proof}
  The Vlasov-Maxwell system can be written as a non-linear advection equation in phase-space. To do this, we introduce the phase-space velocity vector $\gvec{\alpha}_s\equiv(\mvec{v},\mvec{F}_s)$ and the phase-space gradient operator $\gps\equiv(\gcs,\gvs)$. The flow in phase-space is incompressible, i.e, $\gps\cdot\gvec{\alpha}_s = 0$, because $\mvec{v}$ is the velocity coordinate and thus has no configuration space dependence, and likewise, the Lorentz force $\mvec{F}_s=q_s/m_s (\mvec{E}+\mvec{v}\times\mvec{B})$ has no velocity space divergence. The latter point is slightly subtle because while the electric field $\mvec{E}$ has no velocity space dependence, the $\mvec{v} \times \mvec{B}$ force has no velocity space divergence by properties of the cross product. \eqr{\ref{eqn:vm}} can then be written as
\begin{align}
  \pfrac{f_s}{t} + \gps\cdot(\gvec{\alpha_s} f_s) = 0, \label{eqn:incompressVM}
\end{align}
where $f_s(t,\mvec{z})$ is the distribution function. If we multiply \eqr{\ref{eqn:incompressVM}} by $f_s$ and integrate over all of phase-space we obtain
  \begin{align}
    \frac{d}{dt}\int_K \frac{1}{2} f_s^2\thinspace d\mvec{z} - \int_K \gps f_s \cdot \gvec{\alpha_s} f_s \thinspace d\mvec{z} = 0.
  \end{align}
We have integrated by parts and used boundary conditions to eliminate the surface term. Now, using incompressibility of phase-space, we can write $\gps f_s \cdot \gvec{\alpha_s} f_s = \gps\cdot(\gvec{\alpha_s} f_s^2)/2$. Using this relation in the above expression and converting the volume integral into a surface integral, combined with the use of boundary conditions, gives the desired conservation law.
\end{proof}
\begin{proposition}
The collisionless Vlasov-Maxwell system conserves the entropy $S = -f \ln(f)$ of the system\footnote{The inclusion of the minus sign in the definition of the entropy is in concordance with the traditional physics definition of the entropy. It is common in the theory of hyperbolic conservation laws to drop the minus sign. This definition has the effect that the physicist's entropy is a non-decreasing quantity, while the mathematician's entropy is a non-increasing quantity.},
\begin{align}
    \frac{d}{dt}\int_K -f_s \ln(f_s) \thinspace d\mvec{z} = 0.
\end{align}
\end{proposition}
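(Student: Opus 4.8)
The plan is to mirror the argument just used for the $L_2$ norm, exploiting once more the incompressibility of the phase-space flow, $\gps\cdot\gvec{\alpha}_s = 0$, together with the advective form of the Vlasov equation recorded in \eqr{\ref{eqn:incompressVM}}. First I would multiply \eqr{\ref{eqn:incompressVM}} by the factor $g'(f_s)$, where $g(f_s) = -f_s\ln(f_s)$, so that $g'(f_s) = -(\ln(f_s)+1)$. The first term then collapses by the chain rule into the time derivative of the entropy density, $g'(f_s)\,\partial_t f_s = \partial_t g(f_s)$.

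For the flux term I would first expand the divergence as $\gps\cdot(\gvec{\alpha}_s f_s) = \gvec{\alpha}_s\cdot\gps f_s + f_s\,\gps\cdot\gvec{\alpha}_s$ and drop the second piece using incompressibility, leaving $\gvec{\alpha}_s\cdot\gps f_s$. Applying the chain rule a second time gives $g'(f_s)\,\gvec{\alpha}_s\cdot\gps f_s = \gvec{\alpha}_s\cdot\gps g(f_s)$, and invoking incompressibility once more lets me fold this back into a pure divergence, $\gvec{\alpha}_s\cdot\gps g(f_s) = \gps\cdot(\gvec{\alpha}_s g(f_s))$. Integrating over the whole phase-space $K$ and converting the divergence into a surface integral via the divergence theorem then yields $\tfrac{d}{dt}\int_K g(f_s)\thinspace d\mvec{z} = -\oint \gvec{\alpha}_s g(f_s)\cdot d\mvec{S}$.

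The remaining step is to show this surface term vanishes, which is where the only genuine subtlety lies. Unlike the $L_2$ case, the integrand carries a logarithm and $g'(f_s)$ is singular as $f_s\to 0$, so I must check that the entropy density $g(f_s) = -f_s\ln(f_s)$ is well behaved. Since $f_s\ln(f_s)\to 0$ as $f_s\to 0$, the density extends continuously with $g(0)=0$, and the assumed super-polynomial decay of $f_s$ as $|\mvec{v}|\to\infty$ dominates the at-most-linear growth of $\gvec{\alpha}_s=(\mvec{v},\mvec{F}_s)$, so the flux $\gvec{\alpha}_s g(f_s)$ still decays to zero at the velocity-space boundaries; the configuration-space surface term vanishes by periodicity or by the assumption that $f_s$ vanishes there. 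Hence the surface integral is zero and the entropy is conserved. I would finally remark that nothing in this argument used the specific form of $g$ beyond differentiability and the correct boundary behavior, so the identical computation shows that any Casimir invariant $\int_K g(f_s)\thinspace d\mvec{z}$ is conserved, with particle number ($g=f_s$), the $L_2$ norm ($g=f_s^2/2$), and the entropy ($g=-f_s\ln f_s$) all appearing as special cases.
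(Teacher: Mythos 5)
Your proof is correct, and it reaches the same destination by a slightly different and arguably cleaner route. The paper multiplies \eqr{\ref{eqn:incompressVM}} by $-\ln(f_s)$ alone, uses the product rule on the time term (which generates an extra $\pfrac{f_s}{t}$), integrates the flux term by parts, and then closes the argument by observing that the leftover terms reassemble into the Vlasov equation itself, which vanishes. You instead multiply by the full derivative $g'(f_s)=-(\ln f_s+1)$ of the entropy density $g(f_s)=-f_s\ln f_s$ and use incompressibility twice to obtain the local conservation law $\partial_t g(f_s)+\gps\cdot\bigl(\gvec{\alpha}_s\, g(f_s)\bigr)=0$ before integrating; the divergence theorem and the decay/periodicity assumptions then finish it. Your version buys two things the paper's does not make explicit: it dispenses with the somewhat ad hoc ``recognize the residual as the Vlasov equation'' step, and it immediately yields the general statement that every Casimir $\int_K g(f_s)\thinspace d\mvec{z}$ is conserved, with particle number, the $L_2$ norm, and the entropy as special cases. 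You are also more careful than the paper about the only delicate points --- the singularity of $g'$ as $f_s\to 0$ (harmless, since only $g(f_s)$, which extends continuously by $g(0)=0$, appears in the final flux) and the decay of $\gvec{\alpha}_s\, g(f_s)$ at the velocity-space boundary, which the paper's stated super-polynomial decay hypothesis on $f_s$ indeed covers. One small caveat worth stating explicitly in your write-up: the entropy argument, unlike the $L_2$ one, additionally presumes $f_s>0$ wherever $\ln(f_s)$ is invoked, an assumption the continuous system preserves along characteristics but which the paper only confronts later in the discrete setting (Corollary \ref{cor:discrete-entropy}).
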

\begin{proof}
We multiply \eqr{\ref{eqn:incompressVM}} by $-\ln(f_s)$, integrate over all of phase space, and after a bit of algebra, we obtain,
  \begin{align}
    \frac{d}{dt}\int_K -f \ln(f) \thinspace d\mvec{z} + \int_K \pfrac{f_s}{t} + \gps \ln(f_s) \cdot \gvec{\alpha_s} f_s \thinspace d\mvec{z} = 0.
  \end{align}
In the first two terms, we have used the fact that $-\ln(f) \pfrac{f}{t} = \pfrac{f}{t} - \pfrac{f \ln(f)}{t}$ by the product rule, and in the last term, we have used integration by parts and eliminated the surface term. Now, since $\gps \ln(f_s) = \frac{\gps f_s}{f_s}$, we can see that the latter two terms just become the Vlasov equation written in the form of \eqr{\ref{eqn:incompressVM}}. Since in the collisionless limit the Vlasov equation equals 0, the latter two terms vanish and we are left with our desired conservation relation.
\end{proof}
\begin{proposition}
The Vlasov-Maxwell system conserves total (particles plus field) momentum.
\end{proposition}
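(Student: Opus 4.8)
The plan is to decompose the total momentum into its particle and electromagnetic contributions,
\begin{align}
\mvec{P} = \sum_s m_s \int_\Omega \langle \mvec{v} \rangle_s \dtx + \epsilon_0 \int_\Omega \mvec{E} \times \mvec{B} \dtx,
\end{align}
to differentiate each piece in time, substitute the relevant evolution equations (the Vlasov equation for the particle part, Faraday's law and the Amp\`ere--Maxwell law for the field part), and then show that the resulting source terms cancel pairwise, leaving only a total configuration-space divergence that integrates to zero by the boundary conditions.

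First I would treat the particle momentum. Differentiating under the integral and inserting \eqr{\ref{eqn:vm}} for $\pfrac{f_s}{t}$, the configuration-space advection term $\gcs \cdot (\mvec{v} f_s)$ integrates to a surface term in $\mvec{x}$ that vanishes by periodicity or by the vanishing of $f_s$ on the boundary. For the force term I would integrate by parts in velocity space; since $\gvs \mvec{v} = \mvec{I}$ and $f_s$ decays faster than any power of $\mvec{v}$, this collapses to $\int \mvec{F}_s f_s \dtv$. Summing over species and using $\mvec{F}_s = q_s/m_s(\mvec{E} + \mvec{v} \times \mvec{B})$ together with the definitions of $\varrho_c$ and $\mvec{J}$ yields the integrated Lorentz force density,
\begin{align}
\frac{d}{dt} \sum_s m_s \int_\Omega \langle \mvec{v} \rangle_s \dtx = \int_\Omega \left( \varrho_c \mvec{E} + \mvec{J} \times \mvec{B} \right) \dtx.
\end{align}

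Next I would differentiate the field momentum and substitute $\pfrac{\mvec{B}}{t} = -\gcs \times \mvec{E}$ and $\epsilon_0 \pfrac{\mvec{E}}{t} = \mu_0^{-1} \gcs \times \mvec{B} - \mvec{J}$ from \eqr{\ref{eq:dbdt}} and \eqr{\ref{eq:dedt}}. The explicit $-\mvec{J} \times \mvec{B}$ piece that appears immediately cancels the $\mvec{J} \times \mvec{B}$ term coming from the particle side. What remains are the purely electromagnetic terms $\mu_0^{-1}(\gcs \times \mvec{B}) \times \mvec{B} - \epsilon_0 \mvec{E} \times (\gcs \times \mvec{E})$, which I would recast using the identity $(\gcs \times \mvec{A}) \times \mvec{A} = (\mvec{A} \cdot \gcs)\mvec{A} - \tfrac{1}{2}\gcs |\mvec{A}|^2$.

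The key step, and the one I expect to be the main obstacle, is recognizing these remaining terms as the divergence of the Maxwell stress tensor $\overleftrightarrow{\mvec{T}}$. The subtlety is that the identity above produces only the convective and pressure-like pieces of $\gcs \cdot \overleftrightarrow{\mvec{T}}$; completing it to the full symmetric tensor requires the extra divergence terms $\epsilon_0 (\gcs \cdot \mvec{E})\mvec{E}$ and $\mu_0^{-1}(\gcs \cdot \mvec{B})\mvec{B}$. Invoking Gauss's law \eqr{\ref{eq:divE}} turns the first into $\varrho_c \mvec{E}$, while $\gcs \cdot \mvec{B} = 0$ from \eqr{\ref{eq:divB}} makes the second vanish, so the pure-field terms equal $\gcs \cdot \overleftrightarrow{\mvec{T}} - \varrho_c \mvec{E}$. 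Hence the field contribution is $\int_\Omega \gcs \cdot \overleftrightarrow{\mvec{T}} \dtx - \int_\Omega (\varrho_c \mvec{E} + \mvec{J} \times \mvec{B}) \dtx$; the divergence integrates to a surface term that vanishes by the boundary conditions, and the remaining integral is exactly the negative of the particle momentum rate. Adding the two contributions gives $d\mvec{P}/dt = 0$. The delicate point is thus purely one of careful vector-calculus bookkeeping---invoking both divergence constraints at precisely the right moment so that the charge and current sources cancel and the field stress collapses to a pure divergence.
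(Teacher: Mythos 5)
Your proposal is correct and follows essentially the same route as the paper: multiply the Vlasov equation by $m_s\mvec{v}$ to extract the integrated Lorentz force density, manipulate Faraday's and Amp\`ere's laws with the identities $(\nabla\mvec{A})\cdot\mvec{A}=\nabla|\mvec{A}|^2/2$ and $(\mvec{A}\cdot\nabla)\mvec{A}=\nabla\cdot(\mvec{A}\mvec{A})-\mvec{A}\nabla\cdot\mvec{A}$ together with both divergence constraints to write the pure-field terms as a stress-tensor divergence plus $\varrho_c\mvec{E}$, and let the sources cancel while the divergences integrate away under the boundary conditions. The only difference is presentational (you differentiate the total momentum directly rather than substituting the field identity back into the particle equation), which changes nothing in the argument.
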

\begin{proof}
We multiply \eqr{\ref{eqn:vm}} by $m_s\mvec{v}$, integrate over all of phase-space, and sum over all species to obtain,
  \begin{align}
    \pfrac{}{t} \int_\Omega \sum_s \langle m_s\mvec{v} \rangle_s \dtx + \underbrace{\int_\Omega \gcs\cdot \sum_s \langle m_s \mvec{v}\mvec{v} \rangle_s \dtx}_{=0\ \textrm{on integration by parts and use of BCs}} 
    - \int_\Omega (\varrho_c\mvec{E} + \mvec{J}\times\mvec{B}) \dtx = 0.  \label{eq:mom-proof-1}
  \end{align}
In the last term, we have used integration by parts and $\gvs\mvec{v}=\mvec{I}$. To make further progress, we consider Maxwell's equations. Taking the cross-product of \eqr{\ref{eq:dbdt}} with $\epsilon_0\mvec{E}$, and the cross-product of \eqr{\ref{eq:dedt}} with $\mvec{B}/\mu_0$ and subtracting the resulting equations we obtain
  \begin{align}
    \epsilon_0\pfraca{t} (\mvec{E}\times\mvec{B}) + \epsilon_0 \underbrace{\mvec{E}\times(\nabla\times\mvec{E})}_{(\nabla\mvec{E})\cdot\mvec{E} - (\mvec{E}\cdot\nabla)\mvec{E}}
    + \frac{1}{\mu_0} \underbrace{\mvec{B}\times(\nabla\times\mvec{B})}_{(\nabla\mvec{B})\cdot\mvec{B} - (\mvec{B}\cdot\nabla)\mvec{B}}
    =
    -\mvec{J}\times\mvec{B}.
  \end{align}
Now, for any vector field $\mvec{A}$ we have $(\nabla\mvec{A})\cdot\mvec{A}=\nabla |\mvec{A}|^2/2$ and $(\mvec{A}\cdot\nabla)\mvec{A} = \nabla\cdot(\mvec{A}\mvec{A})-\mvec{A}\nabla\cdot\mvec{A}$. Using these vector identities and the divergence Eqns.\thinspace(\ref{eq:divE}) and (\ref{eq:divB}) gives
  \begin{align}
    \epsilon_0\pfraca{t} (\mvec{E}\times\mvec{B}) + \nabla\left( \frac{\epsilon_0}{2}|\mvec{E}|^2 + \frac{1}{2\mu_0}|\mvec{B}|^2 \right)
    - \nabla\cdot\left( \epsilon_0\mvec{E}\mvec{E} + \frac{1}{\mu_0}\mvec{B}\mvec{B} \right)
    + \varrho_c\mvec{E} = -\mvec{J}\times\mvec{B}. \label{eq:mom-proof-2}
  \end{align}
Finally, inserting \eqr{\ref{eq:mom-proof-2}} into \eqr{\ref{eq:mom-proof-1}}, integrating by parts, and using configuration space boundary conditions gives
  \begin{align}
    \frac{d}{dt} \int_\Omega \left( \sum_s \langle m_s\mvec{v} \rangle_s + \epsilon_0 \mvec{E}\times\mvec{B} \right) \dtx = 0.
  \end{align}
The first term is the particle momentum, and the second term is the field momentum.
\end{proof}

\begin{proposition}
The Vlasov-Maxwell system conserves total (particles plus field) energy. \label{prop:continuous-energy-proof}
\end{proposition}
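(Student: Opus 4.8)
The plan is to mirror the structure of the preceding momentum proof: I would split the total energy into a particle (kinetic) piece and an electromagnetic field piece, derive a separate evolution equation for each, and then show that the two balances are coupled only through a single exchange term $\int_\Omega \mvec{E}\cdot\mvec{J}\dtx$ that cancels when they are added together.

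For the particle energy, I would multiply \eqr{\ref{eqn:vm}} by the kinetic energy $\frac{1}{2}m_s|\mvec{v}|^2$, integrate over all of phase-space, and sum over species. The configuration-space advection term becomes a divergence $\gcs\cdot\sum_s\langle\frac{1}{2}m_s|\mvec{v}|^2\mvec{v}\rangle_s$ which vanishes on integration by parts and use of the configuration-space boundary conditions, exactly as in the momentum proof. The velocity-space term is handled by integration by parts in $\mvec{v}$, using $\gvs(\frac{1}{2}m_s|\mvec{v}|^2) = m_s\mvec{v}$, which produces $-\int_\Omega\sum_s\langle m_s\mvec{v}\cdot\mvec{F}_s\rangle_s\dtx$. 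Here the key simplification is that $m_s\mvec{v}\cdot\mvec{F}_s = q_s\mvec{v}\cdot(\mvec{E}+\mvec{v}\times\mvec{B}) = q_s\mvec{v}\cdot\mvec{E}$, since $\mvec{v}\cdot(\mvec{v}\times\mvec{B})=0$ and the magnetic force does no work. Recalling the definition $\mvec{J} = \sum_s q_s\langle\mvec{v}\rangle_s$, the particle-energy balance is $\frac{d}{dt}\int_\Omega\sum_s\langle\frac{1}{2}m_s|\mvec{v}|^2\rangle_s\dtx = \int_\Omega\mvec{E}\cdot\mvec{J}\dtx$.

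For the field energy I would derive Poynting's theorem from Maxwell's equations: dot \eqr{\ref{eq:dedt}} with $\mvec{E}/\mu_0$ and \eqr{\ref{eq:dbdt}} with $\mvec{B}/\mu_0$, then add the results. Using the identity $\gcs\cdot(\mvec{E}\times\mvec{B}) = \mvec{B}\cdot(\gcs\times\mvec{E}) - \mvec{E}\cdot(\gcs\times\mvec{B})$ to combine the two curl terms into the divergence of the Poynting flux $\mvec{E}\times\mvec{B}/\mu_0$ yields $\pfraca{t}(\frac{\epsilon_0}{2}|\mvec{E}|^2 + \frac{1}{2\mu_0}|\mvec{B}|^2) + \frac{1}{\mu_0}\gcs\cdot(\mvec{E}\times\mvec{B}) = -\mvec{E}\cdot\mvec{J}$. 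Integrating over configuration space and discarding the flux divergence by the boundary conditions gives the field-energy balance $\frac{d}{dt}\int_\Omega(\frac{\epsilon_0}{2}|\mvec{E}|^2 + \frac{1}{2\mu_0}|\mvec{B}|^2)\dtx = -\int_\Omega\mvec{E}\cdot\mvec{J}\dtx$. Adding this to the particle-energy balance, the $\pm\int_\Omega\mvec{E}\cdot\mvec{J}\dtx$ terms cancel, leaving the desired statement that $\int_\Omega[\sum_s\langle\frac{1}{2}m_s|\mvec{v}|^2\rangle_s + \frac{\epsilon_0}{2}|\mvec{E}|^2 + \frac{1}{2\mu_0}|\mvec{B}|^2]\dtx$ is constant in time.

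The calculation is largely mechanical and parallels the momentum proof, so I do not anticipate a genuine obstacle; the only points that require care are the bookkeeping of sign conventions in Poynting's theorem, so that its source term matches the particle exchange term with the opposite sign, and the observation that the $\mvec{v}\times\mvec{B}$ contribution drops out of the work integral, which is the energetic analog of the subtle incompressibility remark used in the $L_2$ proof. Once those are handled correctly, the cancellation of $\int_\Omega\mvec{E}\cdot\mvec{J}\dtx$ is immediate and the result follows.
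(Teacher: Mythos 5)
Your proposal is correct and follows essentially the same route as the paper's proof: multiplying the Vlasov equation by $\tfrac{1}{2}m_s|\mvec{v}|^2$, using $\gvs\left(\tfrac{1}{2}m_s|\mvec{v}|^2\right)\cdot\mvec{F}_s = q_s\mvec{v}\cdot\mvec{E}$ to reduce the force term to $\int_\Omega\mvec{E}\cdot\mvec{J}\dtx$, and cancelling that against the source term of Poynting's theorem obtained by dotting the two curl equations with $\mvec{E}/\mu_0$ and $\mvec{B}/\mu_0$. The sign conventions and boundary-condition arguments all match the paper's treatment, so nothing further is needed.
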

\begin{proof}
We multiply \eqr{\ref{eqn:vm}} by $m_s |\mvec{v}|^2/2$, integrate over all phase-space, and sum over all species to obtain,
\begin{align}
\pfrac{}{t} \int_\Omega \sum_s \langle \frac{1}{2} m_s |\mvec{v}|^2 \rangle_s \dtx
    +
    \underbrace{\int_\Omega \gcs\cdot \sum_s \langle \frac{1}{2}m_s |\mvec{v}|^2 \mvec{v} \rangle_s \dtx}_{= 0\ \textrm{on integration by parts and use of BCs}}
    -
    \int_\Omega \mvec{E}\cdot\mvec{J} \dtx
    =
    0, \label{eq:fenergy}
  \end{align}
where we have used integration by parts and $\gvs({m_s|\mvec{v}|^2/2})\cdot\mvec{F}_s = q_s \mvec{v} \cdot \mvec{E}$ to evaluate the third term on the left hand side. To make further progress, we again examine Maxwell's equations. Taking the dot product of \eqr{\ref{eq:dedt}} with $\mvec{E}/\mu_0$ and the dot product of \eqr{\ref{eq:dbdt}} with $\mvec{B}/\mu_0$ and adding the resulting equations gives us
  \begin{align}
    \pfraca{t}\left( \frac{\epsilon_0}{2}|\mvec{E}|^2 + \frac{1}{2\mu_0}|\mvec{B}|^2 \right)
    +
    \frac{1}{\mu_0}\underbrace{ \left( \mvec{B}\cdot(\nabla\times\mvec{E}) - \mvec{E}\cdot(\nabla\times\mvec{B}) \right) }_{= \nabla\cdot(\mvec{E}\times\mvec{B})}
    =
    -\mvec{E}\cdot\mvec{J}.
  \end{align}
Using this result in \eqr{\ref{eq:fenergy}}, integrating by parts, and using configuration space boundary conditions gives the total energy conservation law
\begin{align}
   \frac{d}{dt} \int_\Omega \left(\sum_s \langle \frac{1}{2} m_s |\mvec{v}|^2 \rangle_s + \frac{\epsilon_0}{2} |\mvec{E}|^2 + \frac{1}{2\mu_0} |\mvec{B}|^2 \right) \dtx = 0.
\end{align}
The first term in this expression is the total particle energy, and the second two terms make up the electromagnetic field energy.
\end{proof}
\end{par}
\section{The Discrete Vlasov-Maxwell system} \label{sec:DGVlasov}
\subsection{The semi-discrete Vlasov equation}
\begin{par}
To derive the semi-discrete Vlasov equation using a discontinuous Galerkin algorithm, we begin with the Vlasov equation in the form of \eqr{\ref{eqn:incompressVM}}. To discretize this equation, we introduce a phase-space mesh $\mathcal{T}$ with cells $K_j \in \mathcal{T}$, $j=1,\ldots,N$ and introduce the following piecewise polynomial approximation space for the distribution function $f(t,\mvec{z})$
\begin{align}
  \mathcal{V}_h^p = \{ v : v|_{K_j} \in \mvec{P}^p, \forall K_j \in
  \mathcal{T} \},
\end{align}
where $\mvec{P}^p$ is some space of polynomials. The problem can now be stated as finding $f_h\in \mathcal{V}_h^p$ such that, for all $K_j\in \mathcal{T}$,
\begin{align}
  \int_{K_j} w\pfrac{f_h}{t} \thinspace d\mvec{z} + 
  \oint_{\partial K_j}w^- \mvec{n}\cdot\hat{\mvec{F}}  \thinspace dS 
  - \int_{K_j} \gps w \cdot \gvec{\alpha}_h f_h \thinspace d\mvec{z} = 0 \label{eq:dis-weak-form}
\end{align}
for all test functions $w\in \mathcal{V}_h^p$. In this \emph{discrete weak-form}, $\mvec{n}\cdot\hat{\mvec{F}} = \mvec{n}\cdot\hat{\mvec{F}}( \gvec{\alpha}_h^- f_h^-, \gvec{\alpha}_h^+ f_h^+)$ is a numerical flux function, where $\mvec{n}$ is an outward unit vector on the surface of the cell $K_j$. Further, the subscript $h$ indicates the discrete solution and the notation $w^-$ ($w^+$) indicates that the function is evaluated just inside (outside) the location on the surface $\partial K_j$. The particular polynomial space $\mvec{P}^p$ we choose is the so-called Serendipity element space \citep{Arnold:2011eu}. With the choice of polynomial space, we can then say the discrete solution $f_h$ is,
\begin{align}
f_h(t, \mvec{z}) = \sum_i f_i(t) \lambda_i(\mvec{z}), \label{eq:distf-expansion}
\end{align}
where $\lambda_i(\mvec{z})$ are a set of polynomials chosen such that they lie in the aforementioned Serendipity element space. For the numerical flux, we use the local Lax-Friedrichs flux,
\begin{align}
  \mvec{n}\cdot\hat{\mvec{F}}(\gvec{\alpha}_h^-
  f_h^-, \gvec{\alpha}_h^+ f_h^+)
  =
  \frac{1}{2}\mvec{n}\cdot (\gvec{\alpha}_h^+f_h^+ + \gvec{\alpha}_h^-f_h^-)
  -
  \frac{c}{2}(f_h^+-f_h^-),
  \label{eq:num-flux}
\end{align}
where $c =\max(|\mvec{n}\cdot\gvec{\alpha}_h^+|,|\mvec{n}\cdot\gvec{\alpha}_h^-|)$. This particular numerical flux function is a common choice for scalar hyperbolic equations due to its simplicity and efficiency. An important note is that the Lax-Friedrichs flux can be reduced to a pure central flux function by setting $c = 0$, where the average of the cell interface values is used to evaluate the surface integrals. A similar reduction can be done with the appropriate choice of $c$ to create the standard upwind numerical flux, which uses the sign of $\gvec{\alpha}_h$ to determine the direction of the flow of the distribution function across a cell interface. The surface and the volume integrals in \eqr{\ref{eq:dis-weak-form}} are replaced by Gaussian quadrature of an appropriate order to ensure that the discrete integrals are performed exactly.
\end{par}
\subsection{The semi-discrete Maxwell system}
\begin{par}
For Maxwell's equations, we also discretize the system with a discontinuous Galerkin algorithm. We denote the \emph{restriction} of the phase-space mesh, $\mathcal{T}$, to configuration space by $\mathcal{T}_\Omega$. The cells in configuration space are denoted by $\Omega_j\in\mathcal{T}_\Omega$, for $i=0,\ldots,N_\Omega$, where $N_\Omega$ are the number of configuration space cells. We introduce the solution space
\begin{align}
  \mathcal{X}^p_{h} = \mathcal{V}_h^p \backslash \Omega,
\end{align}
i.e, the restriction of the set $\mathcal{V}^p_h$ to configuration space $\Omega$.
\end{par}
\begin{par}
While deriving the discrete weak-form, we need to evaluate volume integrals which include terms of the form $\varphi\nabla\times\mvec{E}$, for $\varphi\in\mathcal{X}^p_{h}$. For example, consider
\begin{align}
  \int_{\Omega_j} 
  \underbrace{
    \varphi\nabla\times\mvec{E}
  }_{
    \nabla\times(\varphi\mvec{E}) - \nabla\varphi\times\mvec{E}
  }
  \dtx.
\end{align}
Gauss' law can then be used to convert one volume integral into a surface integral
\begin{align}
  \int_{\Omega_j} \nabla\times(\varphi\mvec{E}) \dtx
  =
  \oint_{\partial\Omega_j} d\mvec{s}\times(\varphi\mvec{E}),
\end{align}
where $d\mvec{s}$ is the (vector) area-element that points in the direction of the outward normal to the configuration space cell  $\Omega_j$. Using these expressions, we can now write the discrete weak-form of Maxwell's equations as
\begin{align}
  \int_{\Omega_j}\varphi \pfrac{\mvec{B}_h}{t}\dtx
  +
  \oint_{\partial\Omega_j} d\mvec{s}\times(\varphi^-\hat{\mvec{E}}_h)
  -
  \int_{\Omega_j} \nabla\varphi\times\mvec{E}_h \dtx
  =
  0, \label{eq:dis-weak-B}
\end{align}
and
\begin{align}
  \epsilon_0\mu_0\int_{\Omega_j}\varphi \pfrac{\mvec{E}_h}{t}\dtx
  -
  \oint_{\partial\Omega_j} d\mvec{s}\times(\varphi^-\hat{\mvec{B}}_h)
  +
  \int_{\Omega_j} \nabla\varphi\times\mvec{B}_h \dtx
  =
  -\mu_0\int_{\Omega_j} \varphi \mvec{J}_h\dtx, \label{eq:dis-weak-E}
\end{align}
for all $\varphi\in\mathcal{X}^p_{h}$. Here $\hat{\mvec{E}}_h$ and $\hat{\mvec{B}}_h$ are, respectively, the values of the electric and magnetic field at the cell interfaces.
\end{par}
\begin{par}
We consider two methods of obtaining the cell interface fields needed in the discrete weak-form of Maxwell's equations: central fluxes and upwind fluxes. As we will see later, both numerical flux functions have advantages and disadvantages, particularly in terms of their conservation properties. For central fluxes we use averages of values just across the interface, i.e., $\hat{\mvec{E}}_h = \llbracket \mvec{E}\rrbracket$ and $\hat{\mvec{B}}_h = \llbracket\mvec{B}\rrbracket$, where $\llbracket\cdot\rrbracket$ represents the averaging operator,
\begin{align}
  \llbracket g \rrbracket \equiv (g^+ + g^-)/2,
\end{align}
for any function $g$. On the other hand, using \emph{upwind fluxes} requires solving a Riemann problem in a coordinate system local to that face. Consider a local coordinate system $(\mvec{s},\gvec{\tau}_1,\gvec{\tau}_2)$ on the configuration space cell face, i.e., on $\partial\Omega_j$. Here, $\mvec{s}$ is a unit vector normal to $\partial\Omega_j$, and $\gvec{\tau}_1$ and $\gvec{\tau}_2$ are tangent vectors such that $\gvec{\tau}_1\times\gvec{\tau}_2=\mvec{s}$. Let $(E_1,E_2,E_3)$ and
$(B_1,B_2,B_3)$ be electric and magnetic fields in this coordinate system. Then, assuming variations only along direction $\mvec{s}$, Maxwell's equations reduce to $\partial B_1/\partial t = 0$, $\partial E_1/\partial t = 0$, and the following uncoupled set of two equations for the tangential field components
\begin{align}
  \pfrac{B_2}{t} - \pfrac{E_3}{x_1} = 0;\quad \pfrac{E_3}{t} - c^2\pfrac{B_2}{x_1} = 0,
\end{align}
and
\begin{align}
  \pfrac{B_3}{t} + \pfrac{E_2}{x_1} = 0; \quad \pfrac{E_2}{t} + c^2\pfrac{B_3}{x_1} = 0.
\end{align}
Multiplying the first of each pair by $c$ and adding and subtracting from the second of that pair we obtain a set of four uncoupled advection equations
\begin{align}
  \pfraca{t}(E_3+cB_2) - c \pfraca{x_1}(E_3+cB_2) &= 0, \\
  \pfraca{t}(E_3-cB_2) + c \pfraca{x_1}(E_3-cB_2) &= 0,
\end{align}
and
\begin{align}
  \pfraca{t}(E_2+cB_3) + c \pfraca{x_1}(E_2+cB_3) &= 0, \\
  \pfraca{t}(E_2-cB_3) - c \pfraca{x_1}(E_2-cB_3) &= 0.
\end{align}
Hence, the solution to the Riemann problem with initial conditions
\begin{align}
  (E_2,E_3) = (E_2^-,E_3^-);\quad (B_2,B_3) = (B_2^-,B_3^-)
\end{align}
for $x_1<0$, and
\begin{align}
  (E_2,E_3) = (E_2^+,E_3^+); \quad (B_2,B_3) = (B_2^+,B_3^+)
\end{align}
for $x_1>0$ at $x_1=0$ is simply
\begin{align}
  \hat{E}_3 + c\hat{B}_2 &= E_3^+ + c B_2^+ \\
  \hat{E}_3 - c\hat{B}_2 &= E_3^- - c B_2^-
\end{align}
and
\begin{align}
  \hat{E}_2 + c\hat{B}_3 &= E_2^- + c B_3^- \\
  \hat{E}_2 - c\hat{B}_3 &= E_2^+ - c B_3^+.
\end{align}
Rearranging these expressions shows that the upwind fields in the local face coordinate system are
\begin{align}
  \hat{E}_2 = \llbracket E_2 \rrbracket - c\thinspace\{ B_3 \} \label{eq:r-e2} \\
  \hat{E}_3 = \llbracket E_3 \rrbracket + c\thinspace\{ B_2 \} \label{eq:r-e3}
\end{align}
and
\begin{align}
  \hat{B}_2 = \llbracket B_2 \rrbracket + \{E_3\}/c \label{eq:r-b2} \\
  \hat{B}_3 = \llbracket B_3 \rrbracket - \{E_2\}/c \label{eq:r-b3}
\end{align}
where $\{ \cdot \}$ is the jump operator
\begin{align}
  \{ g \} \equiv (g^+-g^-)/2
\end{align}
for any function $g$. The solutions to the Riemann problem given by Eqns.\thinspace(\ref{eq:r-e2})-(\ref{eq:r-b3}) are identical to those presented in previous studies of Maxwell's equations \citep{Barbas2015186}.
\end{par}
\begin{remark}
Before proceeding to the conservation properties of our semi-discrete system, we note that the discretization of Maxwell's equations given by Eqns.\thinspace(\ref{eq:dis-weak-B}) and (\ref{eq:dis-weak-E}) does not include the constraints given by Eqns.\thinspace(\ref{eq:divE}) and (\ref{eq:divB}), i.e., the divergence constraints in Maxwell's equations. Thus, our algorithm may violate these constraints over the course of the simulation. Where appropriate in section \ref{sec:Results}, we will discuss how the violation of the divergence constraints in Maxwell's equations manifests.  
\end{remark}
\subsection{Conservation properties of the semi-discrete system}
\begin{par}
We proceed as we did with the continuous system, first considering whether the discrete system conserves number density.
\begin{proposition}  \label{prop:discrete-particle-cons}
  The discrete scheme conserves total number of particles.
\end{proposition}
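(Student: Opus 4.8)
The plan is to mimic the continuous particle-conservation argument by exploiting the fact that the Serendipity space $\mvec{P}^p$ contains the constants, so that $w \equiv 1$ is an admissible test function on each cell. Substituting $w = 1$ into the discrete weak-form \eqr{\ref{eq:dis-weak-form}} immediately annihilates the volume term, since $\gps w = 0$, leaving on every cell $K_j$
\begin{align}
\frac{d}{dt}\int_{K_j} f_h \thinspace d\mvec{z} + \oint_{\partial K_j} \mvec{n}\cdot\hat{\mvec{F}} \thinspace dS = 0.
\end{align}
Summing over all $K_j \in \mathcal{T}$ collapses the first terms into $\frac{d}{dt}\int_K f_h \thinspace d\mvec{z}$, the total number of particles of the given species in phase-space, so the entire content of the proposition reduces to showing that the accumulated surface terms vanish.

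First I would split the sum of surface integrals into contributions from interior faces and from domain-boundary faces. The key step, and the place where the structure of the numerical flux enters, is the cancellation on interior faces: each interior face is shared by two neighboring cells whose outward normals are $\mvec{n}$ and $-\mvec{n}$, while the numerical flux $\mvec{n}\cdot\hat{\mvec{F}}(\gvec{\alpha}_h^- f_h^-, \gvec{\alpha}_h^+ f_h^+)$ is a single-valued (conservative) function of the two trace states on that face. I would verify directly for the local Lax--Friedrichs flux \eqr{\ref{eq:num-flux}} that $\mvec{n}\cdot\hat{\mvec{F}}$ is odd under the simultaneous swap $\mvec{n}\to-\mvec{n}$, $(\cdot)^-\leftrightarrow(\cdot)^+$: the central piece flips sign through $\mvec{n}$, the penalty piece $-\tfrac{c}{2}(f_h^+-f_h^-)$ flips sign through the exchange of the traces, and the coefficient $c=\max(|\mvec{n}\cdot\gvec{\alpha}_h^+|,|\mvec{n}\cdot\gvec{\alpha}_h^-|)$ is invariant under both operations. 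Hence the flux leaving one cell across a shared face is exactly the negative of the flux leaving its neighbor, and the two contributions cancel when the cells are summed.

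With the interior faces eliminated, only the domain-boundary faces remain, and these I would dispose of using the two boundary scenarios assumed in the continuous discussion: for a periodic configuration space the fluxes on identified opposite faces cancel in the same antisymmetric manner, while in velocity space (and for a compactly supported, non-periodic configuration space) the assumption that $f_h$ vanishes on $\partial\Omega$ and as $\mvec{v}\to\pm\infty$ forces the boundary traces, and hence the boundary fluxes, to vanish. What survives is $\frac{d}{dt}\int_K f_h \thinspace d\mvec{z} = 0$; since nothing in the argument was specific to the species, it holds for each $s$ independently, and summing over species gives conservation of the total number of particles.

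I expect the only real obstacle to be bookkeeping rather than mathematics: one must be careful that the $+/-$ labelling convention, the orientation of $\mvec{n}$, and the definition of $c$ are applied consistently to both cells abutting a face, so that the antisymmetry of $\mvec{n}\cdot\hat{\mvec{F}}$ is genuine and not an artifact of mismatched conventions. Once the numerical flux is confirmed to be single-valued and conservative in this sense, the telescoping cancellation and the conclusion follow immediately, and I would note that the argument is independent of whether $c=0$ (central flux) or $c>0$ (upwind/Lax--Friedrichs), since the dissipative term cancels in either case.
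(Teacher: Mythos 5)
Your proposal is correct and follows essentially the same route as the paper's proof: choose $w=1$ to annihilate the volume term, sum over cells, and use the single-valuedness (conservativity) of the numerical flux together with the sign flip of $\mvec{n}$ to cancel interior surface contributions pairwise, with boundary faces handled by periodicity or decay of $f_h$. The paper states this in one sentence; your version merely makes the antisymmetry check of the local Lax--Friedrichs flux and the interior/boundary face split explicit.
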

\begin{proof}
Choosing $w=1$ in the discrete weak-form, \eqr{\ref{eq:dis-weak-form}}, and summing over all phase-space cells $K_j$, gives the desired conservation law since the volume integral vanishes for $w = 1$ and the surface integrals are symmetric about cell interfaces.
\end{proof}
Before we move on to the $L^2$ norm, we consider the following Lemma,
\begin{lemma}
  \label{lem:discrete-phase-space-incompress}
  Phase space incompressibility holds for the discrete system, i.e.,
  \begin{align}
  \gps \cdot \gvec{\alpha}_h = 0. \label{eq:discrete-ps-incompress}
  \end{align}
\end{lemma}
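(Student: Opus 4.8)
The plan is to mirror the argument already used to establish phase-space incompressibility in the continuous system (Proposition~2), since the discrete phase-space velocity $\gvec{\alpha}_h = (\mvec{v}, \mvec{F}_h)$ retains exactly the same functional structure as its continuous counterpart, with $\mvec{F}_h = (q_s/m_s)(\mvec{E}_h + \mvec{v}\times\mvec{B}_h)$. First I would split the phase-space divergence into its configuration- and velocity-space contributions,
\begin{align}
\gps \cdot \gvec{\alpha}_h = \gcs \cdot \mvec{v} + \gvs \cdot \mvec{F}_h,
\end{align}
and then argue that each of the two pieces vanishes, just as in the continuous case.

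The configuration-space term vanishes identically because $\mvec{v}$ is the independent velocity coordinate and carries no configuration-space dependence. For the velocity-space term, the single place where the discrete structure must genuinely be invoked is the observation that the discrete fields $\mvec{E}_h$ and $\mvec{B}_h$ belong to $\mathcal{X}^p_h = \mathcal{V}_h^p \backslash \Omega$, the restriction of the approximation space to configuration space, so they are functions of $\mvec{x}$ alone and inherit no velocity-space dependence from the discretization. Granting this, I would decompose $\gvs \cdot \mvec{F}_h = (q_s/m_s)\bigl(\gvs \cdot \mvec{E}_h + \gvs \cdot (\mvec{v}\times\mvec{B}_h)\bigr)$: the electric contribution vanishes because $\mvec{E}_h$ is velocity-independent, and the magnetic contribution vanishes by the same cross-product identity used in the continuous proof, since writing $\gvs \cdot (\mvec{v}\times\mvec{B}_h)$ out component by component differentiates each component of $\mvec{v}\times\mvec{B}_h$ with respect to the one velocity coordinate absent from it, and with $\mvec{B}_h$ independent of $\mvec{v}$ every such derivative is zero. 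Summing the two vanishing pieces yields $\gps \cdot \gvec{\alpha}_h = 0$.

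I do not anticipate a genuine obstacle here; the result is essentially structural, and the bulk of the continuous argument transfers verbatim. The only subtlety worth flagging is the need to confirm that neither the discrete velocity coordinate nor the discrete electromagnetic fields are ever represented in a way that smuggles in velocity-space dependence. This is exactly what membership in $\mathcal{X}^p_h$ guarantees for $\mvec{E}_h$ and $\mvec{B}_h$, and it is the hypothesis that lets the continuous incompressibility argument carry over unchanged to the discrete setting.
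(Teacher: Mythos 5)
Your within-cell argument is correct and matches the first (and easiest) part of the paper's proof: in the interior of each phase-space cell, $\gcs\cdot\mvec{v}=0$ because $\mvec{v}$ is an independent coordinate, and $\gvs\cdot\left(q_s/m_s(\mvec{E}_h+\mvec{v}\times\mvec{B}_h)\right)=0$ because $\mvec{E}_h,\mvec{B}_h\in\mathcal{X}^p_h$ carry no velocity dependence and the cross product places each velocity coordinate only in the components it does not differentiate. However, there is a genuine gap: you have not addressed the part of the claim that is actually specific to the discrete system. The DG fields $\mvec{E}_h$ and $\mvec{B}_h$ are only piecewise polynomial and are \emph{discontinuous across configuration-space cell interfaces}, so $\gvec{\alpha}_h$ is only piecewise smooth and its divergence, taken in the sense the scheme sees it (integrated over cells and summed, or distributionally), picks up surface contributions proportional to the jump in the normal component of $\gvec{\alpha}_h$ at every cell face. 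The subtlety you flag --- whether velocity dependence is ``smuggled in'' --- is not the one that matters; the issue is configuration-space discontinuity of the fields.

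The paper's proof devotes most of its length to exactly this point: it integrates \eqr{\ref{eq:discrete-ps-incompress}} over a cell, applies Gauss's law, sums over cells to get $\sum_j \oint_{\partial K_j}\mvec{n}\cdot\gvec{\alpha}_h^-\,dS = 0$, and then argues that the normal component of $\gvec{\alpha}_h$ is continuous at every interface --- $\mvec{n}\cdot\mvec{v}$ across configuration-space faces because $\mvec{v}$ has no $\mvec{x}$ dependence, and $\mvec{n}\cdot(q_s/m_s)(\mvec{E}_h+\mvec{v}\times\mvec{B}_h)$ across velocity-space faces because the fields depend only on $\mvec{x}$ and are single-valued within a configuration-space cell --- so the surface integrands cancel pairwise. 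The jumps in the fields do exist, but only across faces whose normals are configuration-space directions, where the normal component of the flow is $\mvec{v}$ and hence unaffected. Without this check the lemma cannot safely be invoked in the $L^2$ proof of Proposition \ref{prop:discrete-L2-norm}, where one rewrites $\gps f_h\cdot\gvec{\alpha}_h f_h = \tfrac{1}{2}\gps\cdot(\gvec{\alpha}_h f_h^2)$ and then sums surface terms over all cells; that manipulation relies on the scheme seeing a globally incompressible flow, not merely a cell-interior identity. You should add the interface-continuity argument to close the proof.
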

\begin{proof}
For the specific discrete phase space flow in the Vlasov-Maxwell system, $\gvec{\alpha}_h = (\mvec{v}, q_s/m_s (\mvec{E}_h + \mvec{v} \times \mvec{B}_h))$. Within a cell, \eqr{\ref{eq:discrete-ps-incompress}} is zero since, as with the continuous system, $\mvec{v}$ has no configuration space dependence, and $q_s/m_s (\mvec{E}_h + \mvec{v} \times \mvec{B}_h)$ has no divergence in velocity space. The question is whether the jumps in $\gvec{\alpha}_h$ across cell interfaces in phase space are accounted for by the scheme. Integrating \eqr{\ref{eq:discrete-ps-incompress}} over a phase space cell $K_j$, employing Gauss' Law, and summing over cells,
  \begin{align}
  \sum_j \oint_{\partial K_j} \mvec{n} \cdot \gvec{\alpha}_h^- \thinspace dS= 0.
  \end{align}
This result follows for the simple reason that the phase space flow is in fact continuous with respect to the surfaces considered, allowing us to pairwise cancel the integrand upon summation. For example, consider the configuration space component of the flow $\gvec{\alpha_h}$, $\mvec{v}$. The velocity, $\mvec{v}$, is continuous across configuration space surfaces because $\mvec{v}$ has no configuration space dependence. Likewise, the velocity space component of $\gvec{\alpha_h}, q_s/m_s (\mvec{E}_h + \mvec{v} \times \mvec{B}_h)$, is continuous across velocity space surfaces because $\mvec{E}_h$ and $\mvec{B}_h$ have no velocity space dependence, and $\mvec{v}$ in the $\mvec{v} \times \mvec{B}_h$ term is the velocity coordinate, and thus is continuous. We note that this proof is specific to the phase space flow for the Vlasov-Maxwell system and in general may not hold for all systems.
\end{proof}  
\begin{proposition} \label{prop:discrete-L2-norm}
  The discrete scheme decays the $L^2$ norm of the distribution function monotonically. 
\end{proposition}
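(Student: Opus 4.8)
The plan is to mirror the continuous $L^2$ conservation proof, but now retaining the surface terms that the discontinuous approximation and the numerical flux introduce. First I would set the test function $w = f_h$ in the discrete weak-form \eqr{\ref{eq:dis-weak-form}} on each cell $K_j$. The time-derivative term immediately collapses to $\frac{d}{dt}\int_{K_j}\frac{1}{2}f_h^2\thinspace d\mvec{z}$. For the volume term, I would invoke the discrete incompressibility established in Lemma \ref{lem:discrete-phase-space-incompress}, which lets me write $\gps f_h\cdot\gvec{\alpha}_h f_h = \frac{1}{2}\gps\cdot(\gvec{\alpha}_h f_h^2)$ within each cell and then apply Gauss' law to convert it into a surface integral $\frac{1}{2}\oint_{\partial K_j}\mvec{n}\cdot\gvec{\alpha}_h^-(f_h^-)^2\thinspace dS$.

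Next I would sum over all cells, so that
\[
\frac{d}{dt}\sum_j\int_{K_j}\frac{1}{2}f_h^2\thinspace d\mvec{z} = -\sum_j\oint_{\partial K_j}\left[ f_h^-\,\mvec{n}\cdot\hat{\mvec{F}} - \frac{1}{2}\mvec{n}\cdot\gvec{\alpha}_h^-(f_h^-)^2\right]dS,
\]
and then substitute the local Lax--Friedrichs flux \eqr{\ref{eq:num-flux}}. The crux is to reorganize the sum over cell surfaces into a sum over interior faces, pairing the two contributions that each shared face receives from its neighboring cells, which carry opposite outward normals and interchanged inside/outside ($-$/$+$) roles. Here I would use the fact, already noted in the proof of Lemma \ref{lem:discrete-phase-space-incompress}, that $\gvec{\alpha}_h$ is continuous across faces, so that $\mvec{n}\cdot\gvec{\alpha}_h$ takes a single value on a given face and the penalty speed $c$ is common to both sides.

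Carrying out the per-face bookkeeping, the central-flux pieces (those proportional to $\mvec{n}\cdot\gvec{\alpha}_h$) cancel against the volume-conversion pieces, and what survives from each interior face is exactly the dissipative penalty contribution, giving
\[
\frac{d}{dt}\sum_j\int_{K_j}\frac{1}{2}f_h^2\thinspace d\mvec{z} = -\sum_{\mathrm{faces}}\oint \frac{c}{2}\,(f_h^+ - f_h^-)^2\thinspace dS \le 0.
\]
Since $c\ge 0$ and the integrand is a square, this establishes monotonic decay, with exact conservation recovered in the pure central-flux limit $c=0$. The remaining boundary faces are disposed of either by periodicity, which makes them cancel pairwise, or by the vanishing of $f_h$ on the domain boundary, just as in the continuous case.

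I expect the main obstacle to be the sign and normal bookkeeping in the per-face pairing rather than any deep analytic difficulty: one must correctly track the inside/outside values and outward normals for the two cells sharing a face, and verify that the non-dissipative terms genuinely cancel so that only the manifestly non-positive jump term remains. Getting this cancellation right is the delicate part, and it is precisely where the continuity of $\gvec{\alpha}_h$ across interfaces is indispensable.
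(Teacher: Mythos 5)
Your proposal is correct and follows essentially the same route as the paper's proof: set $w=f_h$, use Lemma~\ref{lem:discrete-phase-space-incompress} to turn the volume term into a surface term, insert the Lax--Friedrichs flux, and pair contributions across shared faces using the continuity of $\mvec{n}\cdot\gvec{\alpha}_h$ so that only the non-positive penalty term $-\tfrac{c}{2}(f_h^+-f_h^-)^2$ survives. The per-face cancellation you flag as the delicate step is exactly the one the paper performs, arriving at the same bound with $c=\max(|\mvec{n}\cdot\gvec{\alpha}_h^+|,|\mvec{n}\cdot\gvec{\alpha}_h^-|)$.
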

\begin{proof}
Since the distribution function itself lies in the test space, we can set $w = f_h$ in \eqr{\ref{eq:dis-weak-form}}. We then have,
  \begin{align}
    \int_{K_j} f_h \pfrac{f_h}{t} \thinspace d\mvec{z} + 
    & \oint_{\partial K_j}f_h^- \mvec{n}\cdot\hat{\mvec{F}}  \thinspace dS 
    -\int_{K_j} \gps f_h \cdot \gvec{\alpha}_h f_h \thinspace d\mvec{z} = \notag \\
    &\frac{1}{2} \int_{K_j} \pfrac{f_h^2}{t} \thinspace d\mvec{z} + 
    \oint_{\partial K_j}f_h^- \mvec{n} \cdot \left(\hat{\mvec{F}} - \gvec{\alpha}_h^- \frac{f_h^-}{2} \right ) \thinspace dS,
  \end{align}
where we have used Lemma\thinspace\ref{lem:discrete-phase-space-incompress} to rewrite $\gps f_h \cdot \gvec{\alpha}_h f_h = \frac{1}{2} \gps \cdot \left (\gvec{\alpha_h} f_h^2 \right )$ since phase space is incompressible, even in our discrete system. Now, summing over all cells and inserting our numerical flux, with a bit of algebra, we obtain,
  \begin{align}
  \sum_j \frac{1}{2} \int_{K_j} \pfrac{f_h^2}{t} \thinspace d\mvec{z} +   \sum_j \frac{1}{2} \oint_{\partial K_j} |\mvec{n} \cdot \gvec{\alpha}_h^\pm | f_h^{-^2} +  |\mvec{n} \cdot \gvec{\alpha}_h^\pm | f_h^{+^2} - f_h^+ f_h^- (  |\mvec{n} \cdot \gvec{\alpha}_h^+ | +  |\mvec{n} \cdot \gvec{\alpha}_h^- | ) \thinspace dS = 0,
  \end{align}
where the $\pm$ is determined by the value of $c = \max(|\mvec{n}\cdot\gvec{\alpha}_h^+|,|\mvec{n}\cdot\gvec{\alpha}_h^-|)$ in the Lax-Friedrichs flux. Note that the sum over surfaces is a sum over pair-wise surfaces, and we have included the contributions on both the left and right side of the surface. Recall that for the proof of phase space incompressibility for the discrete system, we exploited the fact that the phase space flow is continuous with respect to the surfaces considered, i.e., $\mvec{v}$ is continuous across configuration space surfaces and $q_s/m_s (\mvec{E}_h + \mvec{v} \times \mvec{B}_h)$ is continuous across velocity space surfaces. We can use this fact again to simplify our expression,
  \begin{align}
  \sum_j \frac{d}{dt} \int_{K_j} f_h^2 \thinspace d\mvec{z} 
  \le
  -\sum_j \oint_{\partial K_j} \max(|\mvec{n}\cdot\gvec{\alpha}_h^+|,|\mvec{n}\cdot\gvec{\alpha}_h^-|) (f_h^+ - f_h^-)^2 \thinspace dS.
  \end{align}
So, the time evolution of the $L^2$ norm of the distribution function is bounded from above by a negative-definite expression. Thus, the $L^2$ norm of the distribution function is a monotonically decreasing function, and our scheme is $L^2$ stable.
\end{proof}
\begin{corollary} \label{cor:discrete-entropy}
If the discrete distribution function $f_h$ remains positive definite, then the discrete scheme grows the discrete entropy monotonically\footnote{The monotonic growth of the discrete entropy is due to our convention in the definition of the entropy. If one drops the minus sign in the definition of the entropy, then the discrete entropy is a monotonically \textbf{decreasing} function, instead of an increasing function, if the discrete distribution function $f_h$ remains positive definite.},
\begin{align}
\sum_j \frac{d}{dt} \int_{K_j} -f_h \ln(f_h) \thinspace d\mvec{z} \geq 0
\end{align}
\end{corollary}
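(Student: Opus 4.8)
The plan is to mimic the proof of Proposition~\ref{prop:discrete-L2-norm}, but to use the entropy variable $w = -\ln(f_h)$ in place of $w = f_h$, so that the discrete argument runs exactly parallel to the continuous entropy proof. Concretely, I would substitute $w = -\ln(f_h)$ into the discrete weak-form \eqr{\ref{eq:dis-weak-form}}, manipulate the three terms as in the continuous case, sum over all phase-space cells $K_j$, and finally insert the Lax--Friedrichs numerical flux \eqr{\ref{eq:num-flux}} to expose a definite-sign boundary contribution.

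For the time-derivative volume term I would use the product-rule identity $-\ln(f_h)\,\pfrac{f_h}{t} = \pfrac{f_h}{t} - \pfrac{(f_h\ln f_h)}{t}$, exactly as in the continuous entropy proof. Upon summing over all cells, the stray $\int_{K_j}\pfrac{f_h}{t}\,d\mvec{z}$ piece assembles into $\frac{d}{dt}\sum_j\int_{K_j} f_h\,d\mvec{z}$, which vanishes by particle conservation (Proposition~\ref{prop:discrete-particle-cons}); what remains is precisely $\frac{d}{dt}\sum_j\int_{K_j}(-f_h\ln f_h)\,d\mvec{z}$, the discrete entropy. For the volume flux term I would use $\gps\ln(f_h) = \gps f_h / f_h$ (here the assumed positivity of $f_h$ is first needed, to make the logarithm and this identity meaningful), so that $\gps(-\ln f_h)\cdot\gvec{\alpha}_h f_h = -\gvec{\alpha}_h\cdot\gps f_h$; invoking discrete phase-space incompressibility (Lemma~\ref{lem:discrete-phase-space-incompress}) this becomes $-\gps\cdot(\gvec{\alpha}_h f_h)$, which Gauss' law converts into a surface integral over $\partial K_j$.

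The crux of the argument is the resulting sum of surface integrals. I would organize the sum over the $\partial K_j$ as a sum over pairwise-shared interfaces and insert the Lax--Friedrichs flux, using, as in the $L^2$ proof, the fact that $\gvec{\alpha}_h$ is continuous across the relevant surfaces (the component $\mvec{v}$ across configuration-space faces, and $q_s/m_s(\mvec{E}_h + \mvec{v}\times\mvec{B}_h)$ across velocity-space faces). The central part of the numerical flux cancels pairwise, leaving only the diffusive part, whose contribution per interface is proportional to $c\,(f_h^+ - f_h^-)\,(\ln f_h^+ - \ln f_h^-)$ with $c = \max(|\mvec{n}\cdot\gvec{\alpha}_h^+|,|\mvec{n}\cdot\gvec{\alpha}_h^-|)\ge 0$. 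Because $\ln$ is monotonically increasing, $(f_h^+-f_h^-)(\ln f_h^+ - \ln f_h^-)\ge 0$ whenever $f_h^\pm>0$ --- this is the second and decisive place the positivity hypothesis enters. Collecting signs then yields $\sum_j \frac{d}{dt}\int_{K_j}(-f_h\ln f_h)\,d\mvec{z}\ge 0$, the claimed monotone growth, with the opposite sign to the $L^2$ decay precisely because $f_h$ and $-\ln f_h$ enter the jump product with the same monotone orientation.

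I expect the main obstacle to be a point of rigor rather than of algebra: $-\ln(f_h)$ is not a polynomial and hence does not lie in the finite element test space $\mathcal{V}_h^p$, so \eqr{\ref{eq:dis-weak-form}} does not literally license its use as a test function. The honest way to handle this is to note that $\pfrac{f_h}{t}\in\mathcal{V}_h^p$, so only the $L^2$-projection of $-\ln(f_h)$ onto $\mathcal{V}_h^p$ contributes to the time term; one must then check that carrying the same projection through the volume and surface terms does not spoil the sign of the diffusive boundary contribution. I would either carry out this projection bookkeeping carefully, or --- following the formal spirit of the continuous entropy proof --- present the result as a formal entropy estimate valid under the stated positivity assumption.
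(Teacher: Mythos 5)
Your route is genuinely different from the paper's, and it has two gaps --- one you name, one you do not. The paper never tests with $-\ln(f_h)$ at all: it uses the pointwise bound $\ln(f_h)\le f_h-1$ to get $-f_h\ln(f_h)\ge -f_h^2+f_h$, then appeals to the already-established monotone decay of the $L^2$ norm (Proposition~\ref{prop:discrete-L2-norm}) together with particle conservation (Proposition~\ref{prop:discrete-particle-cons}) to argue that the right-hand side is non-decreasing, hence so is the entropy; this is why the statement is a corollary rather than a proposition. The obstruction you do name --- $-\ln(f_h)\notin\mathcal{V}_h^p$ --- is exactly what that soft argument is designed to sidestep, and your proposed repair does not close it: replacing $-\ln(f_h)$ by its $L^2$ projection onto $\mathcal{V}_h^p$ rescues only the time-derivative term (since $\pfrac{f_h}{t}\in\mathcal{V}_h^p$), while the volume flux term $\int\gps w\cdot\gvec{\alpha}_h f_h\thinspace d\mvec{z}$ and the surface term $\oint w^-\mvec{n}\cdot\hat{\mvec{F}}\thinspace dS$ pair $w$ against quantities that are not in $\mathcal{V}_h^p$, so the identities $\gps\ln(f_h)=\gps f_h/f_h$ and the conversion to a pure surface integral do not survive the projection. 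You are left with uncontrolled commutator terms rather than a sign.

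The unacknowledged gap is in the interface bookkeeping. For the quadratic entropy the central part of the Lax--Friedrichs flux cancels against the surface contributions coming from the volume term because the arithmetic mean is the entropy-conservative flux for $\eta=f^2/2$; for $\eta=-f\ln f$ the entropy-conservative mean is the \emph{logarithmic} mean, so the central part leaves a residual of indefinite sign, and the per-face contribution is \emph{not} simply $c\thinspace(f_h^+-f_h^-)(\ln f_h^+-\ln f_h^-)\ge 0$. If you carry the algebra through, using continuity of $\gvec{\alpha}_h$ across the relevant faces so that $c=|\mvec{n}\cdot\gvec{\alpha}_h|$ exactly (the Lax--Friedrichs flux degenerates to the upwind flux), a face with $\mvec{n}\cdot\gvec{\alpha}_h>0$ contributes $\mvec{n}\cdot\gvec{\alpha}_h\thinspace f_h^-\left(r-1-\ln r\right)$ with $r=f_h^+/f_h^-$, and symmetrically for the opposite sign. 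This is non-negative, but only by virtue of $\ln r\le r-1$ --- the very inequality the paper's proof is built on --- not by monotonicity of $\ln$ alone. So the sign conclusion can be salvaged, but by a different mechanism than the one you describe, and even then only as a formal estimate unless the test-space obstruction is actually resolved.
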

\begin{proof}
Using the well known bound,
\begin{align}
\ln(x) \leq x - 1, \label{eq:log-bound}
\end{align}
we can see that $\ln(f_h) \leq f_h - 1$, so long as $f_h$ remains a positive definite quantity, and thus $\ln(f_h)$ is well-defined. Multiplying by $-f_h$ then gives us the inequality,
\begin{align}
-f_h \ln(f_h) \geq -f_h^2 + f_h.
\end{align}
But the left-hand side is just the discrete entropy. Integrating over a phase space cell $K_j$, summing over cells, and taking the time-derivative of both sides gives us an expression for the time evolution of the discrete entropy in our scheme,
\begin{align}
\sum_j \frac{d}{dt} \int_{K_j} -f_h \ln(f_h) \thinspace d\mvec{z} \geq \sum_j \frac{d}{dt} \int_{K_j} -f_h^2 + f_h \thinspace d\mvec{z}.
\end{align}
Now, we note that in Proposition \ref{prop:discrete-L2-norm} we have already proved that the $L^2$ norm of the discrete distribution function is a monotonically decaying function, and thus the negative of the $L^2$ norm is a monotonically increasing function, and by Proposition \ref{prop:discrete-particle-cons}, the discrete system conserves particles. Thus, the discrete entropy is a monotonically increasing function. The significance of this is simply that, while we will demonstrate shortly that the semi-discrete scheme derived here conserves energy exactly, the scheme also has a small amount of ``regularization'' associated with it analogous to numerical diffusion present in the discretization of other equation systems via similar methods. Although, in this case, the regularization occurs in the complete phase space as opposed to the numerical diffusion present in discretizations of fluid equations. 
\end{proof}
We will show below that the discrete DG scheme also conserves total energy exactly if \emph{central fluxes} are used for Maxwell's equations. The energy conservation is independent of the flux used in the Vlasov equations. We will first consider the conservation of energy for our discretization of Maxwell's equations.
\begin{lemma}
  \label{lem:em-e-cons}
  The semi-discrete scheme for Maxwell's equations conserves electromagnetic energy exactly when using central fluxes, and is bounded when using upwind fluxes,
  \begin{align}
    \sum_j \frac{d}{dt} \int_{\Omega_j} \left( \frac{\epsilon_0}{2}|\mvec{E}_h|^2 + \frac{1}{2\mu_0}|\mvec{B}_h|^2 \right) \dtx
    \le
    -\sum_j \int_{\Omega_j} \mvec{J}_h\cdot\mvec{E}_h \dtx. \label{eq:em-e-cons}
  \end{align}
By bounded, we mean that when using upwind fluxes, when the right hand side is positive, the electromagnetic energy will increase less than $\left |\sum_j \int_{\Omega_j} \mvec{J}_h\cdot\mvec{E}_h \dtx \right |$, and when the right hand side is negative the electromagnetic energy will decay more than $- \left |\sum_j \int_{\Omega_j} \mvec{J}_h\cdot\mvec{E}_h \dtx \right |$.
\end{lemma}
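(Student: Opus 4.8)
The plan is to mirror the continuous energy argument of Proposition~\ref{prop:continuous-energy-proof} at the discrete level, replacing the pointwise manipulations there by admissible choices of test function in the discrete weak-forms \eqr{\ref{eq:dis-weak-B}} and \eqr{\ref{eq:dis-weak-E}}. The crucial observation is that each Cartesian component of $\mvec{E}_h$ and $\mvec{B}_h$ lies in the scalar solution space $\mathcal{X}^p_{h}$, so I may take $\varphi = (\mvec{B}_h)_k$ in the $k$-th component of \eqr{\ref{eq:dis-weak-B}} and $\varphi = (\mvec{E}_h)_k$ in the $k$-th component of \eqr{\ref{eq:dis-weak-E}}, summing over $k$. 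The leading terms then assemble into $\int_{\Omega_j}\mvec{B}_h\cdot\pfrac{\mvec{B}_h}{t}\dtx = \frac{d}{dt}\int_{\Omega_j}\frac12|\mvec{B}_h|^2\dtx$ and likewise for $\mvec{E}_h$; dividing the $\mvec{B}$ relation and the $\mvec{E}$ relation each by $\mu_0$ and adding reproduces the field-energy density $\frac{\epsilon_0}{2}|\mvec{E}_h|^2 + \frac{1}{2\mu_0}|\mvec{B}_h|^2$ on the left and $-\int_{\Omega_j}\mvec{J}_h\cdot\mvec{E}_h\dtx$ on the right, exactly the terms appearing in \eqr{\ref{eq:em-e-cons}}.

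First I would dispatch the volume terms. A short index computation gives $\sum_k[\gcs(\mvec{B}_h)_k\times\mvec{E}_h]_k = -\mvec{E}_h\cdot(\gcs\times\mvec{B}_h)$ and $\sum_k[\gcs(\mvec{E}_h)_k\times\mvec{B}_h]_k = -\mvec{B}_h\cdot(\gcs\times\mvec{E}_h)$, so the two interior curl contributions combine into $\mvec{E}_h\cdot(\gcs\times\mvec{B}_h) - \mvec{B}_h\cdot(\gcs\times\mvec{E}_h) = -\gcs\cdot(\mvec{E}_h\times\mvec{B}_h)$, the discrete Poynting divergence. Since $\mvec{E}_h$ and $\mvec{B}_h$ are polynomial, hence smooth, inside each cell, the divergence theorem converts this into a surface integral $-\oint_{\partial\Omega_j}d\mvec{s}\cdot(\mvec{E}_h^-\times\mvec{B}_h^-)$ built entirely from interior face values.

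The heart of the argument is then the bookkeeping of the three surface contributions each cell receives: the numerical-flux terms $\frac{1}{\mu_0}\oint\mvec{B}_h^-\cdot(d\mvec{s}\times\hat{\mvec{E}}_h)$ and $-\frac{1}{\mu_0}\oint\mvec{E}_h^-\cdot(d\mvec{s}\times\hat{\mvec{B}}_h)$ inherited from the weak forms, together with the Poynting surface term above. Rewriting the Poynting term symmetrically with the scalar triple product, $d\mvec{s}\cdot(\mvec{E}_h^-\times\mvec{B}_h^-) = \mvec{B}_h^-\cdot(d\mvec{s}\times\mvec{E}_h^-) = -\mvec{E}_h^-\cdot(d\mvec{s}\times\mvec{B}_h^-)$, the per-face integrand collapses to $\mvec{B}_h^-\cdot(d\mvec{s}\times(\hat{\mvec{E}}_h - \tfrac12\mvec{E}_h^-)) - \mvec{E}_h^-\cdot(d\mvec{s}\times(\hat{\mvec{B}}_h - \tfrac12\mvec{B}_h^-))$. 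For central fluxes $\hat{\mvec{E}}_h - \tfrac12\mvec{E}_h^- = \tfrac12\mvec{E}_h^+$ and $\hat{\mvec{B}}_h - \tfrac12\mvec{B}_h^- = \tfrac12\mvec{B}_h^+$; summing the two contributions a shared interface receives from its adjoining cells, whose outward normals are equal and opposite, and reapplying the triple-product identities shows every interior face cancels pairwise, with periodicity or vanishing boundary data disposing of the rest. This yields exact equality in \eqr{\ref{eq:em-e-cons}}, independent of the Vlasov flux choice since the Vlasov part enters only through $\mvec{J}_h$.

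For upwind fluxes I would substitute the Riemann values Eqns.~(\ref{eq:r-e2})--(\ref{eq:r-b3}), which differ from the central values only by the jump terms $\pm c\,\{\cdot\}$ and $\pm\{\cdot\}/c$ in the face-local frame $(\mvec{s},\gvec{\tau}_1,\gvec{\tau}_2)$. These extra pieces no longer cancel across an interface; instead, after summing the two adjoining contributions, they should assemble into a manifestly negative-definite quadratic form in the field jumps (schematically of $-\tfrac{c}{2}(\{E_2\}^2 + \{E_3\}^2) - \tfrac{1}{2c}(\{B_2\}^2 + \{B_3\}^2)$ type), so the surface sum is $\le 0$ and the left side of \eqr{\ref{eq:em-e-cons}} is bounded above by $-\sum_j\int_{\Omega_j}\mvec{J}_h\cdot\mvec{E}_h\dtx$. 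I expect the main obstacle to be precisely this last accounting: one must keep the orientation of $d\mvec{s}$ and the tangential frame consistent between the two cells sharing a face, and verify that only the tangential components of $\hat{\mvec{E}}_h,\hat{\mvec{B}}_h$ enter $d\mvec{s}\times\hat{\mvec{E}}_h$ and $d\mvec{s}\times\hat{\mvec{B}}_h$ (the normal components, which the Riemann problem leaves static, drop out), so that the residual form is genuinely sign-definite rather than merely of one sign after incomplete cancellation.
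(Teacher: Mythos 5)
Your proposal follows essentially the same route as the paper's proof: the same component-wise choice of test functions $(\mvec{B}_h)_k$ and $(\mvec{E}_h)_k$, the same assembly of the volume curl terms into a cell-local Poynting surface integral of interior traces, pairwise cancellation of the symmetric central-flux surface terms on summation over cells, and reduction of the upwind case to a non-positive quadratic form in the face jumps, which the paper simply carries out explicitly in the local $(\mvec{s},\gvec{\tau}_1,\gvec{\tau}_2)$ frame using Eqns.\thinspace(\ref{eq:r-e2})--(\ref{eq:r-b3}). The one cosmetic slip is that in your schematic jump form the factors of $c$ are interchanged --- the electric-field jumps enter with $1/c$ and the magnetic-field jumps with $c$ --- but this does not affect the sign-definiteness and hence not the conclusion.
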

\begin{proof}
From the discrete weak-form of Maxwell's equations, we need to compute equations for $|\mvec{E}_h|^2$ and $|\mvec{B}_h|^2$. Since each component of the field lies in the selected test space, we take the $i^{th}$-component of \eqr{\ref{eq:dis-weak-B}} and use $B_{hi}$ as a test function, e.g., choose $\varphi=B_{hx}$. Summing these three equations will give us an expression for the time-derivative of $|\mvec{B}_h|^2$. We follow the same procedure for \eqr{\ref{eq:dis-weak-E}}, which gives an expression for the time-derivative of $|\mvec{E}_h|^2$. With a bit of algebra, we obtain,
  \begin{align}
    \frac{d}{dt} \int_{\Omega_j} \frac{1}{2} |\mvec{B}_h|^2 \dtx
    +
    \oint_{\partial\Omega_j} d\mvec{s}\cdot\hat{\mvec{E}}_h\times \mvec{B}^-_h
    +
    \int_{\Omega_j} \mvec{E}_h\cdot\nabla\times\mvec{B}_h \dtx 
    = 0, \label{eq:h-bnum}
  \end{align}
  and
  \begin{align}
    \epsilon_0\mu_0\frac{d}{dt} \int_{\Omega_j} \frac{1}{2} |\mvec{E}_h|^2 \dtx
    -
    \oint_{\partial\Omega_j} d\mvec{s}\cdot\hat{\mvec{B}}_h\times \mvec{E}^-_h
    -
    \int_{\Omega_j} \mvec{B}_h\cdot\nabla\times\mvec{E}_h \dtx
    = -\int_{\Omega_j} \mvec{J}_h\cdot\mvec{E}_h \dtx. \label{eq:h-enum}
  \end{align}
We now multiply both equations by $1/\mu_0$ and add them. Since $\mvec{E}_h\cdot\nabla\times\mvec{B}_h-\mvec{B}_h\cdot\nabla\times\mvec{E}_h = \nabla\cdot(\mvec{B}_h\times\mvec{E}_h)$, we can combine the third terms of Eqns.\thinspace(\ref{eq:h-bnum}) and (\ref{eq:h-enum})
  \begin{align}
    \int_{\Omega_j} \nabla\cdot(\mvec{B}_h\times\mvec{E}_h) \dtx
    = 
    \oint_{\partial\Omega_j} d\mvec{s}\cdot\mvec{B}^-_h\times \mvec{E}^-_h.
  \end{align}
In the above result, note that upon integration by parts, we must use the field \emph{just inside} the face. Hence, the evolution of the electromagnetic energy in a single cell becomes
  \begin{align}
    \frac{d}{dt} \int_{\Omega_j} \left( \frac{\epsilon_0}{2}|\mvec{E}_h|^2 + \frac{1}{2\mu_0}|\mvec{B}_h|^2 \right) \dtx
    +
    \oint_{\partial \Omega_j} d\mvec{s}\cdot \left(\hat{\mvec{E}}_h\times \mvec{B}^-_h + \hat{\mvec{E}}^-_h\times \hat{\mvec{B}}_h - {\mvec{E}}^-_h\times \mvec{B}^-_h \right)
    =
    -\int_{\Omega_j} \mvec{J}_h\cdot\mvec{E}_h \dtx. \label{eq:er-em-gen}
  \end{align}
{\bf Exact Energy Conservation With Central Flux.} Using central-fluxes to determine the interface fields, i.e., setting $\hat{\mvec{E}}_h = \llbracket \mvec{E}\rrbracket$ and $\hat{\mvec{B}}_h = \llbracket\mvec{B}\rrbracket$, gives us,
  \begin{align}
    \frac{d}{dt} \int_{\Omega_j} \left( \frac{\epsilon_0}{2}|\mvec{E}_h|^2 + \frac{1}{2\mu_0}|\mvec{B}_h|^2 \right) \dtx
    +
    \frac{1}{2} \oint_{\partial \Omega_j} d\mvec{s}\cdot \left( \mvec{E}^+_h\times \mvec{B}^-_h + \mvec{E}^-_h\times \mvec{B}^+_h \right)
    =
     -\int_{\Omega_j} \mvec{J}_h\cdot\mvec{E}_h \dtx.
  \end{align}
Upon summing over all configuration space cells, we see that the surface term vanishes as it has opposite signs for the two cells sharing an interface, leading to the desired discrete electromagnetic energy conservation equation,
  \begin{align}
    \sum_j \frac{d}{dt} \int_{\Omega_j} \left( \frac{\epsilon_0}{2}|\mvec{E}_h|^2 + \frac{1}{2\mu_0}|\mvec{B}_h|^2 \right) \dtx
    = -\sum_j \int_{\Omega_j} \mvec{J}_h\cdot\mvec{E}_h \dtx.
  \end{align}
{\bf Energy Bound With Upwind Flux.} To see what happens when using upwind fluxes, we transform the fields appearing in surface integral into the $(\mvec{s},\gvec{\tau}_1,\gvec{\tau}_2)$ coordinate system. We can then write the third term in \eqr{\ref{eq:er-em-gen}} as $d\mvec{s}\cdot(\hat{\mvec{E}}_h\times \mvec{B}^-_h + \dots) = ds (\hat{E}_2B_3^- - \hat{E}_3B_2^-) + \dots$ Using Eqns.\thinspace(\ref{eq:r-e2})-(\ref{eq:r-b3}) for the interface fields and summing over all configuration space cells, we then obtain
  \begin{align}
    \sum_j \frac{d}{dt} \int_{\Omega_j} &\left( \frac{\epsilon_0}{2}|\mvec{E}_h|^2 + \frac{1}{2\mu_0}|\mvec{B}_h|^2 \right) \dtx
    = \notag \\
    &-\sum_j \int_{\Omega_j} \mvec{J}_h\cdot\mvec{E}_h \dtx + \sum_j \oint_{\partial \Omega_j} ds \left( \{E_2\}E_2^-/c+ \{E_3\}E_3^-/c + c \{B_2\}B_2^-+ c \{B_3\}B_3^- \right).
  \end{align}
Note that due to the symmetry of the terms, the central flux terms in Eqns.\thinspace(\ref{eq:r-e2})-(\ref{eq:r-b3}) have vanished on summing over all cells. Now consider the contribution of the term $\{E_2\}E_2^-$ to the two cells adjoining some face. This term will be $(E_2^+-E_2^-)E_2^-/2$ and $(E_2^--E_2^+)E_2^+/2$. On summing over the two cells, this contribution will become $-(E_2^+-E_2^-)^2/2$. Likewise for the other electric and magnetic field coordinates. Hence, the surface terms, on summation, contribute non-positive quantities to the right-hand side, implying that
  \begin{align}
    \sum_j \frac{d}{dt} \int_{\Omega_j} &\left( \frac{\epsilon_0}{2}|\mvec{E}_h|^2 + \frac{1}{2\mu_0}|\mvec{B}_h|^2 \right) \dtx < -\sum_j \int_{\Omega_j} \mvec{J}_h\cdot\mvec{E}_h \dtx.
  \end{align}
\end{proof}

\begin{lemma}
  \label{lem:kin-e-cons}
If $|\mvec{v}|^2$ belongs to the approximation space $\mathcal{V}_h^p$, then the semi-discrete scheme satisfies
  \begin{align}
    \sum_{j} \sum_s \frac{d}{dt} \int_{K_j} \frac{1}{2} m|\mvec{v}|^2 f_h \thinspace d\mvec{z}
    -
    \sum_{j} \int_{\Omega_j} \mvec{J}_h\cdot\mvec{E}_h \dtx
    =
    0. \label{eq:lemma-2}
  \end{align}
Note that the species index is implied, the sum over $j$ in the first term is over all phase space cells, and the sum over $j$ in the second term is over all configuration space cells.
\end{lemma}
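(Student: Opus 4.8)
The plan is to mirror the continuous energy proof (Proposition~\ref{prop:continuous-energy-proof}) at the discrete level by feeding the kinetic-energy weight into the discrete weak-form. Concretely, I would choose the test function $w = \tfrac{1}{2}m|\mvec{v}|^2$ in \eqr{\ref{eq:dis-weak-form}}. This substitution is exactly where the hypothesis $|\mvec{v}|^2 \in \mathcal{V}_h^p$ enters: $w$ must genuinely lie in the test space for the choice to be admissible, and with a reduced Serendipity basis one must verify that this monomial is actually retained. Since $w$ is time-independent, the first term collapses to $\frac{d}{dt}\int_{K_j} \tfrac{1}{2}m|\mvec{v}|^2 f_h \, d\mvec{z}$, the local kinetic energy that becomes the first term of \eqr{\ref{eq:lemma-2}}.

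For the volume term I would compute $\gps w = (\mvec{0}, m\mvec{v})$, so that $\gps w \cdot \gvec{\alpha}_h = m\mvec{v}\cdot\mvec{F}_h = q\,\mvec{v}\cdot\mvec{E}_h$, the magnetic contribution dropping out because $\mvec{v}\cdot(\mvec{v}\times\mvec{B}_h)=0$ identically, just as in the continuous case. Summing over phase-space cells and species and carrying out the velocity integration at fixed configuration position turns $\sum_s \int_{K_j} q_s \,\mvec{v}\cdot\mvec{E}_h f_h \, d\mvec{z}$ into $\int_{\Omega_j} \mvec{E}_h\cdot\sum_s q_s \iprod{\mvec{v}}{s}\, \dtx = \int_{\Omega_j}\mvec{J}_h\cdot\mvec{E}_h \, \dtx$, reproducing the second term of \eqr{\ref{eq:lemma-2}}. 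This matching requires that the $\mvec{J}_h$ appearing here be exactly the velocity moment of $f_h$ in the sense of \eqr{\ref{eq:momentDefinition}} and that all quadratures be performed exactly, both of which the scheme guarantees.

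The decisive step, and the one I expect to be the main obstacle, is showing that the surface term $\sum_j \oint_{\partial K_j} w^- \, \mvec{n}\cdot\hat{\mvec{F}}\, dS$ vanishes on summation \emph{independently of the Vlasov numerical flux}. The argument I would use is that $w = \tfrac{1}{2}m|\mvec{v}|^2$, being exactly represented in $\mathcal{V}_h^p$, is continuous across every phase-space interface: trivially across configuration-space faces, since it carries no $\mvec{x}$ dependence, and across velocity-space faces because it is a single fixed continuous function of $\mvec{v}$. Hence on each shared face $w^-$ takes the same value from both adjoining cells, while the numerical flux $\hat{\mvec{F}}$ is single-valued and the outward normals are opposite, so the two contributions cancel pairwise. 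Only domain-boundary faces survive, and these are eliminated by the assumed boundary conditions (vanishing of $f_h$ or its flux at the velocity boundary, periodicity or vanishing at the configuration boundary). The subtlety worth emphasizing is that this cancellation is exact only because $|\mvec{v}|^2$ lies in the space, so that $w^-$ equals the true continuous value on the face rather than a one-sided projection; this is precisely why the hypothesis is needed and why the result is insensitive to the choice of flux in \eqr{\ref{eq:num-flux}}.

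Assembling the three pieces then gives the claim directly: the surviving time-derivative term, the vanished surface term, and the volume term rewritten as $-\sum_j \int_{\Omega_j}\mvec{J}_h\cdot\mvec{E}_h\,\dtx$ combine to yield \eqr{\ref{eq:lemma-2}}.
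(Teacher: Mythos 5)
Your proposal is correct and follows essentially the same route as the paper's proof: substitute $w=\tfrac{1}{2}m|\mvec{v}|^2$ (admissible precisely because $|\mvec{v}|^2\in\mathcal{V}_h^p$), reduce the volume term to $q\,\mvec{v}\cdot\mvec{E}_h f_h$ and integrate in velocity to recover $\mvec{J}_h\cdot\mvec{E}_h$, and cancel the surface terms pairwise using the continuity of $|\mvec{v}|^2$ together with the single-valuedness of the numerical flux, which is exactly why the result is flux-independent. Your elaboration of the surface cancellation is slightly more explicit than the paper's (which simply appeals to the particle-conservation argument), but the substance is identical.
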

\begin{proof}
As we have assumed that $|\mvec{v}|^2\in\mathcal{V}_h^p$, we can set $w=m|\mvec{v}|^2/2$ in \eqr{\ref{eq:dis-weak-form}}. This assumption gives us,
  \begin{align}
    \int_{K_j} \frac{1}{2}m|\mvec{v}|^2 \pfrac{f_h}{t} \thinspace d\mvec{z} 
    + 
    \oint_{\partial K_j} \frac{1}{2}m|\mvec{v}|^2 \mvec{n}\cdot\hat{\mvec{F}}\thinspace dS 
    - 
    \int_{K_j} \underbrace{\gps \left( \frac{1}{2}m|\mvec{v}|^{2} \right) \cdot \gvec{\alpha}_h}_{q\mvec{v}\cdot \mvec{E}_h} f_h \thinspace d\mvec{z} 
    =
    0,    
  \end{align}
where we have used the continuity of $|\mvec{v}|^2$ in the second term, so there is no need to evaluate the basis function just inside the cell. Summing over all species and all phase-space cells, the surface term vanishes as the interface flux contributes terms with opposite signs exactly like our density conservation proof, and we obtain the required identity
  \begin{align}
    \sum_{j} \sum_s \frac{d}{dt} \int_{K_j} \frac{1}{2} m|\mvec{v}|^2 f_h \thinspace d\mvec{z}
    -
    \sum_{j} \int_{\Omega_j} \mvec{J}_h\cdot\mvec{E}_h \dtx
    =
    0.
  \end{align}
Note that we have carried out the sum in velocity space to compute the current density, leaving an integration and sum over configuration space cells. This identity would hold if we also reduced the Lax-Friedrichs flux to either a central flux or upwind flux numerical flux function. We emphasize that this proof only requires $|\mvec{v}|^2\in \mathcal{V}_h^p$ because that is all that is required to substitute $m|\mvec{v}|^2/2$ for the test function $w$. This may appear deceptive as the continuous proof in Proposition \ref{prop:continuous-energy-proof} appears to contain terms which are higher order than $|\mvec{v}|^2$. However, we note that the higher order terms in the continuous proof come from the substitution of the explicit expressions for $\gvec{\alpha}$, the phase space flow, whereas in the discrete proof presented here, we have left the discrete phase space flow $\gvec{\alpha}_h$ as is to stress the fact that $\gvec{\alpha}_h$ has its own basis function expansion. Thus, the higher order terms which are explicit in the continuous energy conservation proof are implicit here in the discrete energy conservation proof. If $|\mvec{v}|^2\in \mathcal{V}_h^p$, then $\mvec{v}\in \mathcal{V}_h^p$ as well, and terms in the discrete phase space flow such as $\mvec{v}$, the configuration space component of the phase space flow, can be exactly represented in terms of our basis function expansion.
\end{proof}
With these lemmas, the proof of total energy conservation is straightforward.
\begin{proposition} \label{prop:DiscreteVlasovEnergyProof}
If central-fluxes are used for Maxwell equations, and if $|\mvec{v}|^2\in \mathcal{V}_h^p$, the semi-discrete scheme conserves total (particles plus field) energy exactly.
\end{proposition}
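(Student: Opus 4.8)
The plan is to assemble total energy conservation directly from the two energy balances already established, Lemma~\ref{lem:em-e-cons} (electromagnetic energy) and Lemma~\ref{lem:kin-e-cons} (particle kinetic energy). The essential observation is that these two balances are coupled \emph{only} through the single term $\sum_j \int_{\Omega_j} \mvec{J}_h\cdot\mvec{E}_h \dtx$, which represents the rate of work done by the fields on the particles, and which appears with opposite signs in the two statements. Adding the balances should therefore cancel this coupling term identically and leave a closed conservation law, \emph{provided} central fluxes are used so that Lemma~\ref{lem:em-e-cons} holds with equality rather than as an inequality.

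Concretely, I would first invoke Lemma~\ref{lem:kin-e-cons}, whose hypothesis $|\mvec{v}|^2\in\mathcal{V}_h^p$ is exactly the hypothesis of the present proposition, to write the kinetic energy evolution as
\begin{align}
  \sum_{j} \sum_s \frac{d}{dt} \int_{K_j} \frac{1}{2} m|\mvec{v}|^2 f_h \thinspace d\mvec{z}
  =
  \sum_{j} \int_{\Omega_j} \mvec{J}_h\cdot\mvec{E}_h \dtx.
\end{align}
Next I would invoke Lemma~\ref{lem:em-e-cons} specialized to central fluxes, for which the electromagnetic energy balance is the exact equality
\begin{align}
  \sum_j \frac{d}{dt} \int_{\Omega_j} \left( \frac{\epsilon_0}{2}|\mvec{E}_h|^2 + \frac{1}{2\mu_0}|\mvec{B}_h|^2 \right) \dtx
  =
  -\sum_j \int_{\Omega_j} \mvec{J}_h\cdot\mvec{E}_h \dtx.
\end{align}
Adding the two displays causes the coupling terms to cancel, yielding
\begin{align}
  \sum_{j} \sum_s \frac{d}{dt} \int_{K_j} \frac{1}{2} m|\mvec{v}|^2 f_h \thinspace d\mvec{z}
  +
  \sum_j \frac{d}{dt} \int_{\Omega_j} \left( \frac{\epsilon_0}{2}|\mvec{E}_h|^2 + \frac{1}{2\mu_0}|\mvec{B}_h|^2 \right) \dtx
  =
  0,
\end{align}
which is the claimed exact conservation of total particle-plus-field energy.

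The only genuinely nontrivial point — and where I expect the main obstacle to lie — is verifying that the two occurrences of $\sum_j \int_{\Omega_j} \mvec{J}_h\cdot\mvec{E}_h \dtx$ are \emph{literally the same object}, so that they cancel exactly and not merely up to a discretization error. This rests on the consistency of the Vlasov--Maxwell coupling in the discrete scheme: the current $\mvec{J}_h$ that sources the discrete Amp\`ere law, \eqr{\ref{eq:dis-weak-E}}, and hence enters Lemma~\ref{lem:em-e-cons}, must coincide exactly with the velocity moment of $f_h$ that emerges in Lemma~\ref{lem:kin-e-cons} once the work term $q\mvec{v}\cdot\mvec{E}_h f_h$ has been integrated over velocity space. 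I would check that $\mvec{E}_h$ and $\mvec{J}_h$ live in the same configuration-space space $\mathcal{X}^p_{h}$ and are expanded in identical basis functions, and that the Gaussian quadrature used for the volume integrals is of high enough order to integrate the product $\mvec{J}_h\cdot\mvec{E}_h$ exactly; these facts together guarantee that no residual survives the cancellation. Finally, I would emphasize that the central-flux hypothesis enters \emph{only} through Lemma~\ref{lem:em-e-cons}: with upwind fluxes that lemma gives an inequality, the surface terms contribute a strictly non-positive regularization, and the scheme would dissipate rather than exactly conserve total energy. The hypothesis $|\mvec{v}|^2\in\mathcal{V}_h^p$ is likewise indispensable, since without it $m|\mvec{v}|^2/2$ is not an admissible test function and Lemma~\ref{lem:kin-e-cons} is unavailable.
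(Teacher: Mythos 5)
Your proposal is correct and follows essentially the same route as the paper, which likewise proves this proposition simply by adding Lemma~\ref{lem:em-e-cons} (with central fluxes, so equality holds) to Lemma~\ref{lem:kin-e-cons} and cancelling the common coupling term $\sum_j \int_{\Omega_j} \mvec{J}_h\cdot\mvec{E}_h \dtx$. Your additional remarks on verifying that the two occurrences of that term are the same discrete object go slightly beyond what the paper writes explicitly, but they are consistent with its construction and do not change the argument.
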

\begin{proof}
  Using Lemma\thinspace\ref{lem:em-e-cons} and \thinspace\ref{lem:kin-e-cons} demonstrates that the semi-discrete scheme conserves total energy exactly,
  \begin{align}
    \frac{d}{dt} \sum_{j} \sum_s \int_{K_j}\frac{1}{2} m|\mvec{v}|^2 f_h \thinspace d\mvec{z}
    +
    \frac{d}{dt} \sum_j \int_{\Omega_j} \left( \frac{\epsilon_0}{2}|\mvec{E}_h|^2 + \frac{1}{2\mu_0}|\mvec{B}_h|^2 \right) \dtx
    =
    0
  \end{align}
\end{proof}
We remark that if upwind fluxes are used for Maxwell's equations, then the energy will decay monotonically. This is because the bounding of the electromagnetic energy when using upwind fluxes results in a small amount of energy being lost from the system, i.e., when $\mvec{J}_h\cdot\mvec{E}_h$ is positive, more energy is lost from the electromagnetic fields than goes into the particles, and when $\mvec{J}_h\cdot\mvec{E}_h$ is negative, less energy goes into the electromagnetic fields than was given by the particles. Either way, the conversion of electromagnetic energy from, and to, particle energy results in some decrease of the total energy when upwinding is employed with Maxwell's equations. We note though that other authors have demonstrated this loss of energy is small for higher order schemes such as the DG method employed here \citep{Balsara2017104}, and we confirm this fact by studying the energetics of many of the benchmarks presented in Section \ref{sec:Results}. Further, just as with Lemma\thinspace\ref{lem:kin-e-cons}, Proposition\thinspace\ref{prop:DiscreteVlasovEnergyProof} requires that $|\mvec{v}|^2\in\mathcal{V}_h^p$. By employing at least piecewise quadratic polynomials in velocity space, we can insure that $|\mvec{v}|^2$ is in our space of test functions.
\begin{remark}
We note that while employing at least piecewise quadratic polynomials is a way to guarantee $|\mvec{v}|^2\in\mathcal{V}_h^p$, we can also project $|\mvec{v}|^2$ onto piecewise linear basis functions. In that case, it will be a projected energy which is a conserved quantity and thus the semi-discrete system can conserve energy even when utilizing piecewise linear basis functions. 
\end{remark}
\end{par}
\begin{par}
{\bf Lack Of Momentum Conservation.} Note that we have made no mention of whether the discrete system conserves momentum. This omission is because the discrete system does not in fact conserve momentum. We can show momentum non-conservation by choosing $w = m\mvec{v}$ and proceeding as we did with the continuous system,
  \begin{align}
    \int_{K_j} m\mvec{v} \pfrac{f_h}{t} \thinspace d\mvec{z} 
    + 
    \oint_{\partial K_j} m\mvec{v} \mvec{n}\cdot\hat{\mvec{F}}\thinspace dS 
    - 
    \int_{K_j} \gps \left( m\mvec{v} \right) \cdot \gvec{\alpha}_h f_h \thinspace d\mvec{z} 
    =
    0.    
  \end{align}
Since $\mvec{v}$ is continuous, upon summation over all phase space cells and species, we obtain,
  \begin{align}
    \sum_j \sum_s \int_{K_j} m_s \mvec{v} \pfrac{f_h}{t} \thinspace d\mvec{z} 
    - 
    \sum_j \int_{\Omega_j} \rho_{c_h} \mvec{E}_h + \mvec{J}_h \times \mvec{B}_h  \thinspace \dtx
    =
    0. 
  \end{align}
We can proceed exactly as we did with the continuous proof, but we note a key subtlety,
  \begin{align}
    \sum_j \int_{\Omega_j} \frac{d}{dt} \left(\sum_s \langle m_s \mvec{v} \rangle_s + \epsilon_0 \mvec{E}\times\mvec{B} \right) + \nabla\left( \frac{\epsilon_0}{2}|\mvec{E}_h|^2 + \frac{1}{2\mu_0}|\mvec{B}_h|^2 \right) - \nabla\cdot\left( \epsilon_0\mvec{E}_h\mvec{E}_h + \frac{1}{\mu_0}\mvec{B}_h\mvec{B}_h \right) \thinspace \dtx
    =
    0. \label{eq:DiscreteMomentumConservation}
  \end{align} 
Since the electric and magnetic fields are discontinuous across configuration space cell interfaces, we cannot use integration by parts to eliminate the latter two terms. In other words, integration by parts holds only locally and not over the whole domain due to the jumps in the fields across surfaces. However, it is important to note that from the form of this equation that in fact momentum conservation depends only weakly on velocity space resolution, and since the size of the discontinuities in the electric and magnetic fields decrease with increasing configuration space resolution, we can more strongly conserve momentum by increasing configuration space resolution.
\end{par}
\subsection{The time discretization of the semi-discrete Vlasov-Maxwell system} \label{sec:TimeStepping}
\begin{par}
Now, given the semi-discrete Vlasov-Maxwell system, all that remains is to specify a means of solving the resulting system of ordinary differential equations to advance the solution in time. We have implemented a third order strong-stability Runge-Kutta (SSP-RK3) method to generate a fully explicit update \citep{shu2002}. We find for a multi-stage method, SSP-RK3 is a good optimization of memory footprint considerations and accuracy for our time-discretization. For higher order time discretizations, the likely avenue is not more stages in a Runge-Kutta type scheme, but high order single step methods such as an Arbitrary DERivative (ADER) scheme like those employed in Balsara \emph{et al.} \citep{Balsara2016357}. 
\end{par}
\begin{par}
Since the update is fully explicit, it is worth briefly mentioning what constraints on our time step exist, i.e., what Courant-Friedrich-Lewy (CFL) conditions exist for this system of equations. Given that we are solving Maxwell's equations, a constraint in configuration space is the speed of light. In every problem of interest the speed of light will be the largest velocity in the system. This constraint can be written as,
\begin{align}
c \frac{\Delta t}{\Delta x} \leq \frac{1/d}{2p + 1}, \label{eq:MaxwellCFL}
\end{align}
where $c$ is the speed of light, $d$ is the number of configuration space dimensions, and $p$ is the polynomial order of our expansion. This particular CFL condition is similar to the constraint for the finite-difference-time-domain (FDTD) discretization of Maxwell's equations \citep{Yee:1966}, but with an additional safety factor for stability which depends upon the number of configuration space dimensions and polynomial order of our basis expansion. We note that because we will always choose the speed of light to be the largest velocity in the system, \eqr{\ref{eq:MaxwellCFL}} will be a more severe time step restriction than the constraint from the configuration space component of the phase space flow, i.e., the $\mvec{v} \cdot \nabla f$ term. However, \eqr{\ref{eq:MaxwellCFL}} is not the only CFL condition in a fully explicit update of the Vlasov-Maxwell system. Since we are solving for the distribution function on a combined configuration and velocity space grid, there is a corresponding maximum acceleration for a parcel of distribution function in velocity space. The CFL condition is then given by the phase space flow in velocity space,
\begin{align}
\frac{q_s}{m_s} \max_{x,y,z}(\mvec{E} + \mvec{v}_{max} \times \mvec{B}) \frac{\Delta t}{\Delta v} \leq \frac{1/d}{2p + 1},
\end{align}
where the max on the phase space flow is to determine in which direction in velocity space is the acceleration the largest, and this time $d$ is the number of velocity space dimensions. In other words, the maximum component of the total Lorentz force, $\mvec{E} + \mvec{v}\times \mvec{B}$, will also set a constraint on the time step. It is not uncommon, depending upon simulation parameters, for this constraint to be more restrictive than the speed of light CFL. For example, the $\mvec{v} \times \mvec{B}$ force can be quite large at the edge of velocity space for moderate magnetic field amplitudes depending upon how large the velocity grid is. 
\end{par}
\section{Efficient Implementation of Kinetic Update} \label{sec:Serendipity}
\begin{par}
Since the equation system we are solving is intrinsically high-dimensional, with many problems of interest requiring the solution of a four, five, or six dimensional hyperbolic equation, plus time, special care must be taken when dealing with the so-called ``curse of dimensionality,'' which refers to the exponential scaling of solver cost with dimension size. In a finite element framework, the ``curse of dimensionality'' is partly due to the polynomial expansion within a cell. A typical polynomial space employed is the Lagrange Tensor product space, formed by taking tensor products of a one-dimensional monomial expansion. Due to the nature of the tensor product, the number of degrees of freedom within a cell will scale as $(p+1)^d$, where $p$ is the polynomial order and $d$ is the dimension. For four, five, and six dimensional simulations, this scaling quickly becomes intractable. As mentioned previously, we employ a reduced basis set commonly referred to in the literature as the Serendipity basis set, which retains the formal convergence order of the Lagrange Tensor basis, while eliminating many monomials from the tensor product expansion in higher dimensions.\footnote{It should be noted that, while this is true for structured grids, the convergence order of the Serendipity expansion on an unstructured grid is more subtle. This is to say, arbitrary refinements of an unstructured grid will destroy the convergence order of the Serendipity expansion, but with care the convergence order of the space can be maintained \citep{Arnold2000}. All work considered herein is on structured quadrilaterals.} As such, the Serendipity basis' scaling in higher dimensions is much less severe than the Lagrange Tensor basis, scaling like
\begin{align}
N_p = \sum_{i=0}^{\min(d, p/2)} 2^{n-i} \binom{d}{i} \binom{p - i}{i},
\end{align}
where $N_p$ is the number of polynomials in the expansion within a cell, and as before, $p$ is the polynomial order and $d$ is the dimension. The essential feature of the Serendipity polynomial space is that the monomials with ``super-linear'' degree greater than $p$ are not needed to maintain the formal convergence order of the tensor product expansion. By super-linear degree, we mean the degree of the monomial without counting the linear monomials. For example, in three dimensions, with polynomial order two, the Lagrange Tensor basis will have monomials of the form $x^2 y^2 z$, but this monomial has superlinear degree four and is subsequently dropped by the Serendipity basis expansion. A comparison of the number of degrees of freedom within a cell for the two basis sets, given in Tables \ref{table:LagrangeTable} and \ref{table:SerendipityTable}, reveals a stark contrast, especially for five and six dimensions even for relatively low polynomial order.
\begin{table}
\begin{center}
\begin{tabular}{| l | l | l | l | l | l | l | l | l |}
\hline
\textbf{Lagrange Tensor} & \textbf{Polynomial Order} & 1 & 2 & 3 & 4  & 5 & 6 & 7 \\ \hline
\textbf{Dimension} & $(p+1)^d$& & & & & & & \\ \hline
2 & & 4 & 9 & 16 & 25 & 36 & 49 & 64 \\ \hline
3 & & 8 & 27 & 64 & 125 & 216 & 343 & 512 \\ \hline
4 & & 16 & 81 & 256 & 625 & 1296 & 2401 & 4096 \\ \hline
5 & & 32 & 243 & 1024 & 3125 & 7776 & 16807 & 32768 \\ \hline
6 & & 64 & 729 & 4096 & 15625 & 46656 & 117649 & 262144 \\ \hline
\end{tabular}
\caption{Number of degrees of freedom internal to a cell in the Lagrange Tensor polynomial space.}
\label{table:LagrangeTable}
\end{center}
\end{table}

\begin{table}
\begin{center}
\begin{tabular}{| l | l | l | l | l | l | l | l | l |}
\hline
\textbf{Serendipity Element} & \textbf{Polynomial Order} & 1 & 2 & 3 & 4 & 5 & 6 & 7 \\ \hline
\textbf{Dimension} & $\sum_{i=0}^{\min(d, p/2)} \binom{d}{i} \binom{p - i}{i} $& & & & & & & \\ \hline
2 & & 4 & 8 & 12 & 17 & 23 & 30 & 38 \\ \hline
3 & & 8 & 20 & 32 & 50 & 74 & 105 & 144 \\ \hline
4 & & 16 & 48 & 80 & 136 & 216 & 328 & 480 \\ \hline
5 & & 32 & 112 & 192 & 352 & 592 & 952 & 1472 \\ \hline
6 & & 64 & 256 & 448 & 880 & 1552 & 2624 & 4256 \\ \hline
\end{tabular}
\caption{Number of degrees of freedom internal to a cell in the Serendipity polynomial space.}
\label{table:SerendipityTable}
\end{center}
\end{table}
\end{par}
\begin{par}
The actual construction of the algorithm now proceeds in two steps. Having chosen our polynomial space, we must choose a criteria for the creation of the polynomial expansion. Two common choices are a \emph{modal} and \emph{nodal} expansion. These two formulations can be shown to be equivalent mathematically, i.e., there exists a linear transformation between the two choices of basis expansions. However, their comparative computational construction is different. A \emph{modal} expansion specifies only a set of polynomials, which can be as simple as $x^{p-1}$ for polynomial order $p$, to more complicated special basis sets such as Legendre or Chebyshev polynomials. Here we choose the \emph{nodal} expansion, specifying a set of nodes on a reference quadrilateral element for which a polynomial takes the value of 1 at one node, and 0 at all other nodes. The node location is somewhat arbitrary, though certain nodal configurations provide key computational advantages. Here we choose such a nodal layout, schematically given in 1D, 2D, and 3D in Figure \ref{fig:SerendipitySchematic}.
\begin{figure}
\centerline{
\includegraphics[width=\linewidth]{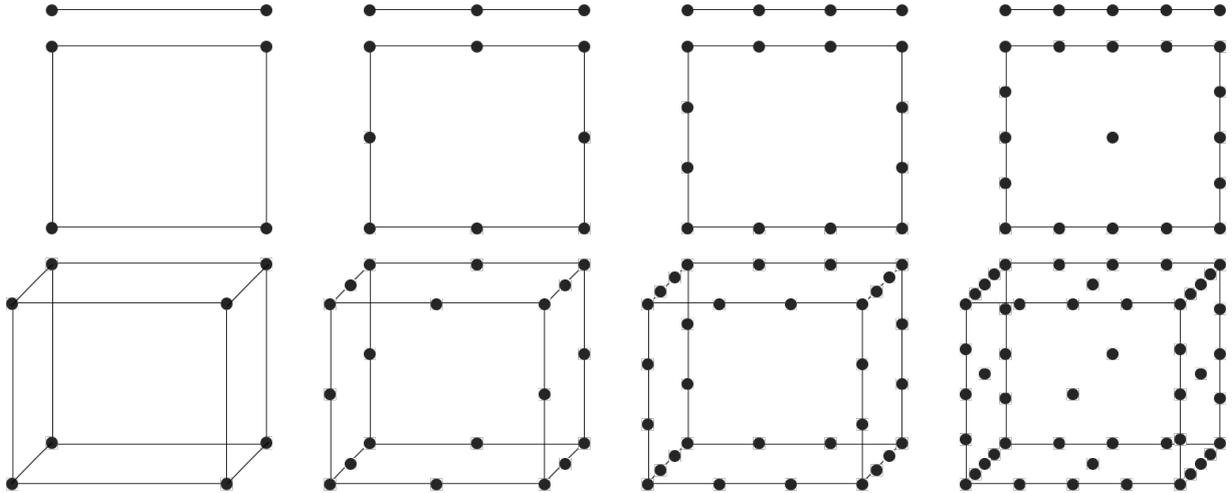}
}
\caption{Schematic drawing of the nodal locations for the Serendipity basis in 1D (top), 2D (middle), and 3D (bottom) for polynomial orders one (far left), two (middle left), three (middle right), and four (far right).} 
\label{fig:SerendipitySchematic}
\end{figure}
This particular nodal layout is constructed such that every higher dimensional reference quadrilateral element is built from the lower dimensional reference quadrilateral elements, so that the lower dimensional faces of a reference quadrilateral element also form a unisolvent expansion, i.e., the polynomials local to the face form a complete basis of the solution space. For example, consider the pictorial representation of the 3D reference element. Each 2D face of the reference 3D element is exactly the 2D reference element for that particular polynomial order. This same recursive approach can be applied to higher dimensions as well\footnote{This fact is true in general, but higher polynomial orders may modify the lower dimensional reference elements such that the recursive algorithm is not quite as obvious as the one presented here. Just as polynomial order 4 introduces an interior node to a reference 2D element, so can higher polynomial orders introduce interior nodes to higher dimensional reference elements which would have to be taken into account in the recursive generation of the reference element. For up to polynomial order 4 though, every higher dimensional object can easily be generated as described, with 2D reference elements making up the faces of a 3D reference element, 3D reference elements making up the faces of a 4D reference element, and so on. We emphasize that this is likely sufficient for our problems since the Serendipity basis in four, five, and six dimensions, with polynomial order 4, involves the solution of a large number of degrees of freedom per cell, and due to the same performance and cost considerations that motivated the use of the Serendipity basis, we seek to avoid evolving thousands of degrees of freedom per cell.}, with the reference 4D element being comprised of reference 3D elements for each of the 4D element's eight 3D faces, and a reference 5D element consisting of a reference 4D element for all ten 4D faces of a 5D element. This approach has the advantage of greatly simplifying surface integral calculations. Since every higher dimensional element is recursively generated from lower dimensional elements, every face of a higher dimensional element, the 2D faces in 3D, or the 4D faces in 5D, forms a unisolvent expansion for that surface. We thus only require the nodal information local to that face and can reduce the number of multiplications in the evaluation of the surface integrals by a somewhat sizable fraction. In 5D for instance, to advance the solution of the distribution function in time, one performs one 5D volume integral and ten 4D surface integrals. By requiring only the nodal information on a 4D surface, the number of degrees of freedom in evaluating the surface integral given in Table \ref{table:SerendipityTable} is reduced by a factor of two to three.
\end{par}
\begin{par}
The second step is then, given the choice of polynomial expansion, determining a means of evaluating the volume and surface integrals in the DG spatial discretization. Gaussian quadrature is a natural choice. The standard approach is to establish a tensor product of $n$ quadrature points in each direction for every dimension one wishes to integrate, and this approach will integrate exactly monomials of a particular order, $2n - 1$ for Gauss-Legendre or $2n - 3$ for Gauss-Lobatto for example, regardless of the dimension in which monomial varies. Here, we employ a Gauss-Legendre integration scheme, but we note that this strategy quickly becomes untenable for the same reason the Lagrange tensor polynomial space is prohibitively expensive for solving the Vlasov equation: the ``curse of dimensionality.'' For example, consider integrating the volume term in five dimensions with second order polynomials. Naively, one expects this to require the integration of monomials with degree $3p = 6$ in each dimension, because both $\gvec{\alpha_h}$ and $f_h$ have polynomial expansions, thus requiring 4 quadrature points in each dimension, or a total of $4^5 = 1024$ quadrature points, to avoid under-integrating the volume term in a cell. Given that the scaling of the computation of the volume integral in a cell is $\mathcal{O}(N_q N_p)$, the number of operations per phase space cell becomes quite large for modest polynomial orders in high dimensions. One way to reduce the number of operations dramatically would be to accept aliasing errors from under-integrating the volume and surface integrals in the DG discretization of the Vlasov equation. Accepting aliasing errors is a common approach to reducing the cost of fluid codes, where the nodes which specify the polynomial expansion in a cell will be co-located with quadrature points, and the interpolation to quadrature points operation is avoided, usually at the cost of under-integrating nonlinear terms. However, this is not an acceptable solution for numerically solving the Vlasov equation because the conservation relations derived in Section \ref{sec:DGVlasov} rely on the integrals being computed correctly, i.e., the conservation relations in the Vlasov-Maxwell system are implicit, as opposed to the explicit conservation relations which form fluid systems of equations. Under-integration of the DG discretization of the Vlasov equation can lead to poor particle and energy conservation, which can eventually lead to numerical instability. 
\end{par}
\begin{par}
Fortunately, the naive requirement to avoid under-integration of terms in our DG discretization is quite far from reality due to the structure present in the Vlasov equation. Consider the advection in velocity space,
\begin{align}
\int_{K_j} \gvs w \cdot \gvec{\alpha_h} f_h  \thinspace d\mvec{z} = \int_{K_j} \gvs w \cdot \frac{q}{m} (\mvec{E}_h + \mvec{v} \times \mvec{B}_h) f_h  \thinspace d\mvec{z}.
\end{align}
To avoid over-complicating the notation, we have dropped the species index. We can see clearly that in fact, integrating monomials of degree $3p$ is only required in configuration space, and in velocity space, we require at most integrating monomials with degree $2p + 1$. We can thus employ an anisotropic grid for performing the quadrature of the advection in velocity space, using only the minimum number of quadrature points required along each direction of integration. Table \ref{table:AnisotropicQuadratureTable} considers the impact anisotropic quadrature has on a few combinations of velocity space and configuration space dimensions. While there is no gain for polynomial order 1, the gain is a moderate improvement for other combinations compared to isotropic quadrature, as shown in Table \ref{table:AnisotropicQuadratureReduction}. A similar reduction in the number of quadrature points required can be demonstrated for the surface integrals. There is an even larger reduction in the number of operations with the configuration space component of the advection in phase space on a structured, Cartesian, grid. Since $\mvec{v}$ is a coordinate, it requires no expansion in basis functions.
\begin{table}
\begin{center}
\begin{tabular}{| l | l | l | l | l | l | l |}
\hline
& \textbf{Base Polynomial Order} & 1 & 2 & 3 & 4 & 5 \\ \hline
\textbf{Dimension} & $((3p+1)/2)^{X_{dim}}\times((2p+2)/2)^3$& & & & & \\ \hline
1X3V & & 16 & 108 & 320 & 875 & 1728 \\ \hline
2X3V & & 32 & 432 & 1600 & 6125 & 13824 \\ \hline
3X3V & & 64 & 1728 & 8000 & 42875 & 110592 \\ \hline
\end{tabular}
\caption{Number of quadrature points required for a few dimension combinations to integrate the volume term for the advection of the distribution function in velocity space.}
\label{table:AnisotropicQuadratureTable}
\end{center}
\end{table}
\begin{table}
\begin{center}
\begin{tabular}{| l | l | l | l | l | l | l |}
\hline
\textbf{Cost(Original/New)} & \textbf{Base Polynomial Order} & 1 & 2 & 3 & 4 & 5 \\ \hline
\textbf{Dimension} & & & & & & \\ \hline
1X3V & & 1 & $\sim 2.37$ & $\sim 1.95$ & $\sim 2.74$ & $\sim 2.37$ \\ \hline
2X3V & & 1 & $\sim 2.37$ & $\sim 1.95$ & $\sim 2.74$ & $\sim 2.37$ \\ \hline
3X3V & & 1 & $\sim 2.37$ & $\sim 1.95$ & $\sim 2.74$ & $\sim 2.37$\\ \hline
\end{tabular}
\caption{Relative gain in the reduction of the number of quadrature points required to integrate the volume term for the advection of the distribution function in velocity space.}
\label{table:AnisotropicQuadratureReduction}
\end{center}
\end{table}
Upon rewriting the term then as,
\begin{align}
\int_{K_j} \nabla w \cdot \mvec{v} f_h  \thinspace d\mvec{z} = \int_{K_j} \nabla w \cdot (\mvec{v} - \mvec{v}_{\textrm{center}}) f_h d\mvec{z} + \int_{K_j} \nabla w \cdot \mvec{v}_{\textrm{center}} f_h  \thinspace d\mvec{z},
\end{align}
we note that both of these terms are such that the integrals can be pre-computed. Here, $\mvec{v}_{\textrm{center}}$ is velocity in the center of the cell. In other words, upon expansion of the distribution function in basis functions given in \eqr{\ref{eq:distf-expansion}}, to update the distribution function in configuration space, we only require two matrices,
\begin{align}
G_{jk} & = \int_{K_j} \nabla w_j \cdot (\mvec{v} - \mvec{v}_{\textrm{center}}) w_k  \thinspace d\mvec{z}, \label{eq:StreamingMatrix1}\\
N_{jk} & = \int_{K_j} \nabla w_j \cdot \mvec{v}_{\textrm{center}} w_k  \thinspace d\mvec{z}, \label{eq:StreamingMatrix2}
\end{align}
along with the cell center coordinate. We remark that $N_{jk}$ is commonly referred to as the grad-stiffness matrix in finite element literature, weighted by the constant cell center velocity coordinate. In rewriting the configuration space update in this way we reduce the update from $\mathcal{O}(N_q N_p)$ operations to $\mathcal{O}(N_p^2)$ operations, where $N_q$ is the number of quadrature points required to exactly integrate either the volume or surface configuration space terms. $N_q$ for the configuration space update scales like $(p+1)^d$, so the gain in writing the update this way is identical to the gain we obtained from employing the Serendipity polynomial space instead of the Lagrange Tensor basis. We note that similar manipulations can be done for the surface integrals in the DG discretization of the Vlasov equation to reduce the number of operations in those components of the update. For example, the quadrature required for the force term surface integrals is reduced not only by the exploitation of anisotropic quadrature requirements, but also by the fact that since as mentioned previously, each set of surface nodes is itself a unisolvent expansion on that surface, we use only that nodal information in the evaluation of the surface integrals. Likewise, the streaming term surface integrals can be precomputed in a similar fashion to the volume integrals due to the fact that the configuration space component of the Lax-Friedrichs flux can be computed analytically,
\begin{align}
\mvec{n}\cdot\hat{\mvec{F}}(\gvec{\alpha}_h^- f_h^-, \gvec{\alpha}_h^+ f_h^+) = 
\begin{cases}
\mvec{n}\cdot\mvec{v} f^- \quad \textrm{if } \mvec{v} > 0, \\
-\mvec{n}\cdot\mvec{v} f^+ \quad \textrm{if } \mvec{v} < 0.
\end{cases}
\end{align}
For the Cartesian grids employed here $\mvec{n}\cdot\mvec{v}$ is simple to calculate, e.g., surface integrals computed at constant $x$ in configuration space just involve the velocity coordinate $v_x$. Given the analytic form of the Lax-Friedrichs flux, similar matrices to \eqr{\ref{eq:StreamingMatrix1}} and \eqr{\ref{eq:StreamingMatrix2}} can be derived and multiplied by $f^-$ or $-f^+$ depending upon the sign of the velocity,
\begin{align}
G_{{jk}_{surface}} & = \int_{\partial K_j} \mvec{n} \cdot (\mvec{v} - \mvec{v}_{\textrm{center}}) w^-_j w_{k_{surface}} \thinspace dS_{config}, \label{eq:StreamingSurfaceMatrix1}\\
N_{{jk}_{surface}} & = \int_{\partial K_j} \mvec{n} \cdot \mvec{v}_{\textrm{center}} w^-_j w_{k_{surface}} \thinspace dS_{config}, \label{eq:StreamingSurfaceMatrix2}
\end{align}
where $w_{k_{surface}}$ are the surface basis functions, i.e., basis functions local to a surface defined by the nodal information on a surface, and we have introduced the surface element $dS_{config}$ to emphasize these surface integrals are for evaluating the configuration space component of the phase space flow and thus surfaces at some constant configuration space dimension. Note that $w_j^-$ is the $j^{\textrm{th}}$ basis function of the full volume expansion, but evaluated at the surface of interest so that the dimensions of this matrix are $N_p \times N_{p_{surface}}$. While individually, each of these reductions in the number of operations per time step is modest, taken altogether, they provide a substantial gain over a traditional finite element approach using a tensor product basis.    
\end{par}
\begin{par}
A final note on the implementation involves the coupling of the Vlasov solver and the solver for Maxwell's equations. Since we require the first velocity moment to advance Maxwell's equations in time, it is worthwhile to consider efficient algorithms for the calculation of velocity moments. For simplicity, let us first consider how one would calculate the zeroth velocity moment, the density. The density in this case is related to the charge density defined earlier,
\begin{align}
n_s(\mvec{x}) & =  \langle 1 \rangle_s \label{eq:densityDefinition} \\
\rho_s(\mvec{x}) & = \sum_s q_s n_s(\mvec{x}). 
\end{align}
In the discrete system, the density $n_h$ belongs to the same space as the electromagnetic fields $\mathcal{X}^p_{h}$, and thus the expansion of the density involves the basis expansion in only configuration space, i.e.,
\begin{align}
n_h(\mvec{x}, t) = \sum_m n_m(t) \varphi_m(\mvec{x}).
\end{align}
The goal is then to determine $n_m(t)$. We do this by multiplying the definition of the discrete zeroth moment on both sides by $\varphi(\mvec{x})_n$ and integrating over configuration space. Note that by discrete zeroth moment, we mean that since phase space has been partitioned into a phase space mesh $\mathcal{T}$ with cells $K_j \in \mathcal{T}, j = 1,\dots, N$, \eqr{\ref{eq:densityDefinition}}, which is a global operation in velocity space, much also be partitioned. We do this by summing over the contributions of the velocity integral inside each cell $K_j$, but only for a fixed position in configuration space, i.e.,
\begin{align}
\sum_m n_m(t) \int \varphi_m(\mvec{x})  \varphi_n(\mvec{x})  \thinspace \dtx = \sum_{K_j \backslash \mathcal{V}} \sum_k f_k(t) \int w_k(\mvec{x}, \mvec{v}) \varphi_n(\mvec{x})  \thinspace \dtx \dtv,
\label{eq:momentAlgorithm}
\end{align}
where our choice of notation $K_j \backslash \mathcal{V}$ implies we will sum only over the space orthogonal to configuration space $\Omega$ for each cell $K_j$. On the left hand side of the \eqr{\ref{eq:momentAlgorithm}}, we note that the integral is what is commonly referred to in finite element literature as the mass matrix, but only in configuration space,
\begin{align}
M_{mn} = \int \varphi_m(\mvec{x})  \varphi_n(\mvec{x})  \thinspace \dtx \label{eq:configMassMatrix}.
\end{align}
The right hand side is what we define as the zeroth moment matrix,
\begin{align}
A^0_{nk} = \int w_k(\mvec{x}, \mvec{v}) \varphi_n(\mvec{x})  \thinspace \dtx \dtv.
\end{align}
We remark that, on uniform, structured grids, the zeroth moment matrix is the same in every cell and thus the algorithm defined above can be pre-computed with minimal memory storage. Non-uniform structured grids would require a small generalization in the form of a constant scale factor. Higher moments, such as the first velocity moment needed to compute the current for Maxwell's equations,
\begin{align}
n_s(\mvec{x}) \mvec{u}(\mvec{x})_s & =  \langle \mvec{v} \rangle_s \label{eq:fluxDefinition} \\
\mvec{J}_s(\mvec{x}) & = \sum_s q_s n_s(\mvec{x}) \mvec{u}_s(\mvec{x}),
\end{align}
are slightly more subtle. Note that a first moment matrix defined as,\
\begin{align}
A^1_{nk} = \int \mvec{v} w_k(\mvec{x}, \mvec{v}) \varphi_n(\mvec{x})  \thinspace \dtx \dtv \notag, 
\end{align}
would be different in every velocity space cell, even on uniform, structured grids. A similar rearrangement to the one employed for the configuration space component of the advection in phase space provides a significant savings in terms of memory,
\begin{align}
A^1_{nk} = \int (\mvec{v} - \mvec{v}_{\textrm{center}}) w_k(\mvec{x}, \mvec{v}) \varphi_n(\mvec{x}) \thinspace \dtx \dtv + \mvec{v}_{\textrm{center}} \int w_k(\mvec{x}, \mvec{v}) \varphi_n(\mvec{x})  \thinspace \dtx \dtv. 
\end{align}
On a Cartesian mesh, it is most obvious how this manifests algorithmically if we break up the update into its vector components. In other words, if the system has three velocity dimensions, then
\begin{align}
A^1_{nk_i} = & \int (v_i - v_{\textrm{center}_i}) w_k(\mvec{x}, \mvec{v}) \varphi_n(\mvec{x})  \thinspace \dtx \dtv + v_{\textrm{center}_i} \int w_k(\mvec{x}, \mvec{v}) \varphi_n(\mvec{x})  \thinspace \dtx \dtv \label{eq:fullMoment1Matrix},
\end{align}
where $i = (x, y, z)$. This result can be generalized to higher moments such as the stress tensor and energy density,
\begin{align}
\overleftrightarrow{\mvec{S}}(\mvec{x})_s & =  m_s \langle \mvec{v} \mvec{v} \rangle_s \\
\epsilon(\mvec{x})_s & = \frac{1}{2}m_s  \langle |\mvec{v}|^2 \rangle_s \label{eq:energyDefinition},
\end{align} 
and beyond.
\end{par}
\begin{par}
Before moving on to benchmarks, we summarize the results of this implementation section by synthesizing all the steps above into a coherent algorithm, i.e., how each operation described manifests in a single Runge-Kutta stage. Define $f_{old}$ and $f_{new}$ to be the values of the distribution function before and after the Runge-Kutta stage respectively.
\begin{enumerate}
\item Compute current from old value of the distribution function
\begin{itemize}
\item The current is computed by summing over the contribution from the first moment for each species,
\begin{align}
J_{m_{i_{old}}} = \sum_s q_s (M_{mn}^{-1} A^1_{nk_i} f_{k_{s_{old}}}) \quad \textrm{where } i = (x,y,z) \textrm{ for } V_{dim} = 3 \label{eq:currentIncrement},
\end{align}
where $M_{mn}^{-1}$ is the inverse of the matrix given by \eqr{\ref{eq:configMassMatrix}} and $A^1_{nk_i}$ is given by \eqr{\ref{eq:fullMoment1Matrix}}.
\end{itemize}
\item Advance the Vlasov equation one time-step for each species.
\begin{itemize}
\item Compute the volume integral component for the streaming term,
\begin{align}
f_{m_{new}} = f_{m_{old}} + \Delta t (M_{mj_{phase}}^{-1} G_{jk} f_{k_{old}} + M_{mj_{phase}}^{-1} N_{jk} f_{k_{old}}),
\end{align}
where $G_{jk}$ is given by \eqr{\ref{eq:StreamingMatrix1}} and $N_{jk}$ is given by \eqr{\ref{eq:StreamingMatrix2}}. The matrix $M_{mj}^{-1}$ in this case is related to \eqr{\ref{eq:configMassMatrix}}, but this matrix involves phase space basis functions,
\begin{align}
M_{mj_{phase}} = \int w_m(\mvec{z})  w_j(\mvec{z})  \thinspace d\mvec{z} \label{eq:phaseMassMatrix}.
\end{align}
\item To obtain the volume integral component of the force term, interpolate the old distribution function, $f_{old}$, and the old velocity component of the phase space flow, $q_s/m_s (\mvec{E}_{old} + \mvec{v} \times \mvec{B}_{old})$, onto Gauss-Legendre abscissas, but with the specified anisotropy given by Table \ref{table:AnisotropicQuadratureTable}, and evaluate the integrals using Gauss-Legendre quadrature. 
\item Increment $f_{m_{new}}$ with this additional contribution multiplied by $\Delta t M_{mj_{phase}}^{-1}$.
\item Compute the surface integral component for the streaming term,
\begin{align}
f_{m_{new}} = f_{m_{old}} + \Delta t (M_{mj_{phase}}^{-1} G_{jk_{surface}} f_{k_{old}} + M_{mj_{phase}}^{-1} N_{jk_{surface}} f_{k_{old}}),
\end{align}
where $G_{jk_{surface}}$ is given by \eqr{\ref{eq:StreamingSurfaceMatrix1}} and $N_{jk_{surface}}$ is given by \eqr{\ref{eq:StreamingSurfaceMatrix2}}. The matrix $M_{mj_{phase}}^{-1}$ is identical to the matrix used in the volume component of the streaming term, i.e., the inverse of \eqr{\ref{eq:phaseMassMatrix}}.
\item To obtain the surface integral component of the force term, we consider the nodal components of $f_{old}$ and the velocity component of the phase space flow, $q_s/m_s (\mvec{E}_{old} + \mvec{v} \times \mvec{B}_{old})$, local to the surface of interest and interpolate that nodal information onto Gauss-Legendre abscissas, again with the specified anisotropy so that the minimum number of quadrature points required are used along configuration and velocity space dimensions. We then evaluate the Lax-Friedrichs flux at each quadrature point, taking into account the sign of $q_s/m_s (\mvec{E}_{old} + \mvec{v} \times \mvec{B}_{old})$ so that the upwind direction is computed correctly, and compute the surface integrals using Gauss-Legendre quadrature.
\item Finally, as before, we increment $f_{m_{new}}$ with this result multiplied by $\Delta t M_{mj_{phase}}^{-1}$.
\end{itemize}
\item Advance Maxwell's Equations one time-step.
\begin{itemize}
\item Since Maxwell's equations are linear, we employ standard techniques for linear advection equations, with the caveat that the form of the numerical flux is either central fluxes, e.g. the simple average of the left and right fields, or upwind, as defined by Eqns.\thinspace(\ref{eq:r-e2}-\ref{eq:r-b3}).
\item We then increment the $i^{\textrm{th}}$ component of the electric field with the current derived from $f_{old}$ given by \eqr{\ref{eq:currentIncrement}}, i.e.,
\begin{align}
E_{m_{i_{new}}} = E_{m_{i_{old}}} + \Delta t \frac{J_{m_{i_{old}}}}{\epsilon_0}.
\end{align}
\end{itemize}
\end{enumerate} 
\end{par}
\begin{par}
This procedure is repeated for each Runge-Kutta stage and is extensible to an arbitrary order multi-step method, though the memory usage of a multi-stage method beyond SSP-RK3 may be prohibitive. We note that the inverses of the various mass matrices which appear do not need to be multiplied through for each time-step on the structured, Cartesian, grids considered in this paper and can be pre-factored into the form of the update step. Further, it is worth acknowledging why for the subset of meshes considered herein we do not pre-evaluate all the integrals present, since polynomials on structured grids can be pre-computed easily. The simple answer is two-fold. The first component of the answer is that for the nodal basis expansions considered, the pre-computation of all the integrals does not provide significant savings in terms of the number of operations required to perform a Runge-Kutta stage. Roughly, this algorithms most expensive component, the non-linear force term, scales like $\mathcal{O}(N_q N_p)$ where $N_q$ is the number of quadrature points, and $N_p$ is the number of basis functions in the volume expansion. Pre-computation of this component of the update reduces the number of multiplications required to $\mathcal{O}(N_c N_p^2)$ where $N_c$ is the number of basis functions in configuration space, i.e., the number of basis functions employed in the expansion of Maxwell's equations and the moments of the distribution function. Upon comparison of these two scalings, we find at best for the high dimensional problems considered a factor of 2 savings in the number of operations. However, the second component of the answer is that the memory requirements of the pre-computation update are more severe due to the need to store a large $N_c N_p^2$ rank three tensor for each velocity dimension. For the form of the basis expansion we employ in this paper, we found that the additional memory requirements were not worth the small reduction in the number of multiplications. This may not be true in general, as a modal basis expansion in orthogonal polynomials could have a significant reduction in the memory footprint due to the resulting sparsity of the tensor. We will consider extensions of the algorithm to different basis expansions in a subsequent publication.  
\end{par}
\section{Applications}\label{sec:Results}
\begin{par}
In the following sections we present a number of benchmarks and studies as verification of our discretization of the Vlasov-Maxwell system. The following set of simulations is by no means exhaustive. Further verification of the Vlasov-Maxwell solver can be found in \citep{Cagas:2017, Cagas:2017b}, which employ the Vlasov-Maxwell solver to study the plasma sheath and the current filamentation instability (CFI), a drift induced instability caused by counter-streaming beams, which can serve to amplify magnetic fields. We note in particular the agreement \citep{Cagas:2017} finds between linear theory of the CFI for long wavelengths and the linear growth of the instability in simulations, as well as the comparable nonlinear saturation amplitudes between two-fluid and Vlasov simulations of the instability. Further details of the nonlinear regime of the CFI, including a detailed analysis of the nonlinear saturation of the CFI can be found in \citep{Cagas:2017b}. 
\end{par}
\subsection{Scaling studies}
\begin{par}
We briefly note a few scaling benchmarks of the Vlasov-Maxwell solver. Using a 1X3V setup, meaning one configuration space dimension $(x)$ and three velocity dimensions $(v_x, v_y, v_z)$, with $256 \times 16^3$ cells and polynomial order 4 with the Serendipity space, 136 degrees of freedom per cell, a strong scaling study demonstrates the solver's ability to scale to thousands of processors. We can likewise perform a weak scaling study, starting from a box size of $64 \times 8^3$ box on 16 processors and increasing to a the box size of $1024 \times 16^3$ on 2048 processors. Each simulation is run for hundreds of time steps to minimize the ratio of time spent initializing the simulation to time spent taking time steps. Results for these two scaling studies are given in Figure \ref{fig:ScalingResults}.
\begin{figure}
\centerline{
\includegraphics[width=0.5\linewidth]{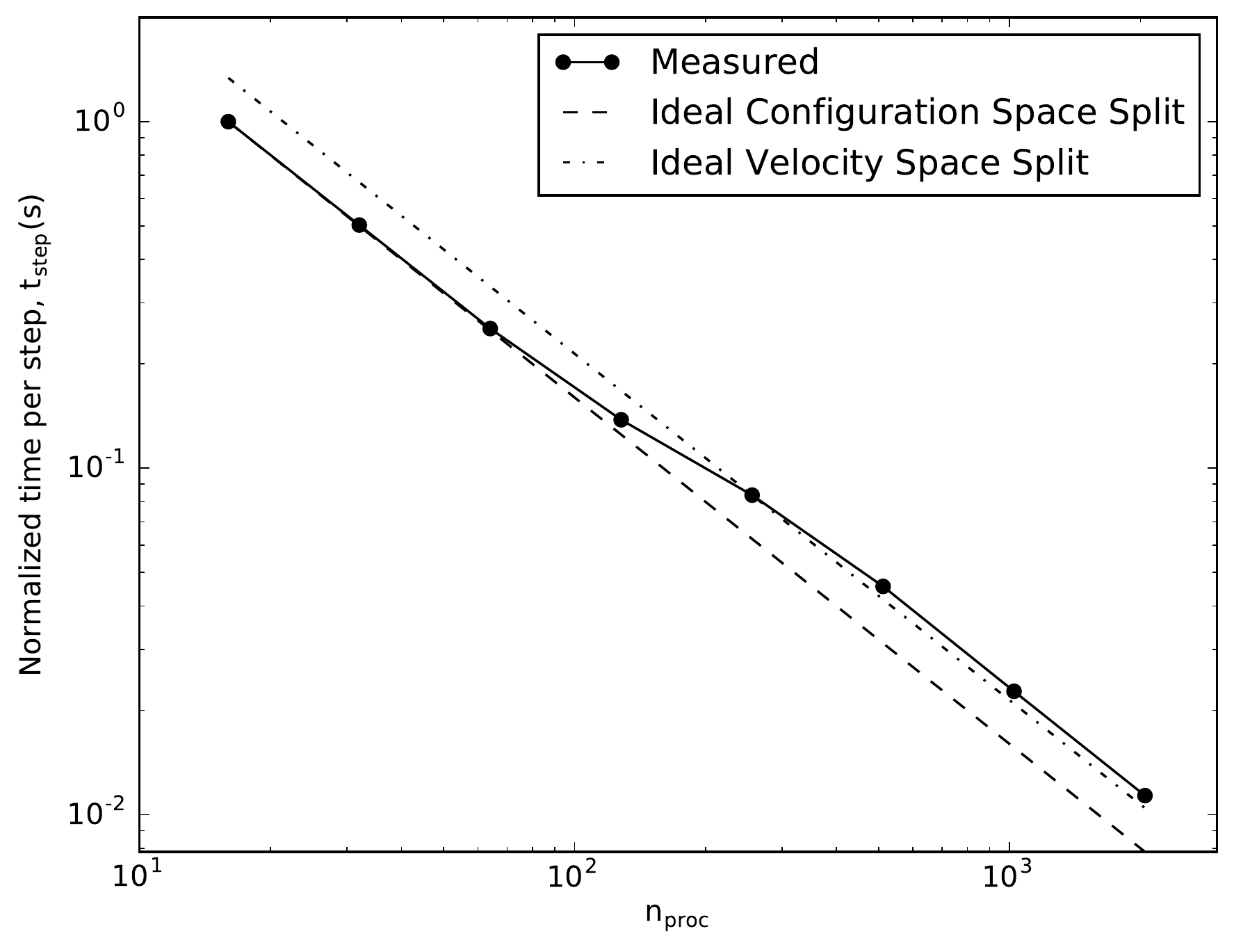}
\includegraphics[width=0.5\linewidth]{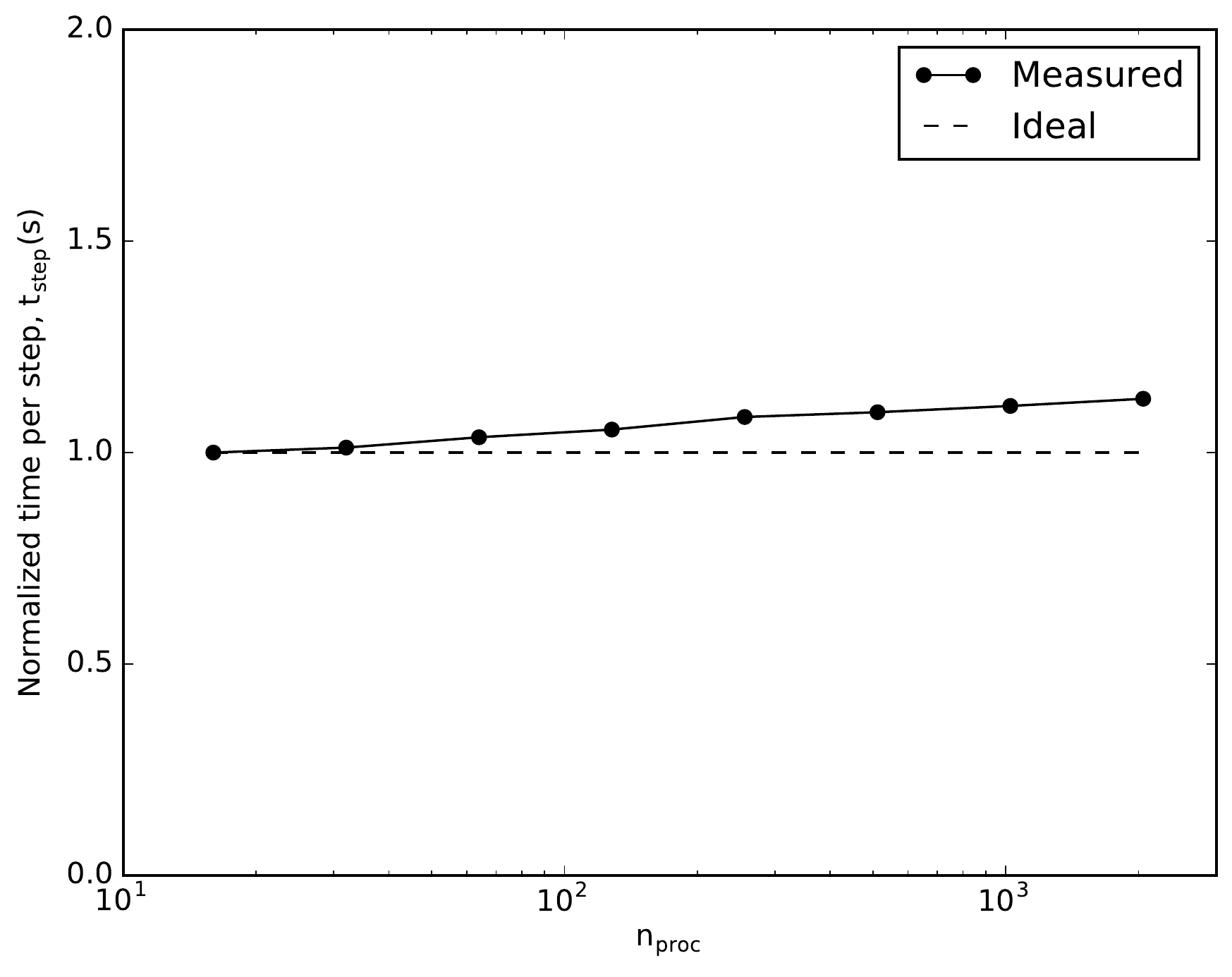}
}
\caption{Strong (left) and weak (right) scaling of the Vlasov-Maxwell solver for a prototypical 1X3V simulation. The y-axes are normalized to the time of the first simulation in the scaling, i.e., the $256 \times 16^3$ box on 16 cores for the strong scaling study, and the $64 \times 8^3$ box on 16 cores for the weak scaling study.} 
\label{fig:ScalingResults}
\end{figure}
\end{par}
\subsection{Electrostatic applications}
\begin{par}
Our first tests of the algorithm will be on systems for which the magnetic field is not dynamically important. While we solve Maxwell's equations, Eqns.\thinspace(\ref{eq:dbdt})-(\ref{eq:divB}), in their entirety, it is not incorrect to draw analogy between the tests presented below, and the so-called Vlasov-Ampere system, where Maxwell's equations reduce to,
\begin{align}
  \pfrac{\mvec{E}}{t} = \frac{\mvec{J}}{\epsilon_0}.
\end{align}
For all the following benchmarks, the distribution function is initialized as a drifting Maxwellian,
\begin{align}
f_s (\mvec{x}, \mvec{v}, t = 0) = \frac{n_s(\mvec{x})}{\sqrt{2 \pi v_{th_s}(\mvec{x})^2}} \exp\left(-\frac{|\mvec{v} - \mvec{u}_s(\mvec{x})|^2}{2 v_{th_s}(\mvec{x})^2} \right), \label{eq:initMaxwell}
\end{align}
where $n_s(\mvec{x})$ is the number density defined in \eqr{\ref{eq:densityDefinition}}, and  $\mvec{u}_s(\mvec{x}), v_{th_s}(\mvec{x})$ are the flow and thermal velocity of species $s$ respectively. The flow naturally follows from \eqr{\ref{eq:fluxDefinition}},
\begin{align}
\mvec{u}_s(\mvec{x}) = \frac{\langle \mvec{v} \rangle_s}{n_s(\mvec{x})},
\end{align}
and the thermal velocity can be compute from the second moment of the distribution function,
\begin{align}
v_{th_s}(\mvec{x})^2 = \frac{\langle |\mvec{v}|^2 \rangle_s}{V_{dim} n_s(\mvec{x})} - |\mvec{u}_s(\mvec{x})|^2,
\end{align}
where the constant $V_{dim}$ is determined by the number of velocity dimensions of the simulation, either one, two, or three. Alternatively, the thermal velocity can be defined in terms of a specified temperature,
\begin{align}
v_{th_s} = \sqrt{\frac{T_s}{m_s}},
\end{align}
where $T_s$ is the temperature of the species.
\end{par}
\subsubsection{Momentum and energy conservation tests}\label{sec:ConservationTests}
\begin{par}
We first consider a simple problem to demonstrate numerically the conservation laws derived earlier. We will initialize a distribution function with strong asymmetric flows to determine how well momentum and energy are conserved in our implementation of the energy, but not momentum, conserving DG algorithm. In one configuration space dimension ($x$) and one velocity space dimension ($v_x$), or 1X1V, we initialize a Maxwellian distribution function with density
\begin{align}
n(x, t = 0) & = n_0 (1 + \exp(-\beta_l (x - x_m)^2)) \qquad x < x_m, \notag \\
& =  n_0 (1 + \exp(-\beta_r (x - x_m)^2)) \qquad x>x_m \label{eq:conservationTestDensityProfile},
\end{align}
for both the protons and electrons. The length of the box is $L_x = c/\omega_{pp} = d_p$, where $\omega_{pp} = \sqrt{n_0 e^2/\epsilon_0 m_p}$ is the proton plasma frequency. We choose $x_m = 0.25 L_x$, $\beta_l = 0.75/d_e^2, \beta_r = 0.075/d_e^2$, and $n_0 = 1.0$, where $d_e = d_p\sqrt{m_e/m_p}$ is the electron inertial length. The drift is given by $u(x, t =0) = v_{th_e}$ for both species. Other relevant parameters are $v_{th_e}/c = 0.1$, $m_p/m_e = 1836$, and $T_p/T_e = 1.0$. The velocity space extents for the electrons are $\pm 8 v_{th_e}$ and the proton velocity space extents are $v_{th_e} \pm 8 v_{th_p}$. All simulations are run with $N_v = 64$. A number of CFL conditions and configuration space resolutions are employed to determine the convergence order of our time-stepping scheme and the convergence of the momentum errors. The boundary conditions in configuration space are periodic, while the boundary conditions in velocity space are zero flux to guarantee that there are no boundary losses for these tests of energy and momentum conservation. While it may appear that the density profile in \eqr{\ref{eq:conservationTestDensityProfile}} is not periodic, the values of $\beta_l$ and $\beta_r$ localize the density profile around $x=x_m$ since the profile varies on electron scales, while the domain size is with respect to proton scales. This scale separation reduces the violation of periodic boundary conditions to a vanishingly small number, $\mathcal{O}(10^{-34})$. The simulations are run to $1000 \pi \omega_{pe}^{-1}$ where $\omega_{pe}=\omega_{pp}\sqrt{m_p/m_e}$ is the electron plasma frequency. This length of time corresponds to approximately ten thousand time steps for the runs with the largest time steps. Results are plotted in Figure \ref{fig:ConservationResults}. Since we are either halving the time-step or doubling the resolution, we define the order of convergence as
\begin{align}
\mathcal{C}(\mathcal{E}_1, \mathcal{E}_2) = \frac{\log(\mathcal{E}_1) - \log(\mathcal{E}_2)}{\log(2)},
\end{align}
where $\mathcal{E}_i$ is the relative change of a quantity, either momentum or energy, in a particular simulation. In a set of simulations which are ``converging,'' $\mathcal{E}_2 < \mathcal{E}_1$ so the convergence order is a positive number. We note that the energy conservation result can be obtained by either decreasing the size of the time step in an $N_x = 8$ simulation, as shown, or by refining the grid. The energy errors for $N_x = 16, 32, 64$ are roughly equivalent to the energy errors shown for the smaller time-step. For the energy conservation test, we find the orders of convergence are 2.12, 2.70, and 2.75 for each successive halving of the CFL number for an $N_x = 8$ simulation. These orders of convergence are comparable with the expected third order scaling from our third order strong-stability-preserving Runge-Kutta method. For the momentum conservation test, we find the convergence orders of the polynomial order 2 simulations are 1.46, 2.27, and 3.76, and the convergence orders of the polynomial order 3 simulations are 1.99, 3.08, and 6.04, for a refining of the grid from $N_x = 8$ to $N_x = 64$ cells in configuration space. We can thus say the momentum conservation tests demonstrates super-linear scaling, i.e., greater than $p$, for continual refinement of the grid. 
  \begin{figure}
        \subfloat{\includegraphics[width=0.5\linewidth]{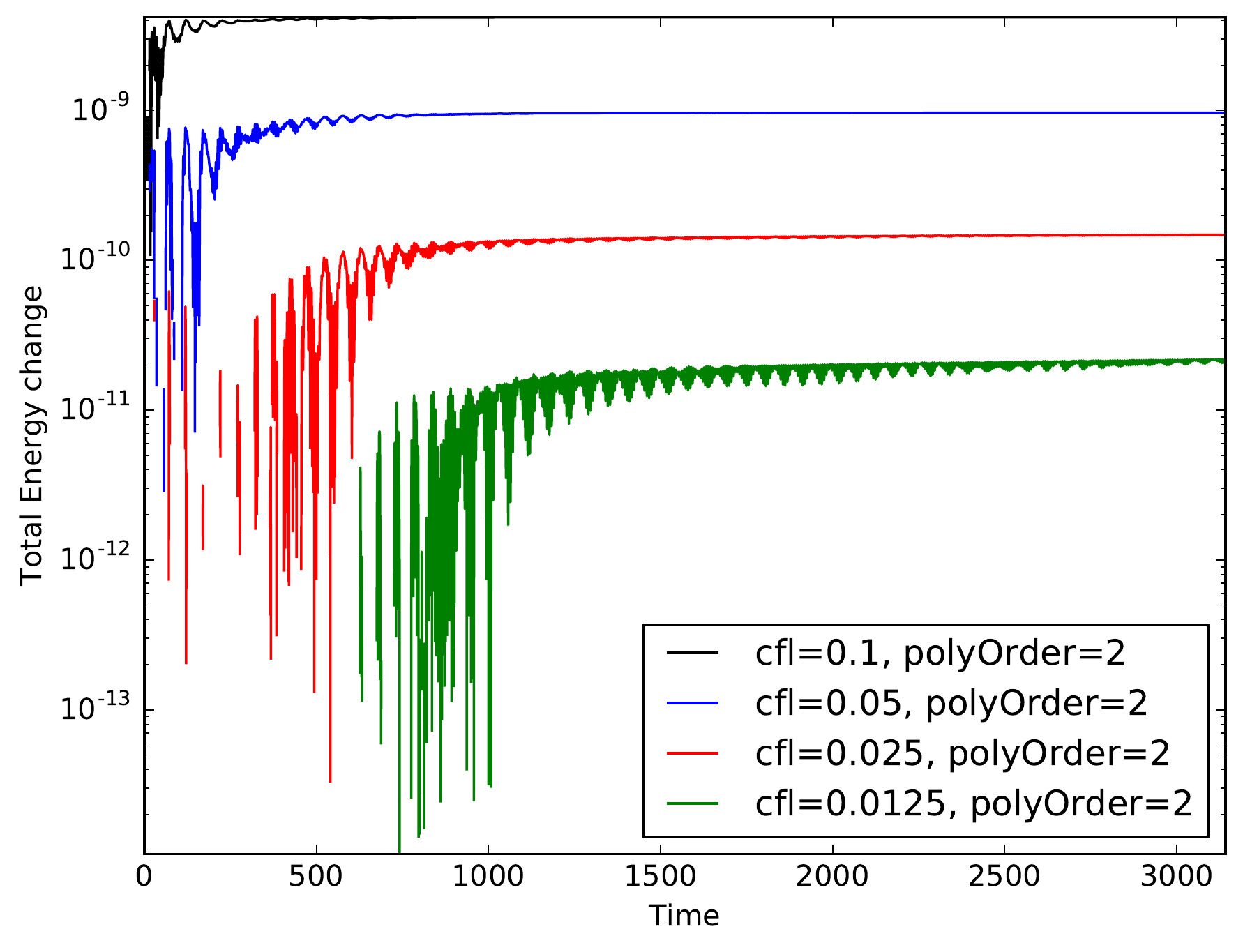}}
        \subfloat{\includegraphics[width=0.5\linewidth]{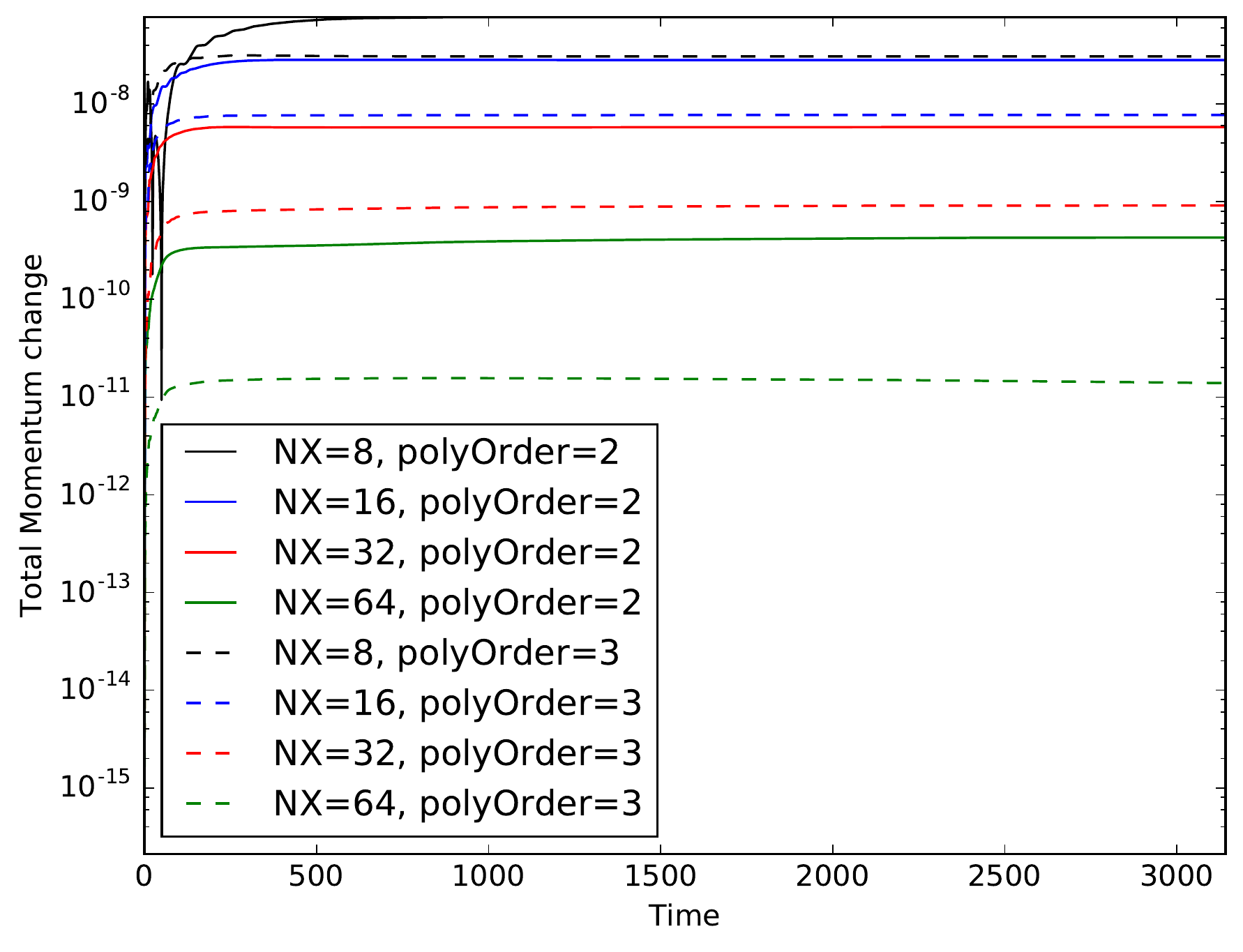}}
        \caption{The left plot is the relative change in the total (proton, electron, and electromagnetic) energy $\Delta E = (E(t) - E(t=0))/E(t=0)$ for an $8 \times 64$, polynomial order 2 simulation with gradually more restrictive CFL conditions. The orders of convergence are 2.12, 2.70, and 2.75 for each successive halving of the CFL number. The right plot is the relative change in the total (proton and electron) momentum $\Delta p_x = (p_x(t=0)-p_x(t))/p_x(t=0)$ for a set of simulations varying polynomial order and configuration space resolution. The convergence orders of the polynomial order 2 simulations are 1.46, 2.27, and 3.76, and the convergence orders of the polynomial order 3 simulations are 1.99, 3.08, and 6.04. We note the sign change in the definition of the relative momentum change compared to the relative energy change. While the energy monotonically decays for the entire length of the run, the momentum errors manifest as an increase in the total momentum.}
        \label{fig:ConservationResults}
  \end{figure}
\end{par}
\begin{par}
We can also use this same initial condition to examine the behavior of quantities we expect to decay, or grow, throughout the simulation, such as the $L^2$ norm of the distribution function and the discrete entropy $S_h = -f_h \ln(f_h)$. However, we should point out that a crucial assumption in our proof of the growth of the discrete entropy was that the distribution function $f_h$ remain positive definite. Due to the nature of our DG scheme, this cannot be guaranteed. In fact, for highly resolved grids, while the positivity violations decrease in magnitude, they cannot be completely eliminated. The reason for this is the magnitude of the distribution function is so small at the edges of velocity space, that the generation of round-off error positivity violations is inevitable. Fortunately, as can be seen in Figure \ref{fig:ConservationResults}, small positivity violations do not affect the stability or robustness of our algorithm. This is not surprising, as even moderate positivity violations are tolerable so long as the moments of the distribution function remain positive. It would take substantial negativity in the distribution function to drive moments such as the density or energy negative and create the same numerical stability issues often found in fluid codes. Nonetheless, since we would like to at least comment on a quantity such as the discrete entropy even with the presence of positivity violations, we have run the same initial conditions with more resolution and defined an ``entropy-like'' quantity, $S_h^* = -|f_h| \ln(|f_h|)$, which we expect will behave similarly to the discrete entropy in the limit that positivity violations go to zero. We have plotted the $L^2$ norm of the distribution function and $S_h^*$ integrated over all of phase space for both protons and electrons in Figure \ref{fig:EntropyL2Plot} for highly resolved grids, with polynomial order 2, demonstrating, at least approximately, the behavior we expect based on Proposition \ref{prop:discrete-L2-norm} and Corollary \ref{cor:discrete-entropy}.
\begin{figure}
    \begin{center}
        \subfloat{\includegraphics[width=0.5\linewidth]{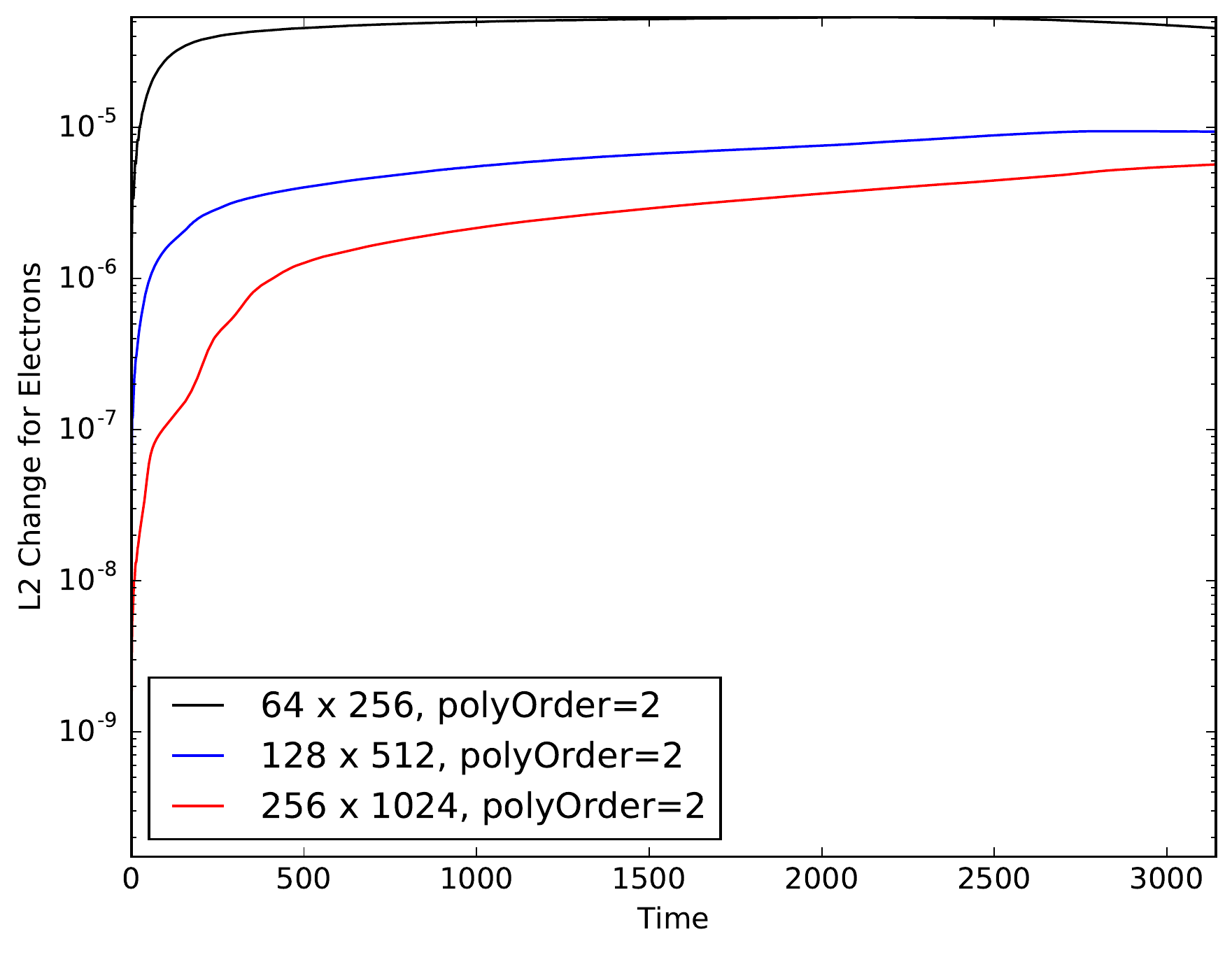}}
        \subfloat{\includegraphics[width=0.5\linewidth]{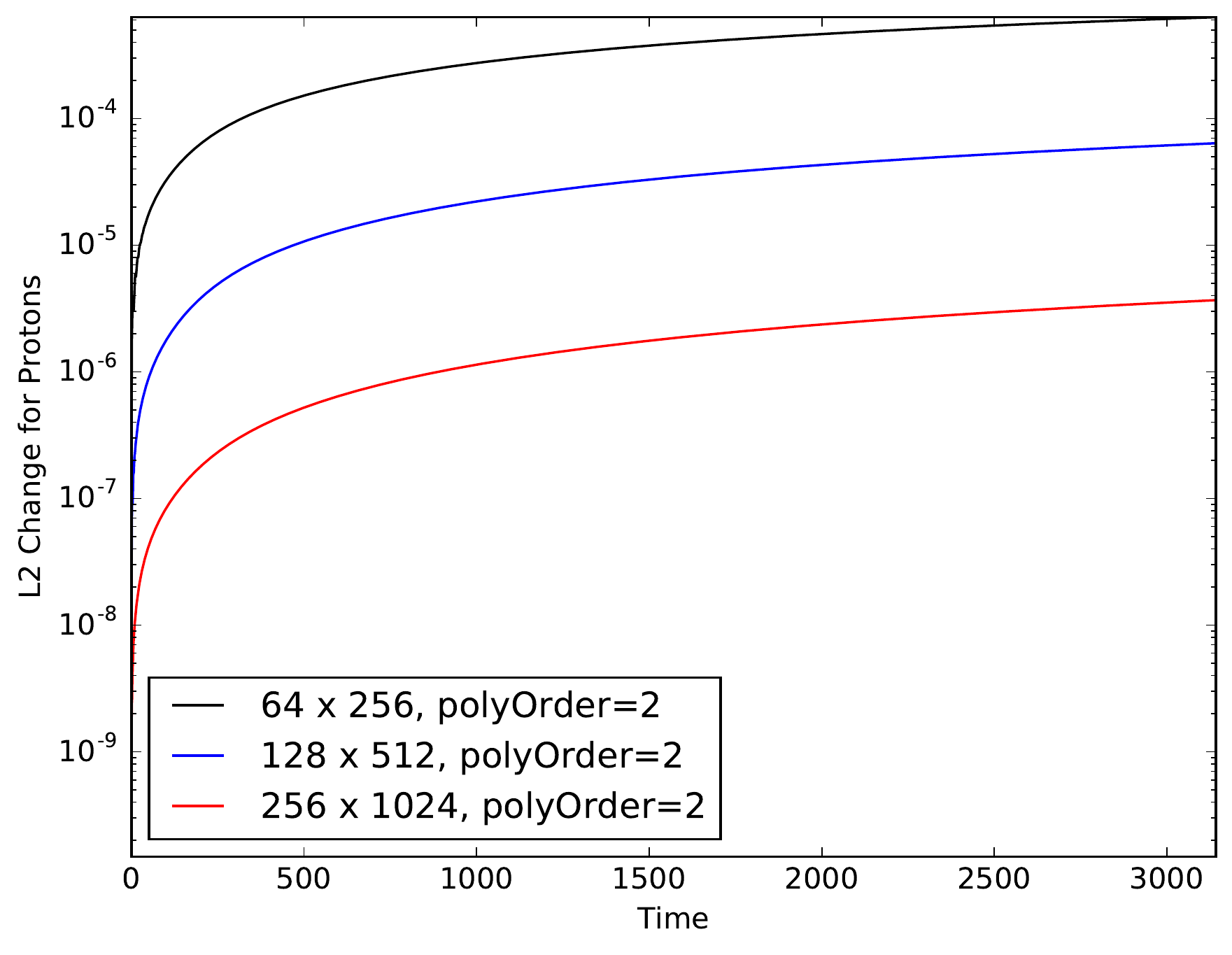}}\\
        \subfloat{\includegraphics[width=0.5\linewidth]{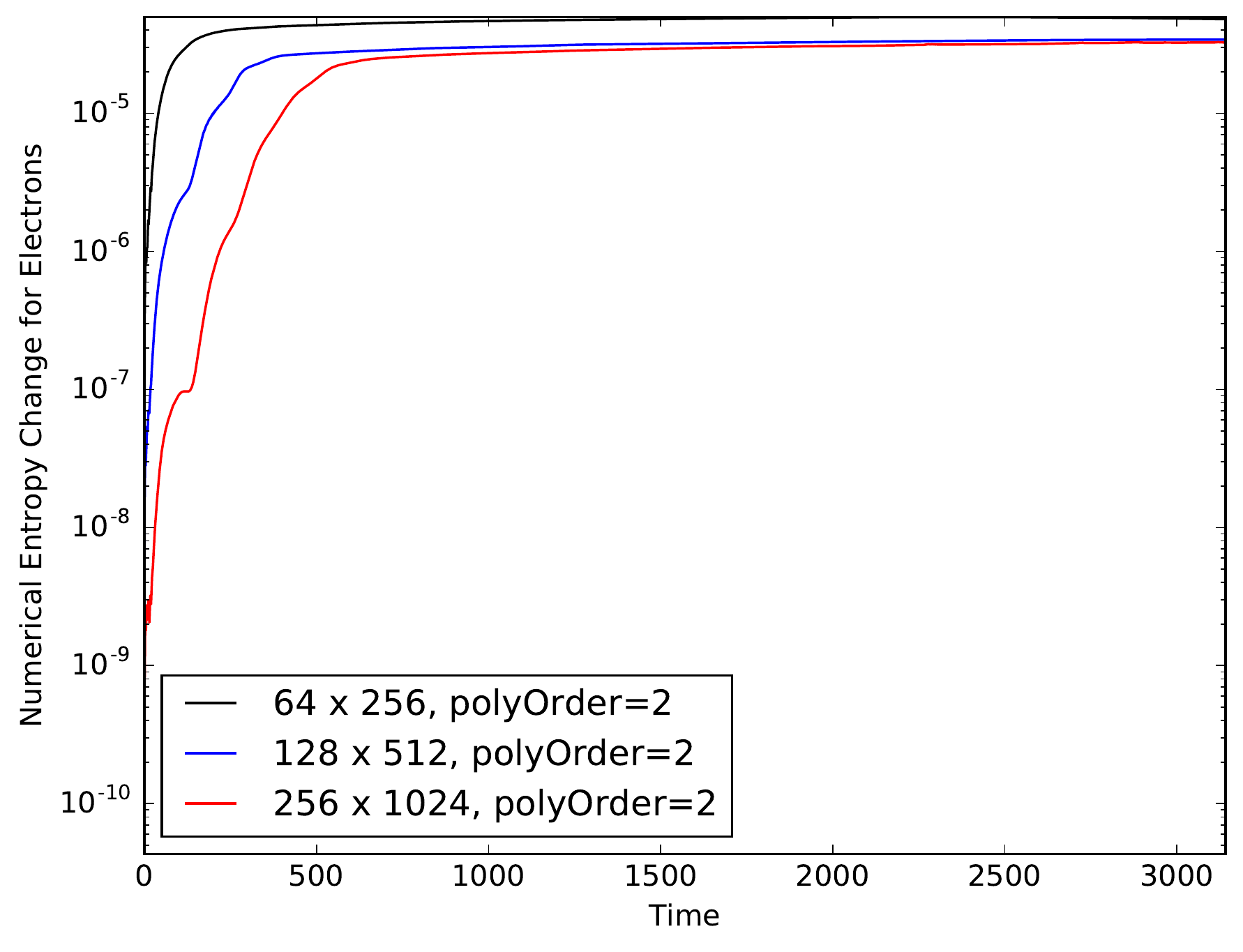}}
        \subfloat{\includegraphics[width=0.5\linewidth]{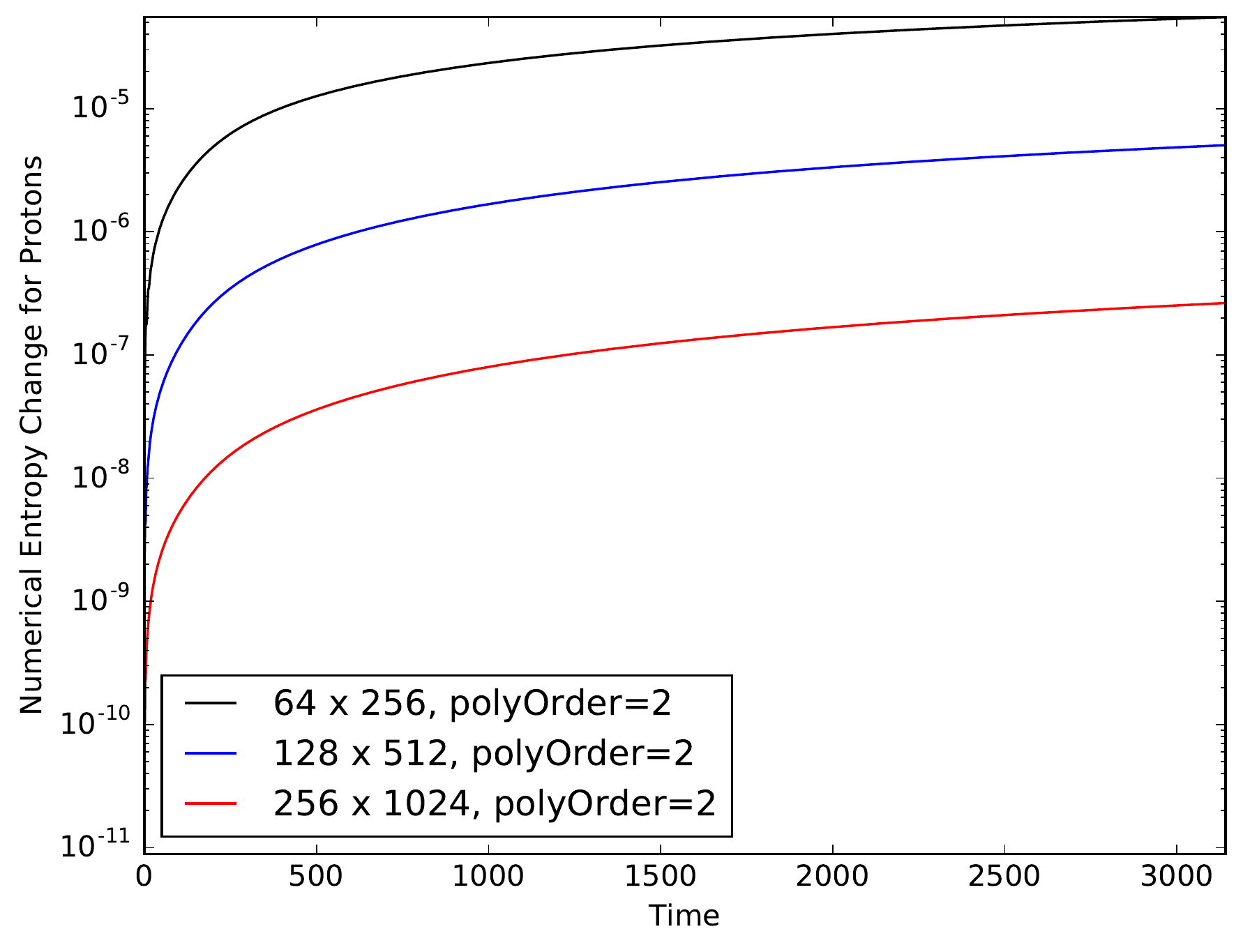}}        
    \end{center}
        \caption{The $L^2$ norm for electrons (top left), protons (top right), and modified entropy variable $S_h^*$ for electrons (bottom left), and protons (bottom right). Similar to how we plotted energy and momentum conservation in Figure  \ref{fig:ConservationResults}, we have plotted the relative change in the $L^2$ norm and this modified entropy $S_h^*$ from their starting value, with the caveat that the $L^2$ norm is a monotonically decreasing quantity, while our modified entropy variable is a monotonically increasing quantity, so the sign of the relative change is different for the two.}
        \label{fig:EntropyL2Plot}
\end{figure}   
\end{par}
\begin{par}
Finally, we reiterate that our discretization of Maxwell's equations is not guaranteed to preserve the divergence relations, in this particular case \eqr{\ref{eq:divE}}, and thus we would like to examine how well this simple initial condition preserves this constraint on the electric field. Since our system only has one dimension in configuration space, in a discrete sense we are concerned with how well the relation,
\begin{align}
\pfrac{E_{x_h}(x)}{x} - (n_{p_h}(x) - n_{e_h}(x)) = 0, \label{eq:divEError}
\end{align}
is satisfied at each point in configuration space. We have plotted \eqr{\ref{eq:divEError}} in Figure \ref{fig:divEErrorComp}, normalized to the maximum value of the divergence of the electric field, at $1000 \pi \omega_{pe}^{-1}$ for a number of resolutions in configuration space with polynomial order 2. We have also plotted the value of \eqr{\ref{eq:divEError}}, integrated over all of configuration space, versus time, in Figure \ref{fig:divEErrorComp}. The $x$ derivative of the electric field is computed using the analytic derivatives of the basis function expansion within a cell. The above normalization is somewhat arbitrary, but it does illustrate that the error in the divergence constraint is small compared to the largest variation in the electric field, and thus the errors in the divergence relation are sub-dominant to the dynamics. Further, it is a comfort to note that the errors decrease dramatically with increasing resolution, even given the complicated coupling at play since one of the sources of the divergence errors comes from the computation of the moments of the distribution function, and the charge density $\rho_{c_h}(x) = n_{p_h}(x) - n_{e_h}(x)$ does not appear anywhere in the equation system we are explicitly evolving.
\begin{figure}
    \begin{center}
        \subfloat{\includegraphics[width=0.5\linewidth]{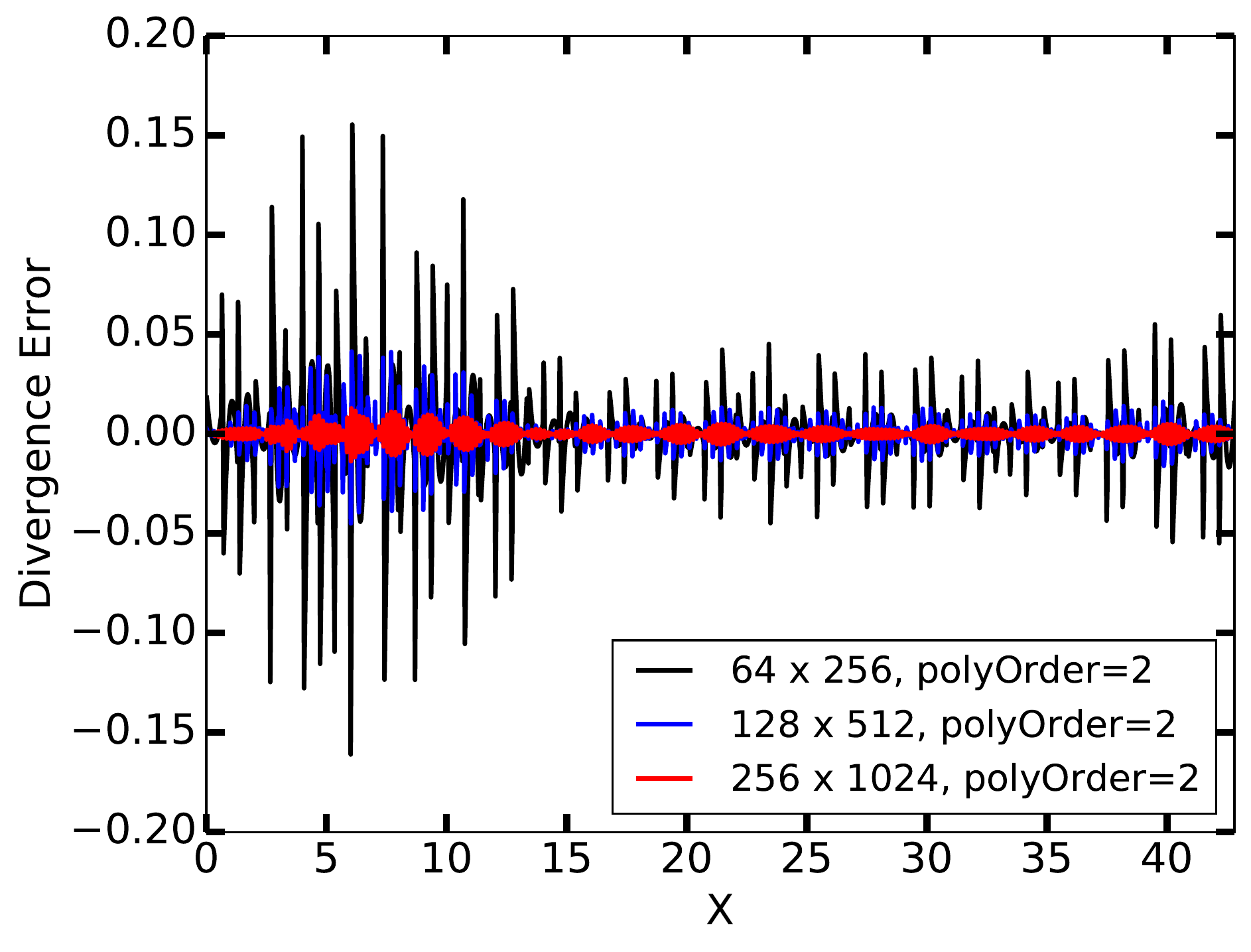}}
        \subfloat{\includegraphics[width=0.5\linewidth]{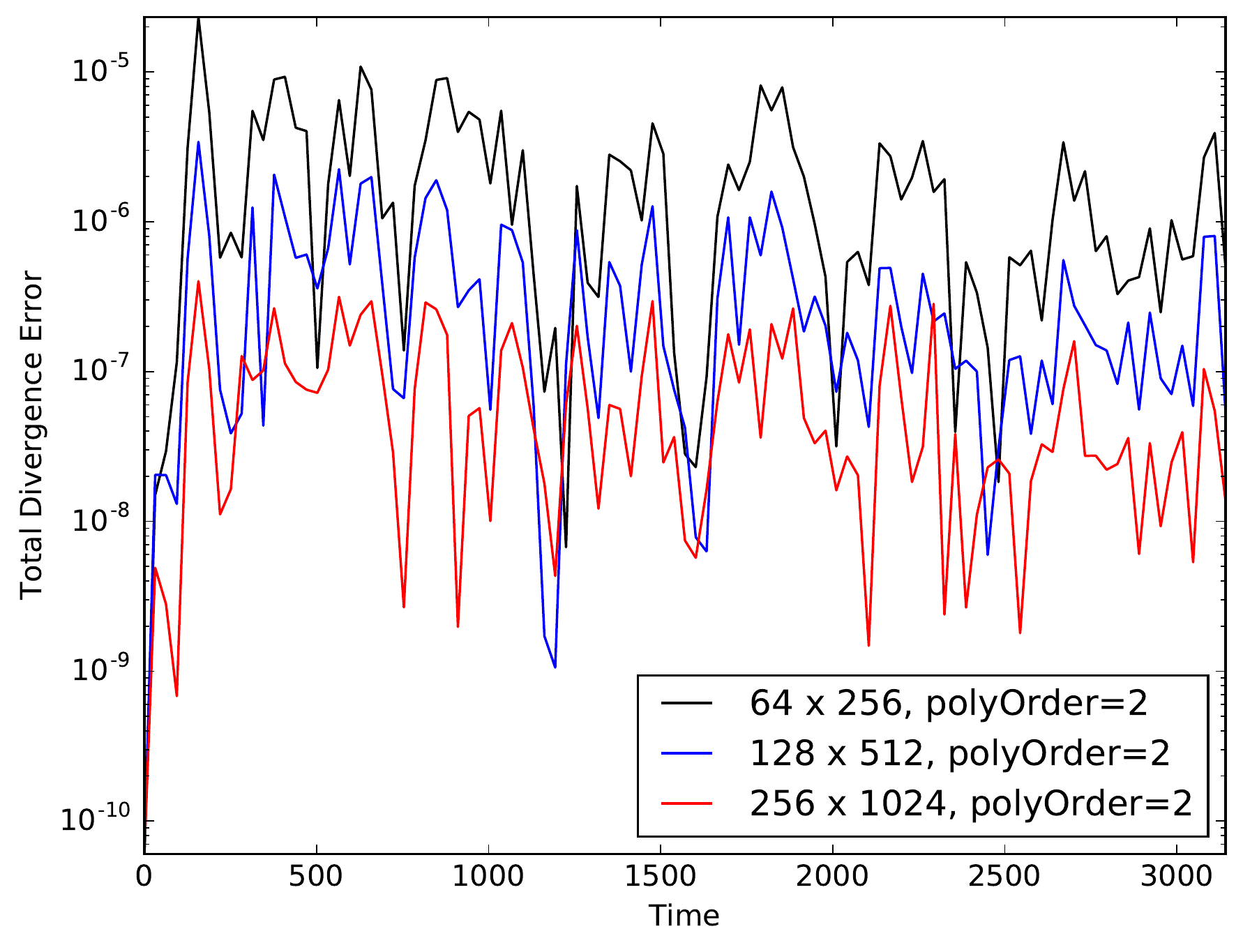}}
    \end{center}
        \caption{\eqr{\ref{eq:divEError}}, normalized to the maximum value of $\partial E_x/\partial x$, at $t = 1000 \pi \omega_{pe}^{-1}$ (left) and the integral of \eqr{\ref{eq:divEError}}, with no normalization, over all of configuration space, in time (right). While the behavior of the divergence errors in time is not monotonic like other errors, such as the energy conservation errors in Figure \ref{fig:ConservationResults} or the $L^2$ error in Figure \ref{fig:EntropyL2Plot}, it is nonetheless decreasing with increasing resolution, a comfort given the complicated interplay between the sources calculated from the Vlasov equation and our solver for Maxwell's equations.}
        \label{fig:divEErrorComp}
\end{figure}   
\end{par}
\subsubsection{Landau damping of Langmuir waves}
\begin{par}
Consider a plasma, or Langmuir, wave propagating in a plasma of protons and electrons whose distribution functions are given by Maxwellians. Langmuir waves are dispersive waves, with a dispersion relation given by
\begin{align}
1 - \frac{1}{2 k^2 \lambda_{De}^2} Z'\left(\frac{\omega}{\sqrt{2} v_{th_e} k} \right ) = 0, \label{eq:plasmaDisp}
\end{align} 
in the limit that the proton mass is much larger than the electron mass and the protons can thus be considered immobile. Here, $v_{th_e}$ is the electron thermal velocity, $\omega_{pe}$ is the electron plasma frequency, and $\lambda_{De} = v_{th_e}/\omega_{pe}$ is the electron Debye length. $Z(\zeta)$ is the plasma dispersion function, defined as
\begin{align}
Z(\zeta) = \frac{1}{\sqrt{\pi}} \int_{-\infty}^\infty \frac{e^{-x^2}}{x - \zeta} dx,
\end{align}
with the derivative of the plasma dispersion function given by
\begin{align}
Z'(\zeta) = -2(1 + \zeta Z(\zeta)).
\end{align}
An application of complex integration techniques shows that depending on the sign of the largest imaginary component of the frequency $\omega = \omega_r + i \gamma$, the wave is either unstable and will grow with time, or will damp away, a phenomenon known as Landau damping. For Langmuir waves propagating in a Maxwellian plasma of protons and electrons, the waves quickly damp away. Using a 1X1V setup, we can initialize Langmuir waves in the Vlasov-Maxwell system with a small density perturbation and the corresponding electric field to support this density perturbation,
\begin{align}
n_e(x) & = n_0(1 + \alpha \cos(kx)) \\
n_p(x) & = n_0 \\
E_x(x) & = -|e| \alpha \frac{\sin(kx)}{\epsilon_0 k},
\end{align}
where $n_0 = 1.0$, $\alpha$ is the size of the perturbation, and $k$ is the wavenumber of the wave. The electric charge $e$ and permittivity of free space $\epsilon_0$ are included in the electric field to satisfy \eqr{\ref{eq:divE}}. Choosing $\alpha \ll 1$ allows us to compare with the analytic theory described above. The box size is set to $L_x = 2\pi/k$ so exactly one wavelength fits in the domain. Specific parameters for these runs are: $\alpha = 10^{-4}$, $m_p/m_e = 1836$, $T_p/T_e = 1.0$, and  $v_{th_e}/c = 0.1$. For the proton species, the velocity space extents are $\pm 6 v_{th_p}$, and for the electrons, the velocity space extents are $\pm 6 v_{th_e}$. The boundary conditions in configuration space are periodic, while the boundary conditions in velocity space are zero flux to preserve density and energy conservation in the semi-discrete system. The resolution is chosen for each simulation to adequately resolve the Debye length in configuration space and to mitigate numerical recurrence in velocity space. By numerical recurrence, we refer to the process by which the collisionless system artificially ``un-mixes'' if the distribution function forms structure at the velocity space grid scale. Numerical recurrence is inevitable with finite velocity resolution for this particular problem as the Landau damping of the wave will create smaller and smaller velocity space structure through the phase-mixing of the wave. We could completely eliminate this issue with a diffusive process in velocity space, such as a collision operator. Here, we choose ample velocity resolution so that the wave damps enough for us to extract a clean damping rate and frequency for the wave initialized. We find for the longest wavelengths, using polynomial order 2, a resolution of 64 points in configuration space adequately resolves the Debye length, and 128 points in velocity space permits the wave to phase-mix sufficiently to extract damping rates. The evolution of the electromagnetic energy, as well as the other components of the energy, in a prototypical simulation is given in Figure~\ref{fig:LangmuirWaveEnergy}. Comparisons of a number of Vlasov-Maxwell simulations with theory for both the damping rates and the frequencies of the waves are given in Figure~\ref{fig:LangmuirWaveTheory}. For the theory, we solve \eqr{\ref{eq:plasmaDisp}} using a root-finding technique. We emphasize that we solve the Vlasov-Maxwell system in its entirety, including the nonlinear term, and for both a proton and electron species. With the above simulation parameters, the plasma waves damp entirely on the electron species, so the approximation that the protons are essentially immobile in our dispersion relation holds to a high precision. We also wish to note that the resolution of 64 points in configuration space is not required for every simulation. For example, the prototypical simulation presented in Figure~\ref{fig:LangmuirWaveEnergy} uses only 16 points in configuration space, or approximately one grid cell per Debye length. As long as the gradients are properly resolved, the Vlasov-Maxwell discretization is extremely robust. Subsequent simulations have a large amount of variation in their resolution compared to the Debye length. While Section \ref{sec:ESShock} uses a similar resolution of approximately one grid cell per Debye length, a grid cell in the simulation presented in Section \ref{sec:OT} contains 40 Debye lengths. 
  \begin{figure}
    \begin{center}
        \subfloat{\includegraphics[width=0.5\linewidth]{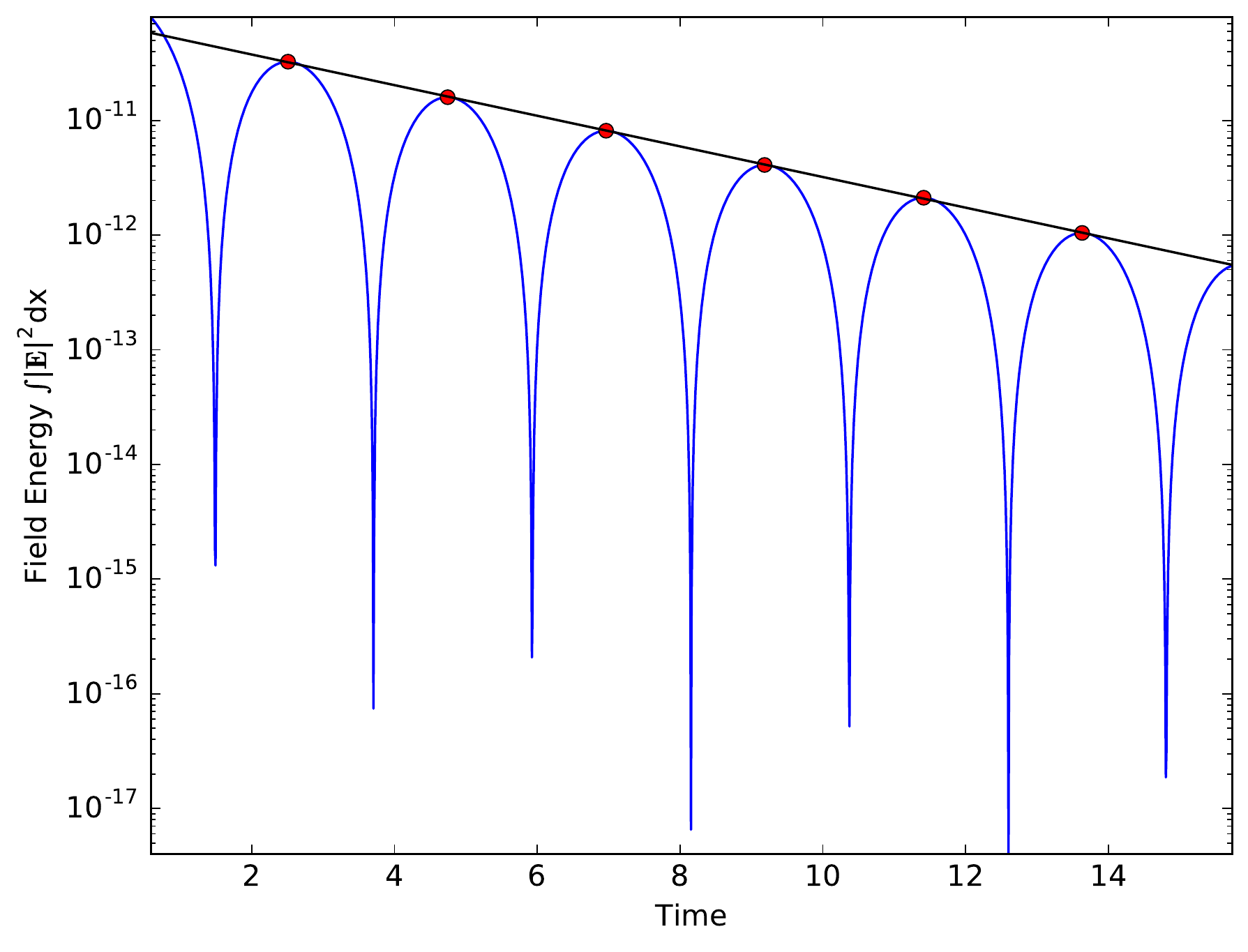}}
        \subfloat{\includegraphics[width=0.5\linewidth]{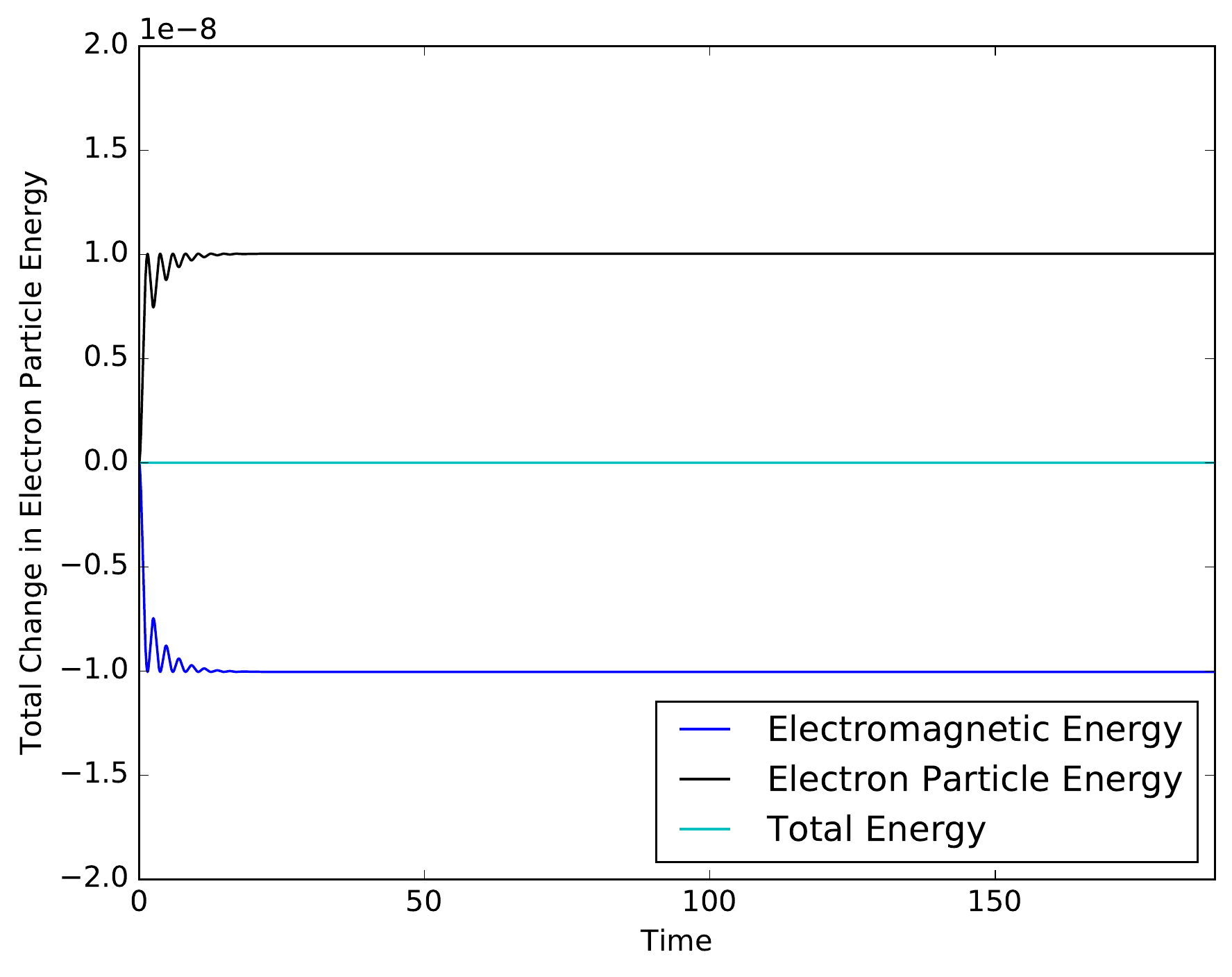}}
     \end{center}
        \caption{Prototypical evolution of the electromagnetic energy, $\frac{\epsilon_0}{2} \int |\mvec{E}|^2 dx$, for the damping of a Langmuir wave, in this case $k \lambda_D = 0.5$ for a number of plasma periods (left), and evolution of various components of the energy for the full length of the simulation (right). The right plot is the relative change in the energy component compared to the total energy at $t=0$, i.e, $\Delta E_{comp}/E_0$. The local maxima (red circles) of the evolution in the left plot are used to determine both the damping rate and frequency of the excited wave via linear regression. We note that energy is very well conserved and, as expected, the plasma waves damp on the electrons, converting electromagnetic energy to electron thermal energy.}
        \label{fig:LangmuirWaveEnergy}
  \end{figure}
  \begin{figure}
        \subfloat{\includegraphics[width=0.5\linewidth]{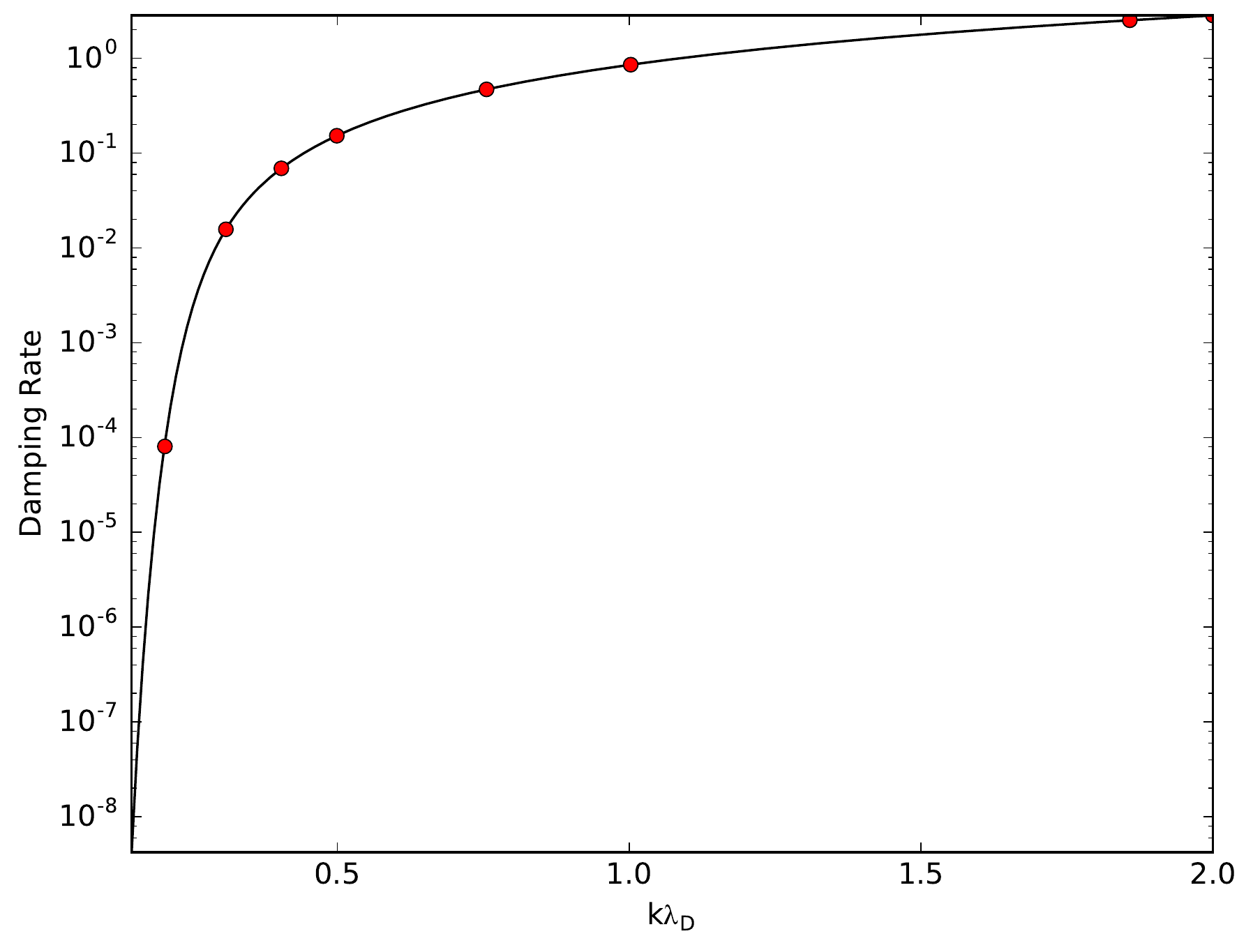}}
        \subfloat{\includegraphics[width=0.5\linewidth]{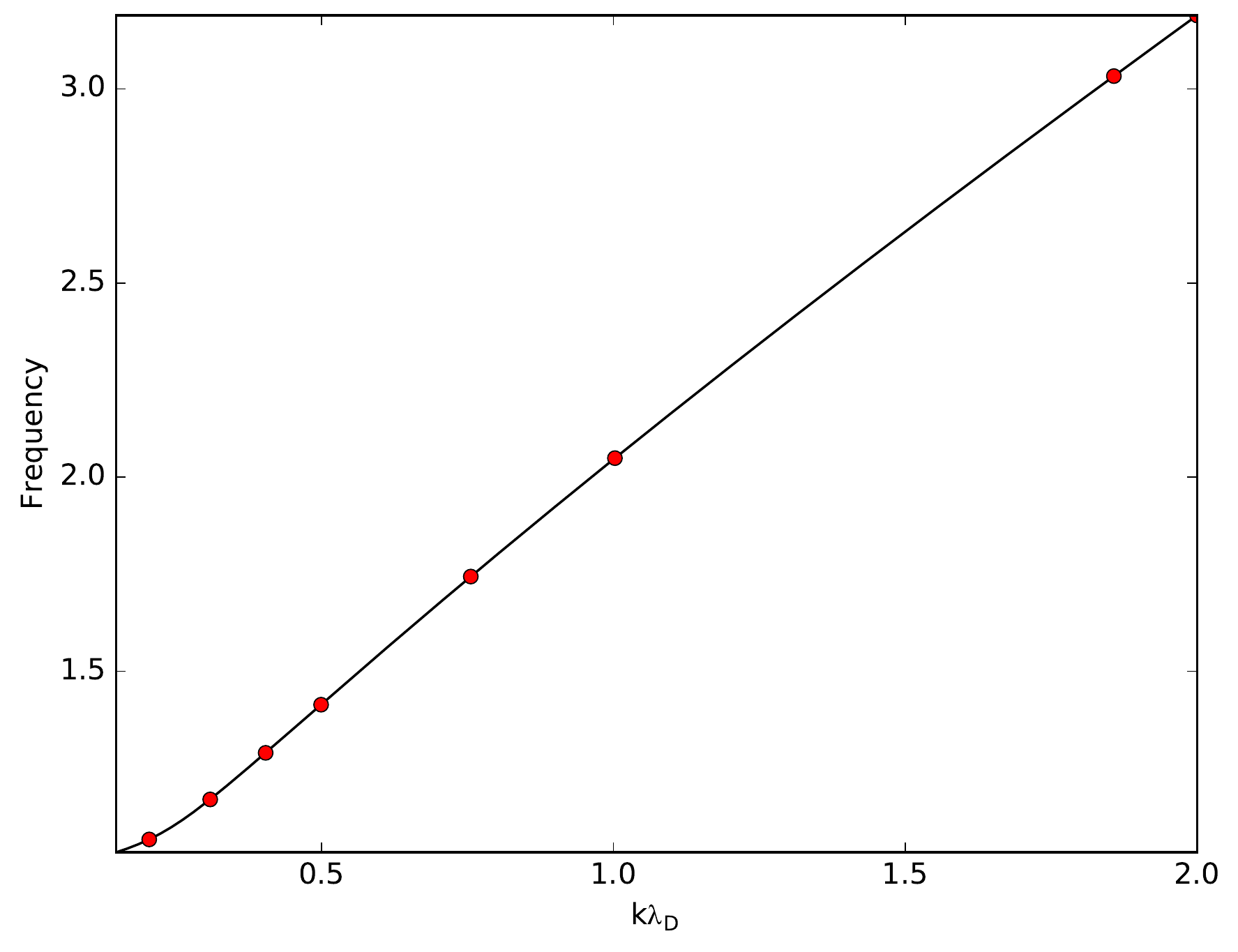}}
        \caption{Damping rates (left) and frequencies (right) of Langmuir waves from theory (solid line) and for a number of Vlasov-Maxwell simulations (red circles). The solid lines are obtained using a root finding technique applied to \eqr{\ref{eq:plasmaDisp}}. The x-axis is normalized to the Debye length $\lambda_D$ and the y-axis of the frequency plot is normalized to the plasma frequency $\omega_{pe}$.}
        \label{fig:LangmuirWaveTheory}
  \end{figure}
\end{par}
\subsubsection{Electrostatic shocks}\label{sec:ESShock}
\begin{par}
When two counter propagating, but otherwise identical, slabs of plasma collide at speeds greater than the proton sound speed $c_s = \sqrt{T_e/m_p}$, a potential well can be created, leading to trapping of electrons and the formation of shocks from nonlinear steepening of wave fronts. In the absence of collisions, shock formation only occurs up to a critical mach number $M_c = u/c_s \sim 3$, where $u$ is the drift velocity \citep{Forslund:1970}. To demonstrate that the collisionless Vlasov-Maxwell system behaves as expected, we perform two simulations, one with $M_c = 1.5$ and one with $M_c = 5$, and compare the results to analogous five moment two fluid simulations given by the algorithm described in Hakim 2006 \citep{Hakim:2006}. The parameters of the two simulations are as follows: $T_e/T_p = 9.0$, $m_p/m_e = 1836$, $v_{th_e}/c = 0.1$, and $L_x = 1000 \lambda_{De}$. The drifts of both the protons and electrons are equal, so as not to initialize any current. The velocity space extents of the electrons are $\pm 6 v_{th_e}$, while the proton velocity space grid is $[-|u_p| -10 v_{th_p}, |u_p| + 10 v_{th_p}]$, where $|u_p|$ is the magnitude of the specified drift. The proton velocity grid is made larger by the inclusion of the drift in its extents, but including the drifts ensures that the moments of the proton distribution function are calculated accurately. Including the drifts in the electron velocity space extents are unnecessary since the drifts are much smaller than the electron thermal velocity. The boundary conditions in velocity space are again zero flux, but in this case the configuration space boundary conditions are open to avoid any pollution of the region downstream of the shock due to waves launched by the initial condition propagating towards the wall. The resolution of the Vlasov-Maxwell simulations is $1024 \times 128$ with polynomial order 2. The left half of the domain is initialized with the distribution function propagating to the right, while the distribution function on the right propagates to the left. The drifting plasmas thus collide in the middle of the domain. At $t = 1000 \omega_{pe}^{-1}$, we note the following comparisons between the two Vlasov runs with the two fluid simulations in Figure~\ref{fig:FluidMach}. Essential features of the shock for $M_c = 1.5$ such as the mass density and momentum for both species compare well, while the particle energy, especially for the protons, shows poor agreement between the two models. In addition, while the $M_c = 5$ fluid simulation still clearly shocks, the broadening of the densities in the $M_c = 5$ Vlasov run demonstrates that no shock forms for this high of a Mach number in the collisionless system, in agreement with the aforementioned previous studies \citep{Forslund:1970}. Plots of the distribution function in Figure~\ref{fig:ESShockDistributionFunction} further illustrate some of the key differences between the fluid and kinetic models. For example, the protons in the $M_c = 1.5$ are very far from thermal equilibrium, having experienced a significant amount of particle acceleration, thus easily explaining why the proton particle energy does not compare very well.
  \begin{figure}
      \begin{center}
      \includegraphics[width=.75\linewidth]{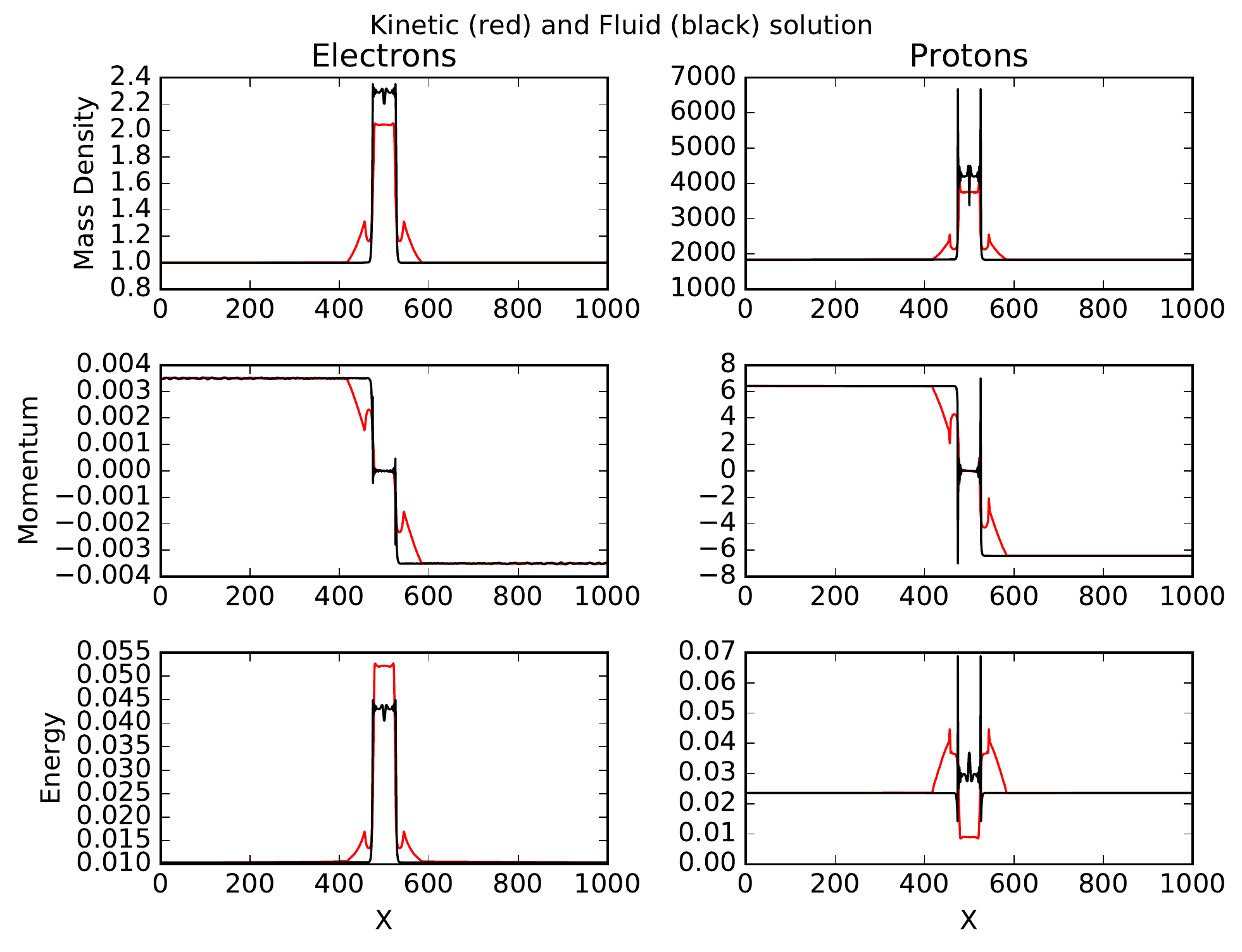}\\
      \includegraphics[width=.75\linewidth]{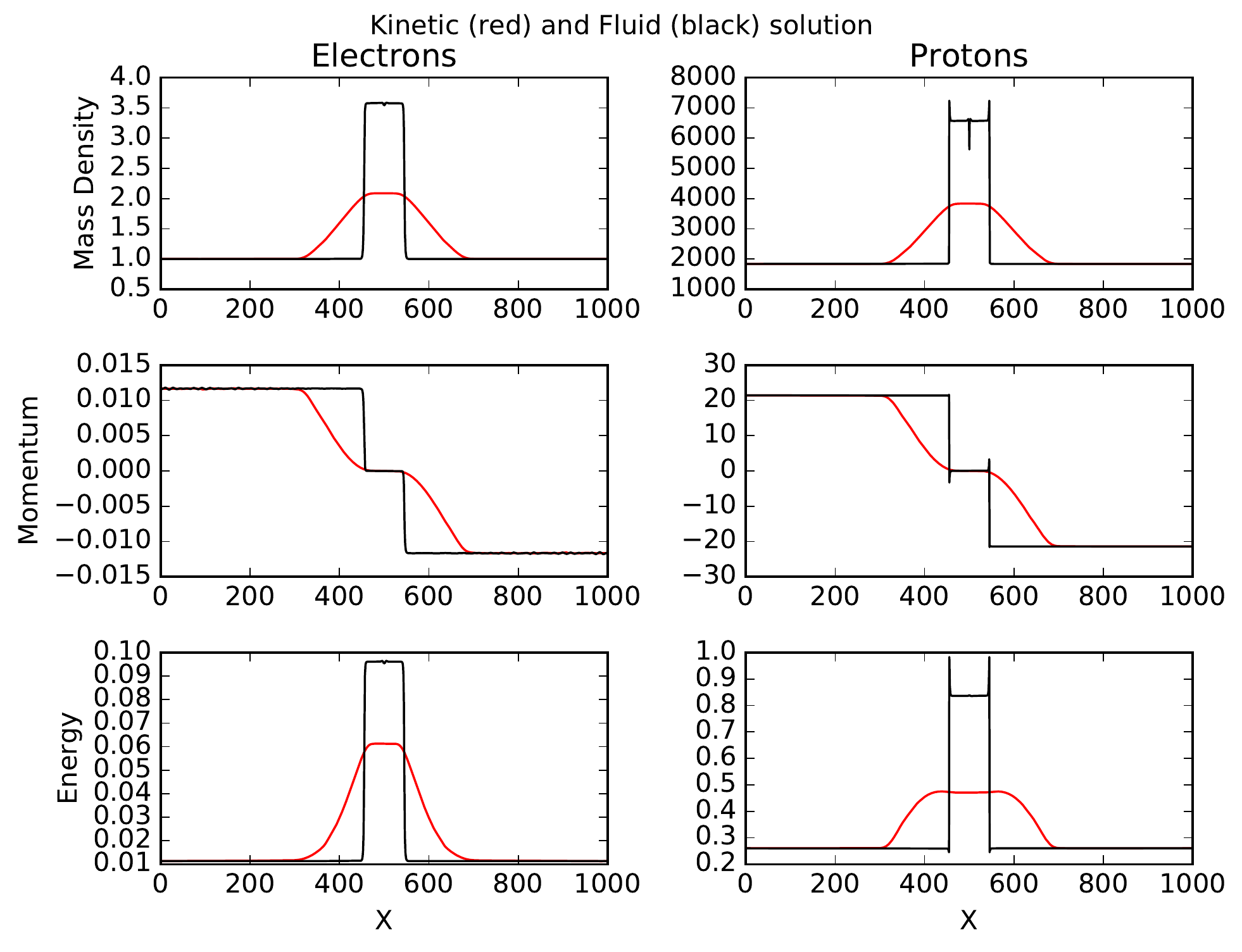}
      \end{center}
        \caption{Comparisons of the five moment two fluid moments (black) and the kinetic (red) moments for the $M_c = 1.5$ (top) and $M_c = 5$ (bottom) simulations. The mass densities and momenta of the protons and electrons in the $M_c = 1.5$ simulation compare reasonably well; however, the $M_c = 5$ simulation only shocks for the two fluid model, as expected \citep{Forslund:1970}. Note the density and momentum include the relevant mass factors for their definition, and the energy is defined by \eqr{\ref{eq:energyDefinition}}.}
        \label{fig:FluidMach}
  \end{figure}
  \begin{figure}
        \subfloat{\includegraphics[width=.5\linewidth]{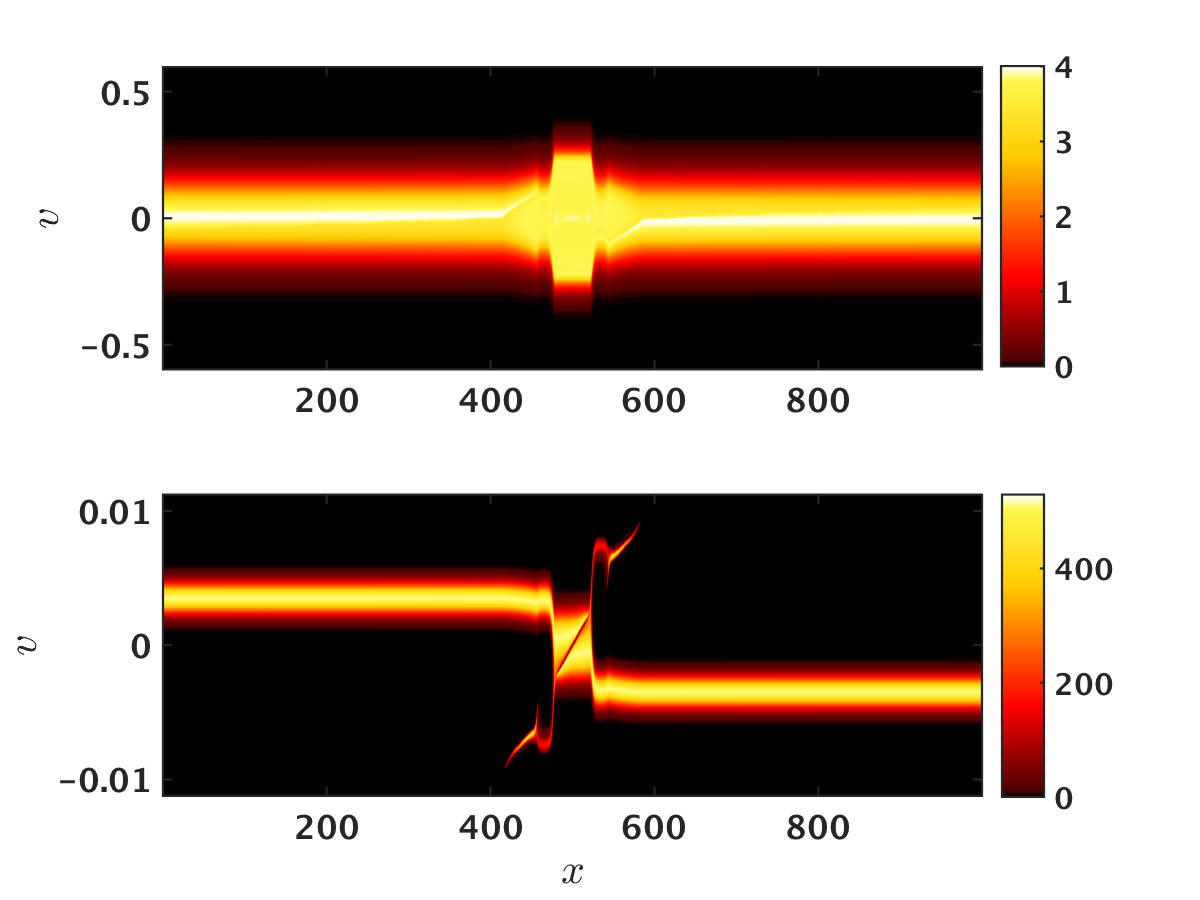}}
        \subfloat{\includegraphics[width=.5\linewidth]{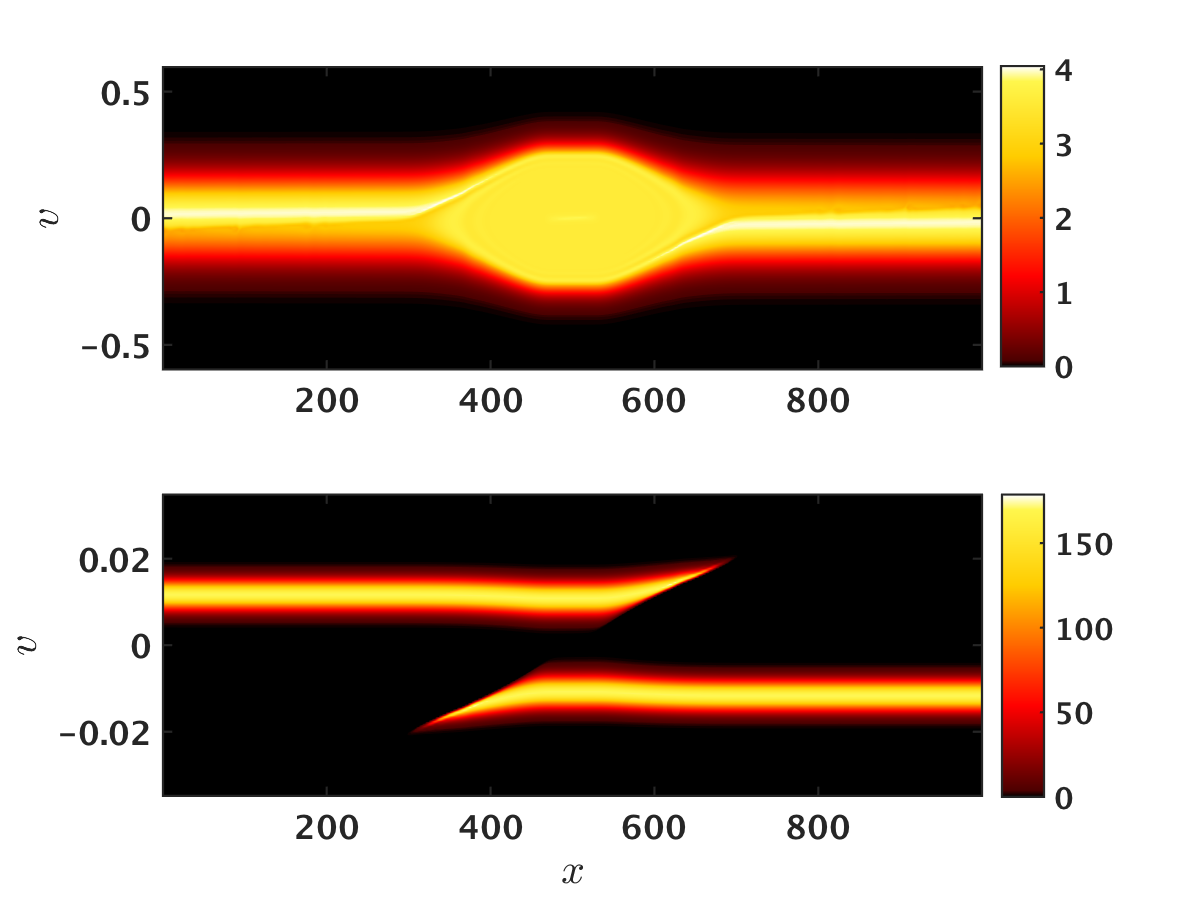}}
        \caption{Comparisons of electron (top) and proton (bottom) distribution functions for the $M_c = 1.5$ (left) and $M_c = 5$ (right) simulations. Note that because we use different velocity space grids for each species, the phase space density of a cell for each distribution function will be different, thus accounting for the difference in color bars between the electrons and protons.}
        \label{fig:ESShockDistributionFunction}
  \end{figure}
  \end{par}
\subsection{Electromagnetic applications}
\begin{par}
We now move on to more general applications, which involve both the electric and magnetic fields in Maxwell's equations, Eqns.\thinspace(\ref{eq:dbdt})-(\ref{eq:divB}). For all benchmarks the distribution function is again initialized as a drifting Maxwellian as in \eqr{\ref{eq:initMaxwell}}.
\end{par}
\subsubsection{Vlasov-Maxwell Riemann problem}
\begin{par}
A common robustness test of algorithms for fluid equations involves so-called ``Riemann'' problems, of which the Brio-Wu shock tube is a famous example \citep{BrioWu1988}. Here we perform a 1X3V, $(x, v_x, v_y, v_z)$, simulation of a Riemann problem with our Vlasov-Maxwell solver and compare with a case previously studied with a two-fluid algorithm. We have again employed the five moment two-fluid algorithm described in Hakim 2006 \citep{Hakim:2006} to compare the fluid moments between the two simulations. The initial condition is given by
\begin{align}
\begin{bmatrix} n_e \\ u_{e_x} \\ u_{e_y} \\ u_{e_z} \\ \epsilon_e \\ n_p \\ u_{p_x} \\ u_{p_y} \\ u_{p_z} \\ \epsilon_p \\ E_x \\ E_y \\ E_z \\ B_x \\ B_y \\ B_z \end{bmatrix} = \begin{bmatrix} 1.0 \\ 0.0 \\ 0.0 \\ 0.0 \\ 5 \times 10^{-5} \\ 1.0 \\ 0.0 \\ 0.0 \\ 0.0  \\ 5 \times 10^{-5} \\ 0.0 \\ 0.0 \\ 0.0  \\ 0.75 \times 10^{-2} \\ 0.0 \\ 1.0\times 10^{-2} \end{bmatrix} 
\qquad
\begin{bmatrix} n_e \\ u_{e_x} \\ u_{e_y} \\ u_{e_z} \\ \epsilon_e \\ n_p \\ u_{p_x} \\ u_{p_y} \\ u_{p_z} \\ \epsilon_p \\ E_x \\ E_y \\ E_z \\ B_x \\ B_y \\ B_z \end{bmatrix} = \begin{bmatrix} 0.125 \\ 0.0 \\ 0.0 \\ 0.0 \\ 5 \times 10^{-6} \\ 0.125 \\ 0.0 \\ 0.0 \\ 0.0  \\ 5 \times 10^{-6} \\ 0.0 \\ 0.0 \\ 0.0  \\ 0.75 \times 10^{-2} \\ 0.0 \\ -1.0\times 10^{-2} \end{bmatrix},
\end{align}
where the left values are the fluid moments initialized on the left half of the domain, and the right values are the fluid moments initialized on the right half of the domain. We note that $\epsilon_s$, the energy, is defined in \eqr{\ref{eq:energyDefinition}}. The full mass ratio, $m_p/m_e = 1836$, is employed, and the domain size is $L_x = d_p$ where $d_p = c/\omega_{pp}$ is the proton skin depth. All parameter definitions are with respect to values on the left half of the domain, i.e., $L_x = d_p$ is defined using the proton plasma frequency $\omega_{pp}$ on the left half of the domain. The simulation is run for $t = 0.125 \Omega_{cp}^{-1}$, where
\begin{align}
\Omega_{cp} = \frac{e |\mvec{B}|}{m_p} 
\end{align}
is the proton cyclotron frequency. We note that the cyclotron frequency is the same on both the left and right halves of the domain since the magnetic field has the same magnitude on each side of the interface. The velocity space extents of the Vlasov-Maxwell run are $\pm 6 v_{th_s}$, where $v_{th_s}$ is defined by the higher temperature on the left side of the domain for each species, and zero-flux boundary conditions are employed in velocity space. As in the electrostatic shock problem, open boundary conditions are employed in configuration space. The speed of light is chosen such that $v_{th_e}/c = 1/7$, so that the edge of velocity space for the electrons is sub-luminal. The resolution of the Vlasov-Maxwell simulation is $64 \times 32^3$ with polynomial order 2. We can compare some of the aforementioned fluid moments between the Vlasov-Maxwell and the same five moment two fluid model employed in Section \ref{sec:ESShock}. Results are shown in Figure~\ref{fig:VlasovRiemannFluidMoments}.
  \begin{figure}
        \subfloat{\includegraphics[width=\linewidth]{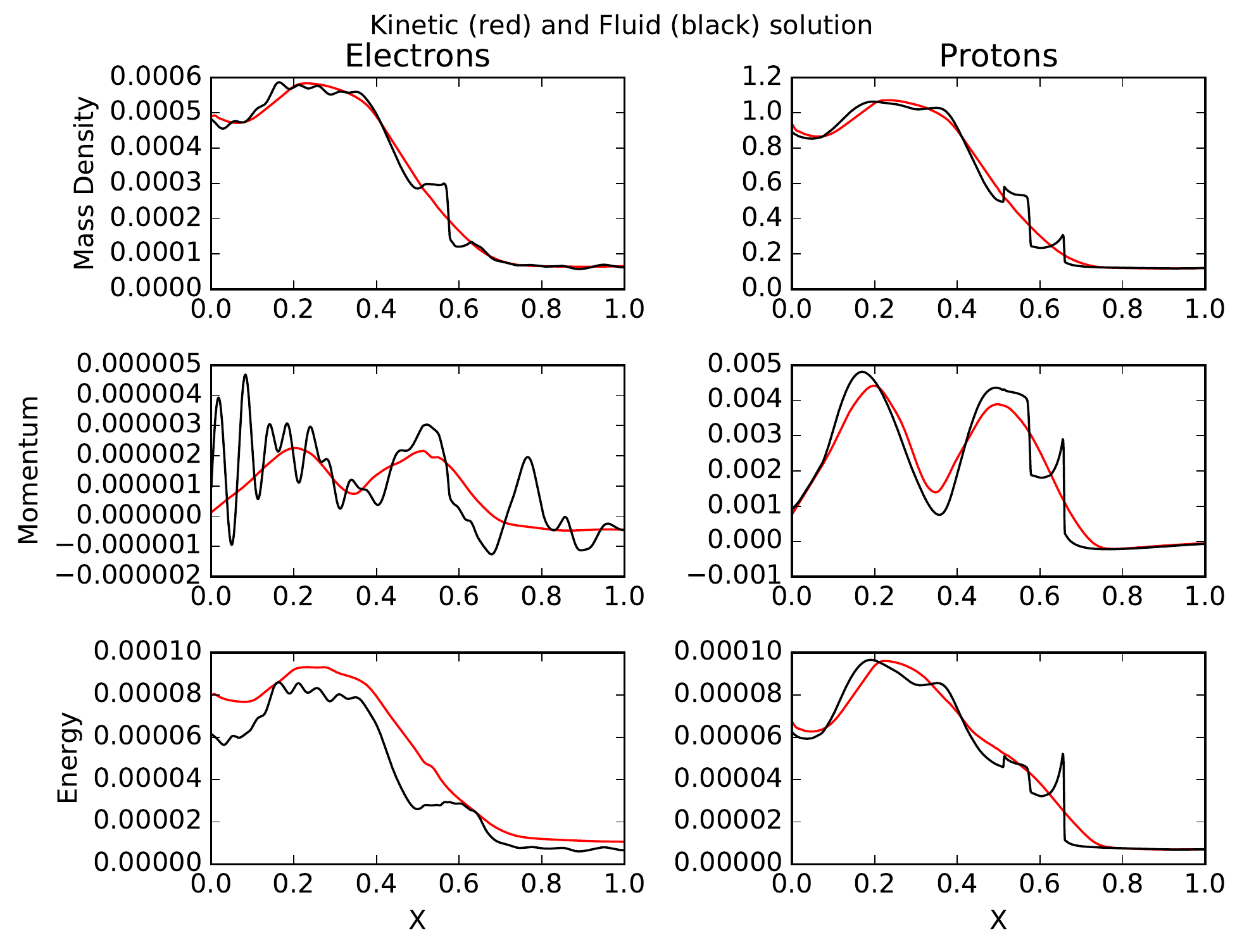}}        
        \caption{Comparison of electron mass density (top left) and ion mass density (top right), electron momentum (middle left), ion momentum (middle right), electron particle energy (bottom left) and ion particle energy (bottom right) between two-fluid and kinetics for the Riemann problem.}
        \label{fig:VlasovRiemannFluidMoments}
  \end{figure}
We note some key differences between the two models due to the more complete description of the plasma the Vlasov-Maxwell system provides. The presence of a larger amount of structure in the two-fluid model is owed to the fact that in the fluid model, the dispersive waves launched by the sharp gradient in the density and the temperature can only propagate and nonlinearly interact with each other. However, these dispersive waves in the Vlasov-Maxwell system can also damp on the particles in the plasma, giving their energy to the plasma particles via a variety of wave-particle resonances, such as Landau damping. The presence of wave-particle resonance processes is likely the reason for the striking difference between the electron particle energy between the two models, where the profiles are similar in terms of global structure, but the Vlasov-Maxwell energy density profile is shifted up. We note that this shift is indeed due to the energization of the electrons and not due to any sort of energy conservation violations, as energy is conserved for this run to within 1 percent, with the only losses due to the fact that the configuration space boundary conditions are open. Further evidence for this damping can be seen by examining the electron distribution function. Isosurfaces of the electron distribution function are plotted in Figure~\ref{fig:VlasovRiemannIso}. The ripples on the distribution function where the layer is initialized are a characteristic of significant phase-mixing, a key component of the damping of these waves. Evidence for phase-mixing is especially clear when considering 2D and 1D cuts of the distribution function, plotted in Figures~\ref{fig:VlasovRiemann2DCuts} and \ref{fig:VlasovRiemann1DCuts}.
  \begin{figure}
         \subfloat{\includegraphics[width=0.5\linewidth]{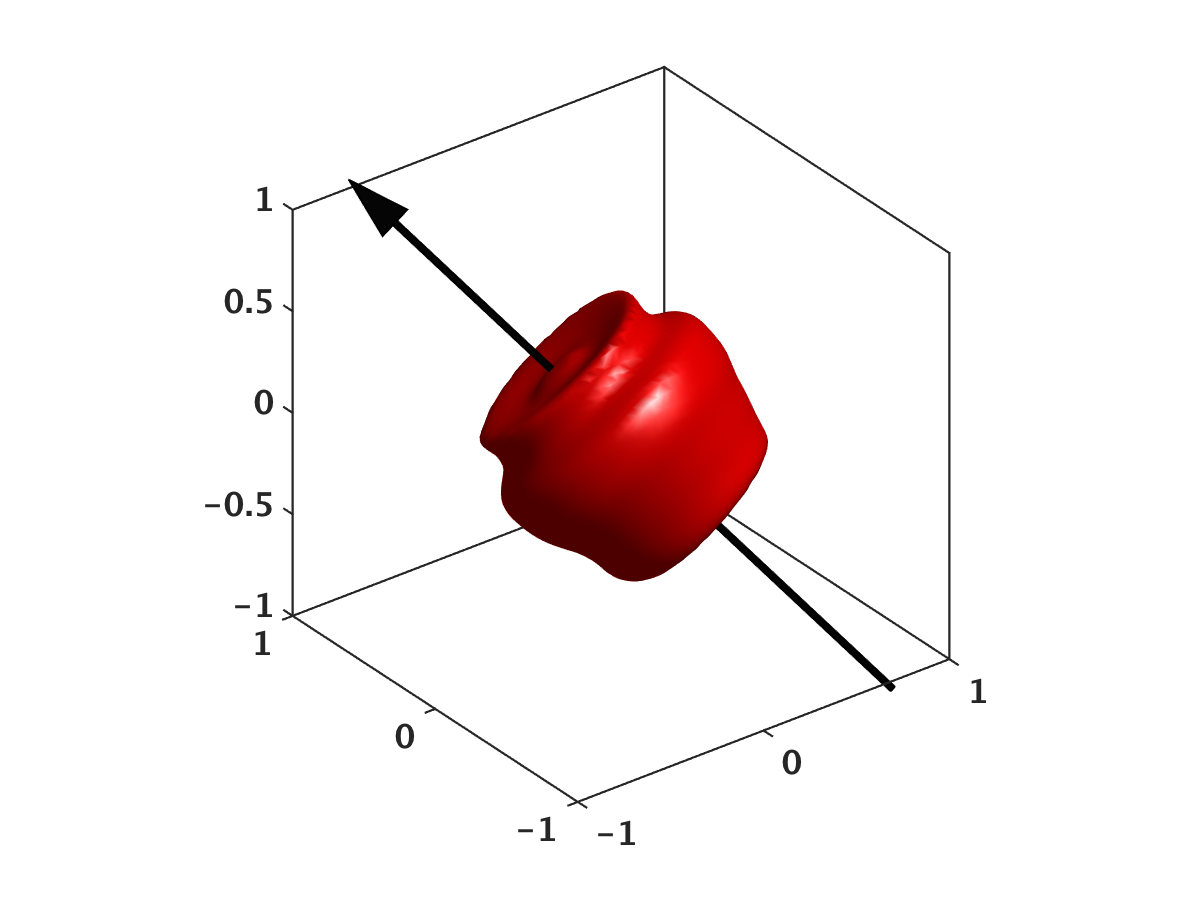}}
         \subfloat{\includegraphics[width=0.5\linewidth]{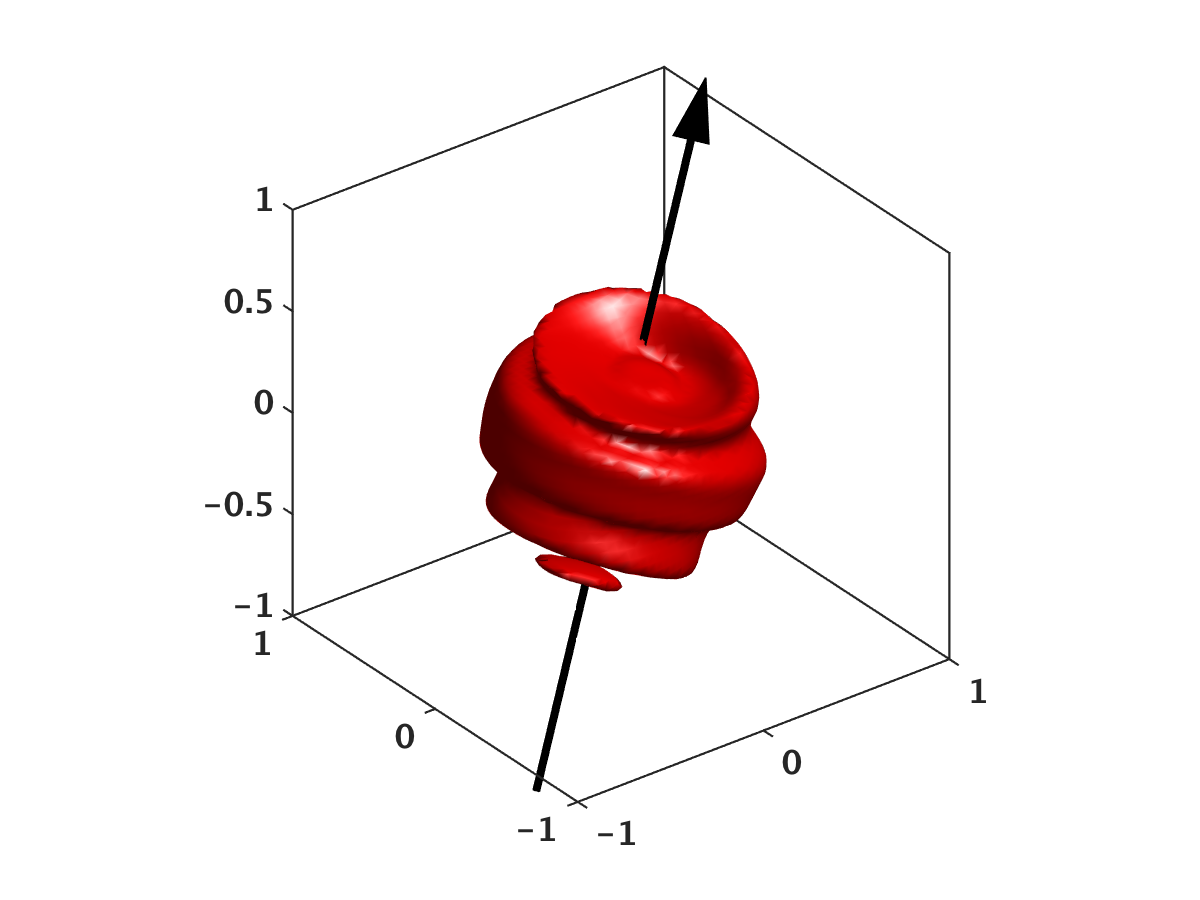}}
        
        \caption{Isosurfaces of the electron distribution function at $x = 0.3d_p$ (left) and $x = 0.5d_p$ (right) from the Riemann problem at $t = 0.125 \Omega_{cp}^{-1}$. The arrow points in the direction of the total magnetic field. We note the that the distribution function in the left plot is gyrotropic, i.e., symmetric about the total magnetic field, unlike where we initialize the interface, which has significant agyrotropic features.}
        \label{fig:VlasovRiemannIso}
  \end{figure} 
  \begin{figure}
        \subfloat{\includegraphics[width=0.33\linewidth]{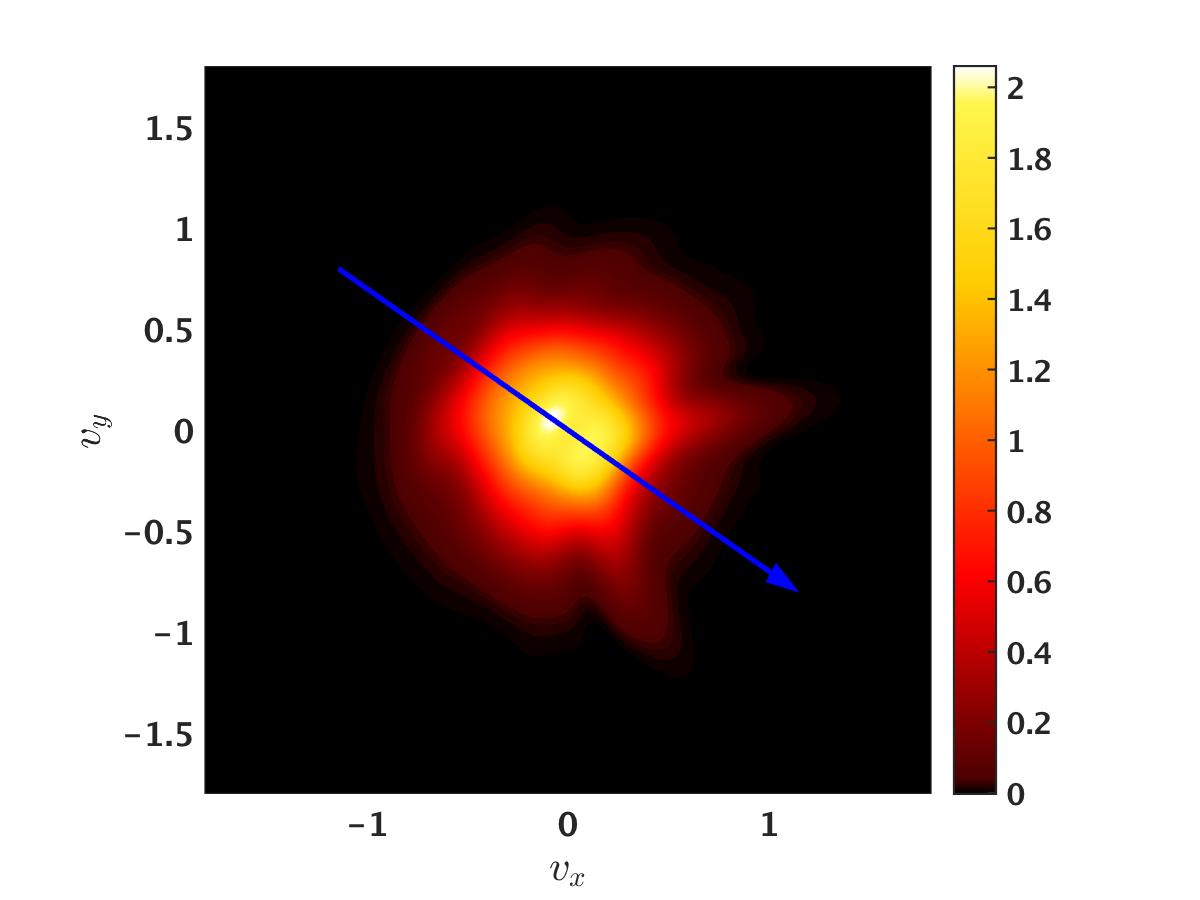}}
        \subfloat{\includegraphics[width=0.33\linewidth]{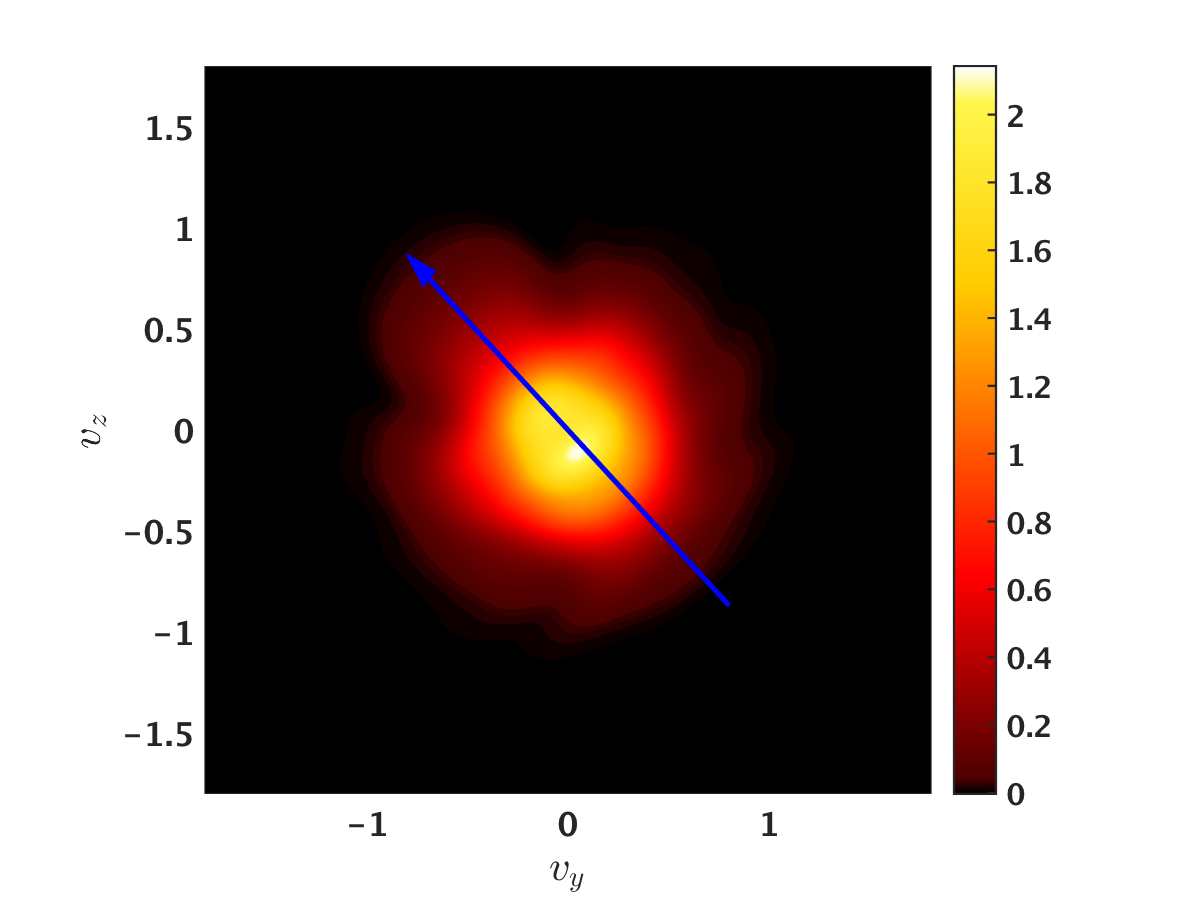}}
        \subfloat{\includegraphics[width=0.33\linewidth]{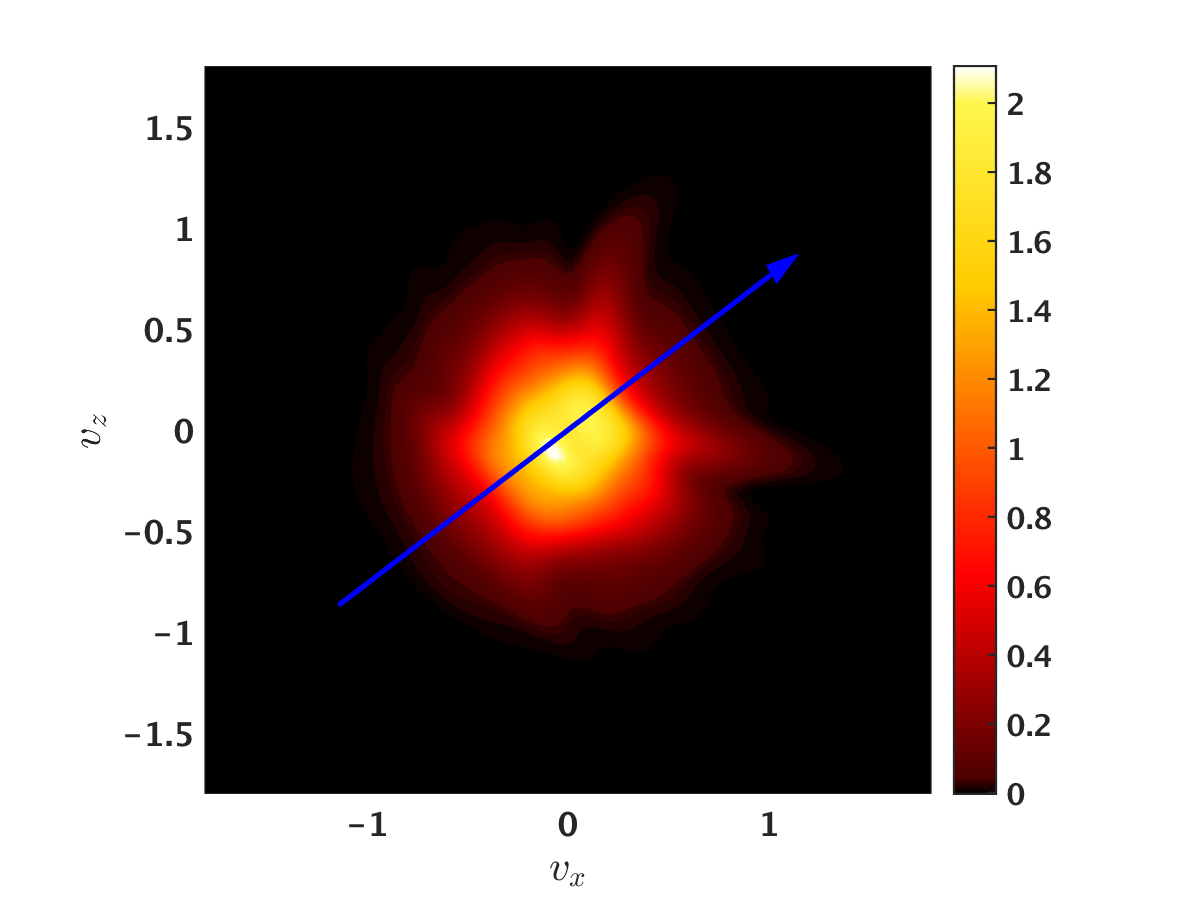}}\\
        \subfloat{\includegraphics[width=0.33\linewidth]{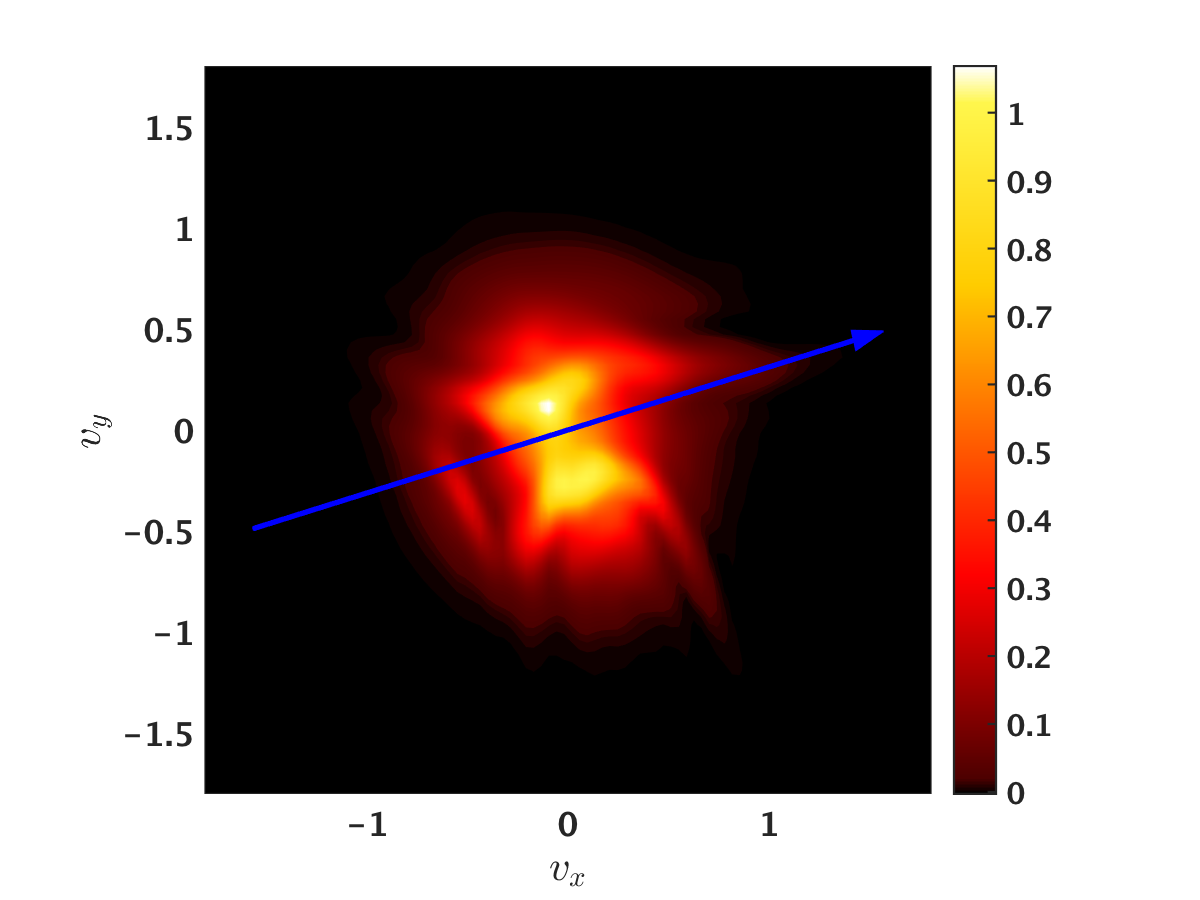}}
        \subfloat{\includegraphics[width=0.33\linewidth]{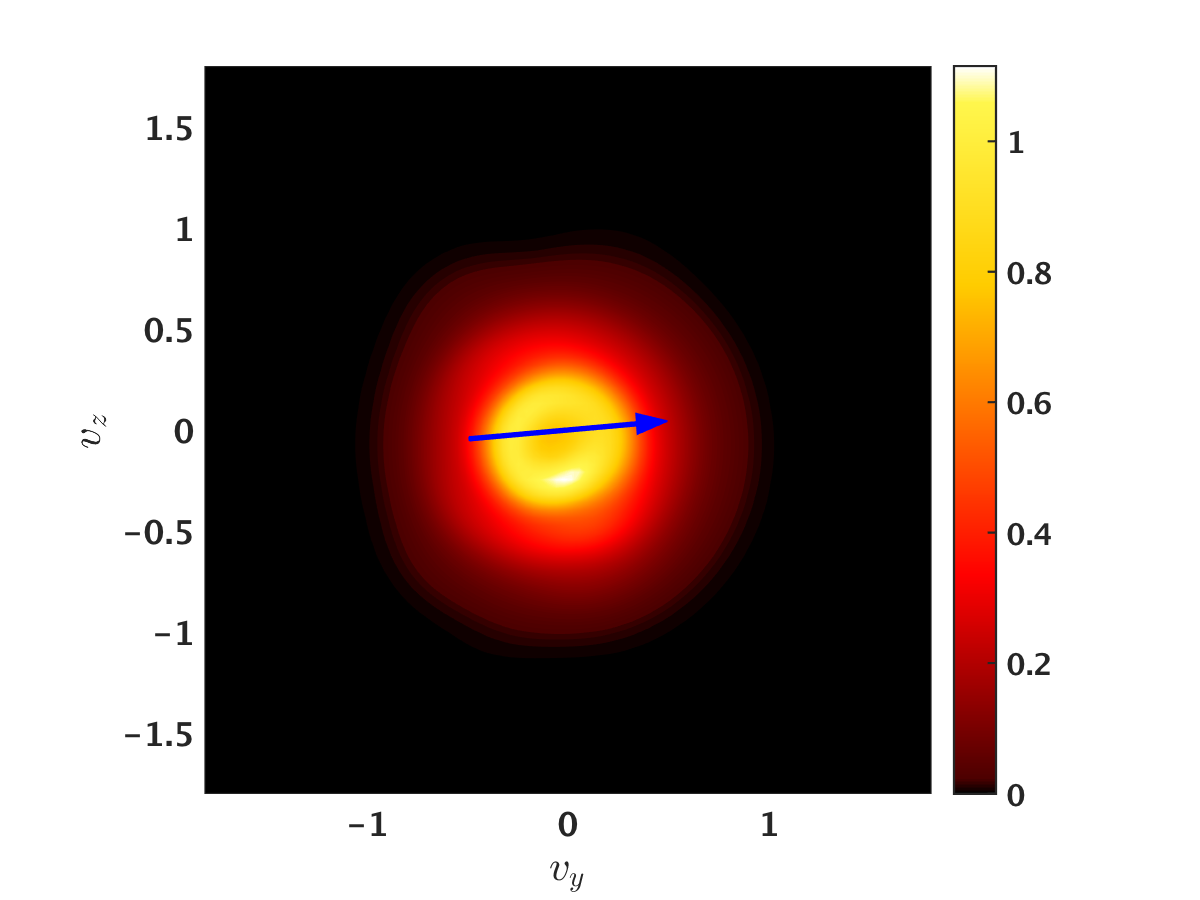}}
        \subfloat{\includegraphics[width=0.33\linewidth]{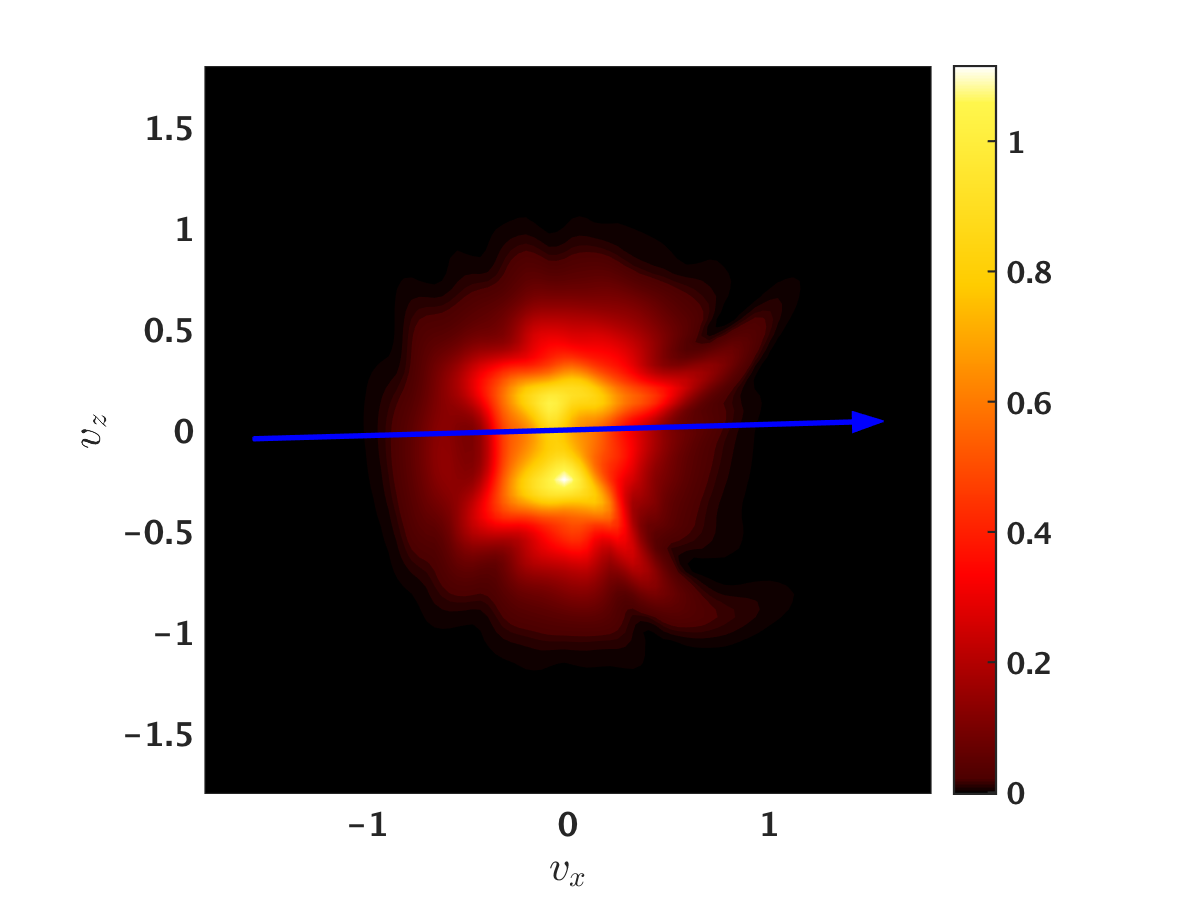}}
        \caption{2D cuts of the 3D electron distribution function at $x = 0.3d_p$ (top) and $x = 0.5d_p$ (bottom) through $v_z = 0$ (left), $v_x = 0$ (middle), and $v_y = 0$ (right) from the Riemann problem at $t = 0.125 \Omega_{cp}^{-1}$. The symmetry about the magnetic field in the region downstream of the interface is even more striking, confirming the distribution function is indeed gyrotropic in this region of configuration space. However, while a similar symmetry is present in the $v_y, v_z$ contour plot at the interface, the contours through $v_y = 0$ and $v_z = 0$ show significant structure along $v_x$. This is noteworthy because this is the dimension we have the configuration space component of the phase space advection, or the dimension in which the $\mvec{v} \cdot \nabla$ term in the Vlasov equation is non-zero, which is principally responsible for phase-mixing.}
        \label{fig:VlasovRiemann2DCuts}
  \end{figure}
  \begin{figure}
         \subfloat{\includegraphics[width=0.5\linewidth]{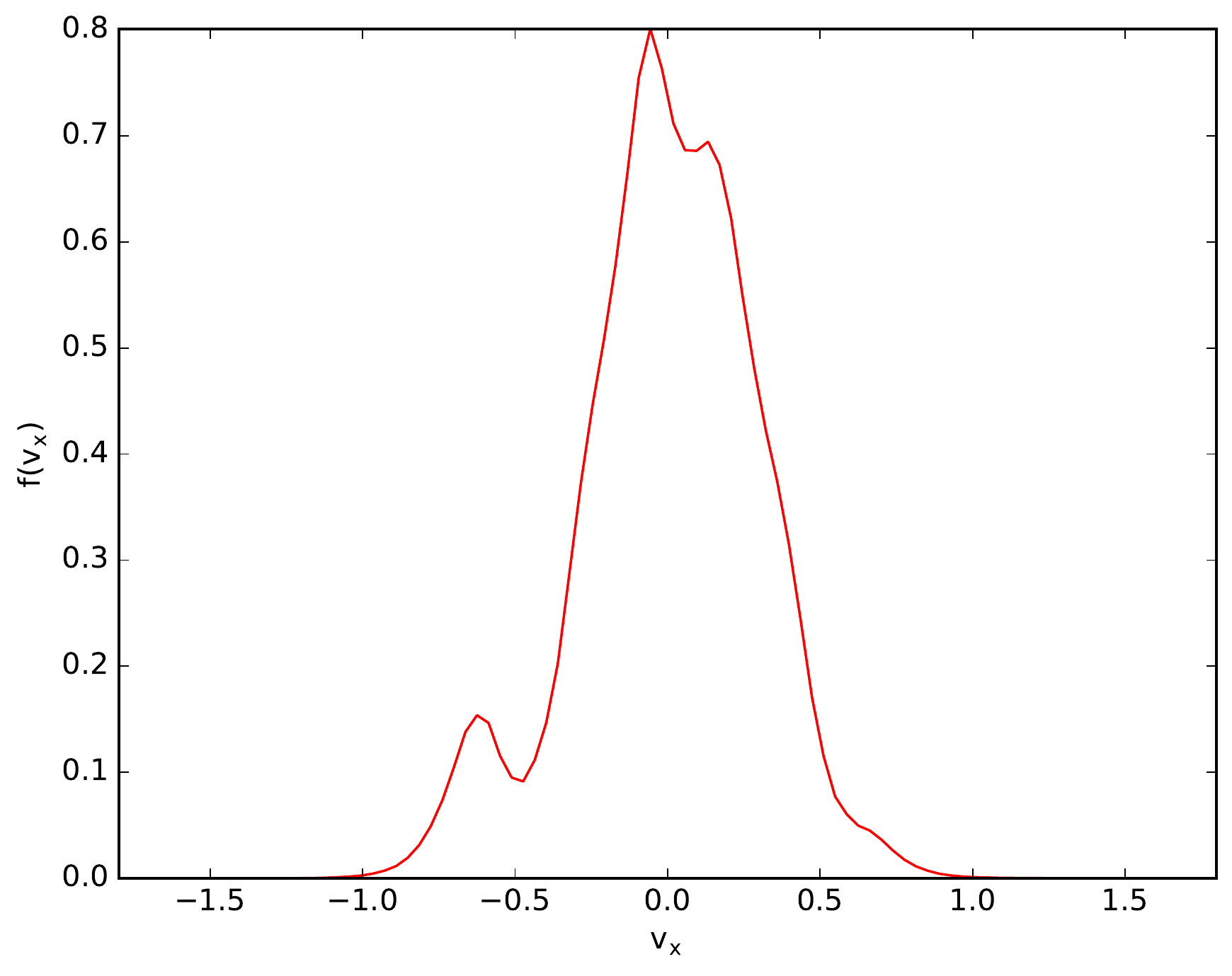}}
         \subfloat{\includegraphics[width=0.5\linewidth]{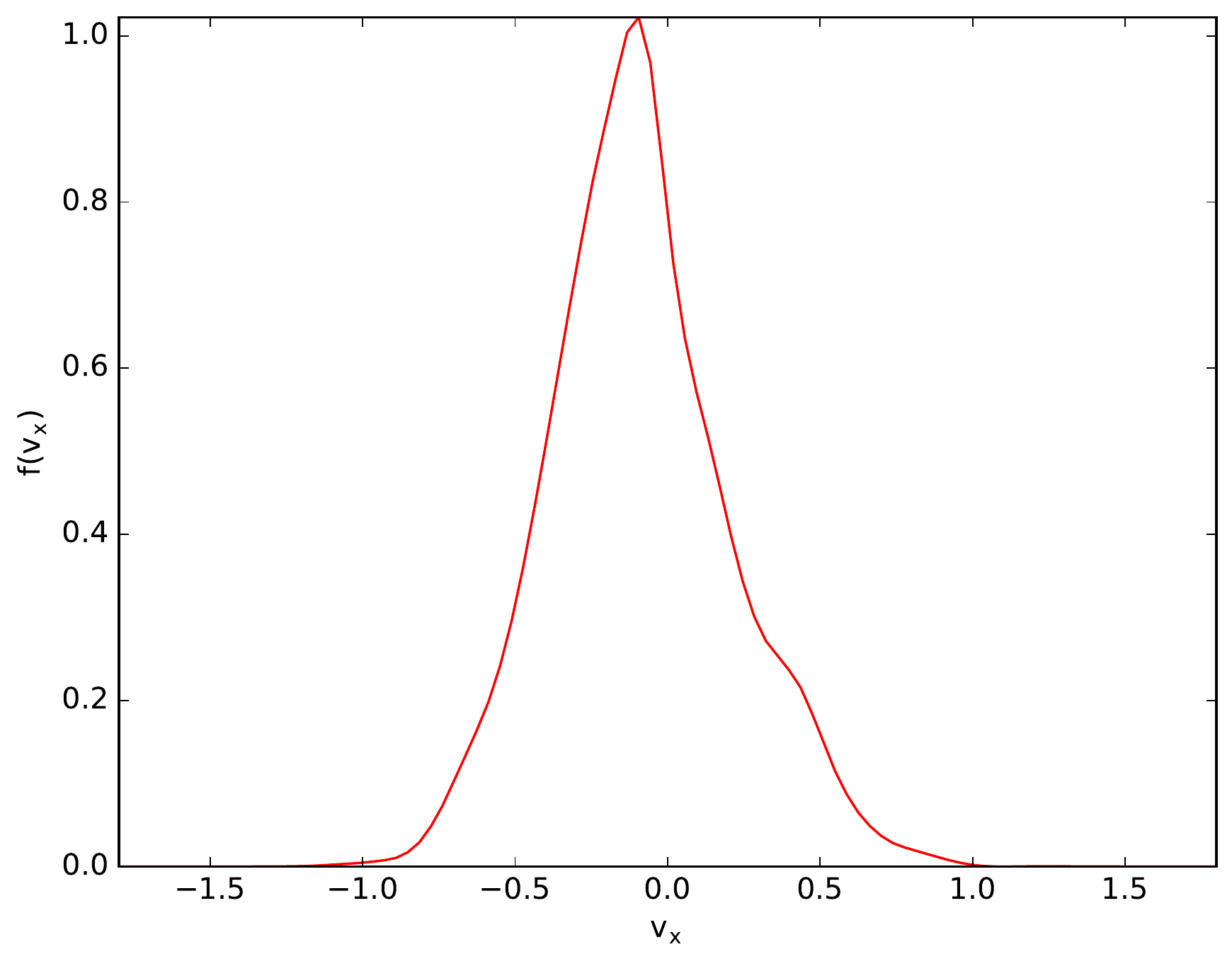}}
        
        \caption{Cuts through $v_y = -1/3 v_{th_e}$ (left) and $v_y = 1/3 v_{th_e}$ (right) of the $v_z = 0$ electron distribution function contour plot at $x =  0.5d_p$ in Figure~\ref{fig:VlasovRiemann2DCuts} from the Riemann problem at $t = 0.125 \Omega_{cp}^{-1}$. The structure present, especially in the $v_y = -1/3 v_{th_e}$ plot on the left, provides further credibility to the intuition that significant amounts of collisionless damping are occurring in the Vlasov simulation that are not present in the fluid simulation.}
        \label{fig:VlasovRiemann1DCuts}
  \end{figure} 
\end{par}
\subsubsection{The Orszag-Tang vortex}\label{sec:OT}
\begin{par}
Our final benchmark is a 2X3V, $(x, y, v_x, v_y, v_z)$, calculation of the Orszag-Tang vortex \citep{Orszag:1979}, a famous benchmark of magnetohydrodynamic (MHD) codes, and an extensively used simulation setup to study the physics of turbulence in 2D \citep{Parashar:2009, Li:2015}. The Orszag-Tang Vortex initial condition is given by:
\begin{align}
\mvec{B} & = -B_0 \sin\left(\frac{2 \pi y}{L_y} \right ) \mvec{\hat{x}} + B_0  \sin\left(\frac{4 \pi x}{L_x} \right ) \mvec{\hat{y}} + B_g \mvec{\hat{z}}, \\
\mvec{u}_p & = -u_0  \sin\left(\frac{2 \pi y}{L_y} \right ) \mvec{\hat{x}} + u_0  \sin\left(\frac{2 \pi x}{L_x} \right ) \mvec{\hat{y}}, \\
\mvec{J} & = \frac{1}{\mu_0} \nabla \times \mvec{B} = e n_0 (\mvec{u}_p - \mvec{u}_e), \\
\mvec{E} & = - \mvec{u}_e \times \mvec{B},
\end{align}
where we note the slight modification from the standard MHD initial condition is required in the equation for the electric field since we are now evolving a multi-species plasma. $B_g$ here is an out-of-plane guide field. The following parameters are employed: $\beta_p = 0.08, m_p/m_e = 25, \Omega_{ce}/\omega_{pe} = v_{Ae}/c = 0.1$, where $\beta_p$ is the proton plasma beta, $\beta_p = 2v_{th_p}^2/v_{Ap}^2$ and $v_{As}$ is the Alfv\'en speed for species $s$, $v_{As} = B_g/\sqrt{\mu_0 n_0 m_s}$. The amplitudes for the initial condition are, $B_0 = 0.2 B_g, u_0 = 0.2 v_{Ap}$. $n_0$ is taken to be 1.0. The size of our box in configuration space is $(L_x, L_y) = (20.48 d_p, 20.48 d_p)$ and the velocity space extents are $\pm 5.6 v_{th_s}$. Periodic boundary conditions are employed in configuration space and zero flux boundary conditions are employed in velocity space. The resolution of the simulation is $40^2 \times 8^3$ with polynomial order 2. Our resolution is such that, with polynomial order 2, we resolve the electron skin depth, $d_e$. Schematically, the initial condition is given in Figure~\ref{fig:OTIC}. For comparison, an analogous 10 moment two fluid calculation is run, though with larger resolution to avoid grid scale diffusion in the fluid algorithm. By 10 moments, we mean that the algorithm solves for all six independent components of the pressure tensor for both species. The algorithm description is given in Hakim 2008 \citep{Hakim:2008}. After some time, the flow and magnetic field become turbulent due to the twisting of the field and flow. In Figure~\ref{fig:OTFluidMoments}, the out-of-plane current and magnetic field contours are compared between the two models, and qualitative agreement is found.
\begin{figure}
\centerline{
    \setkeys{Gin}{width=0.5\linewidth}
	\includegraphics{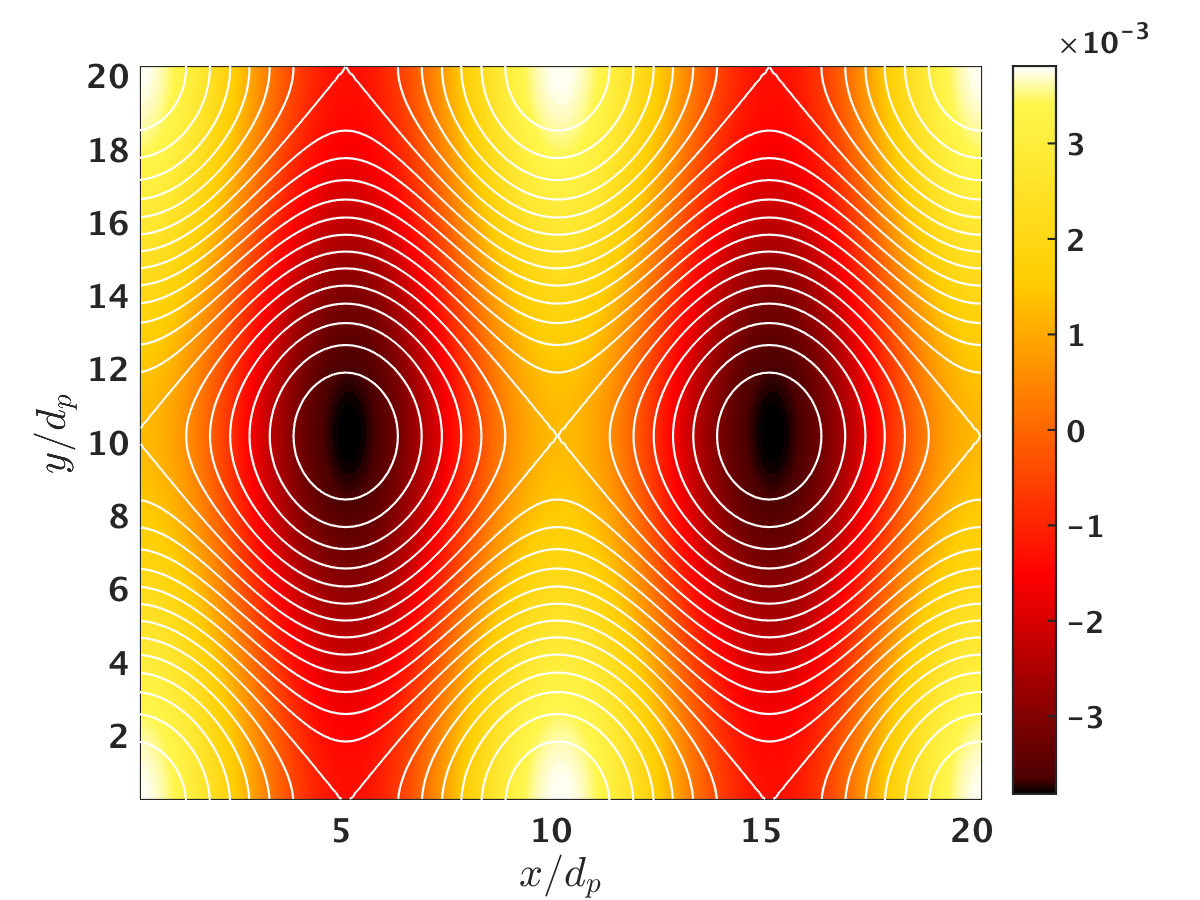}
	\includegraphics{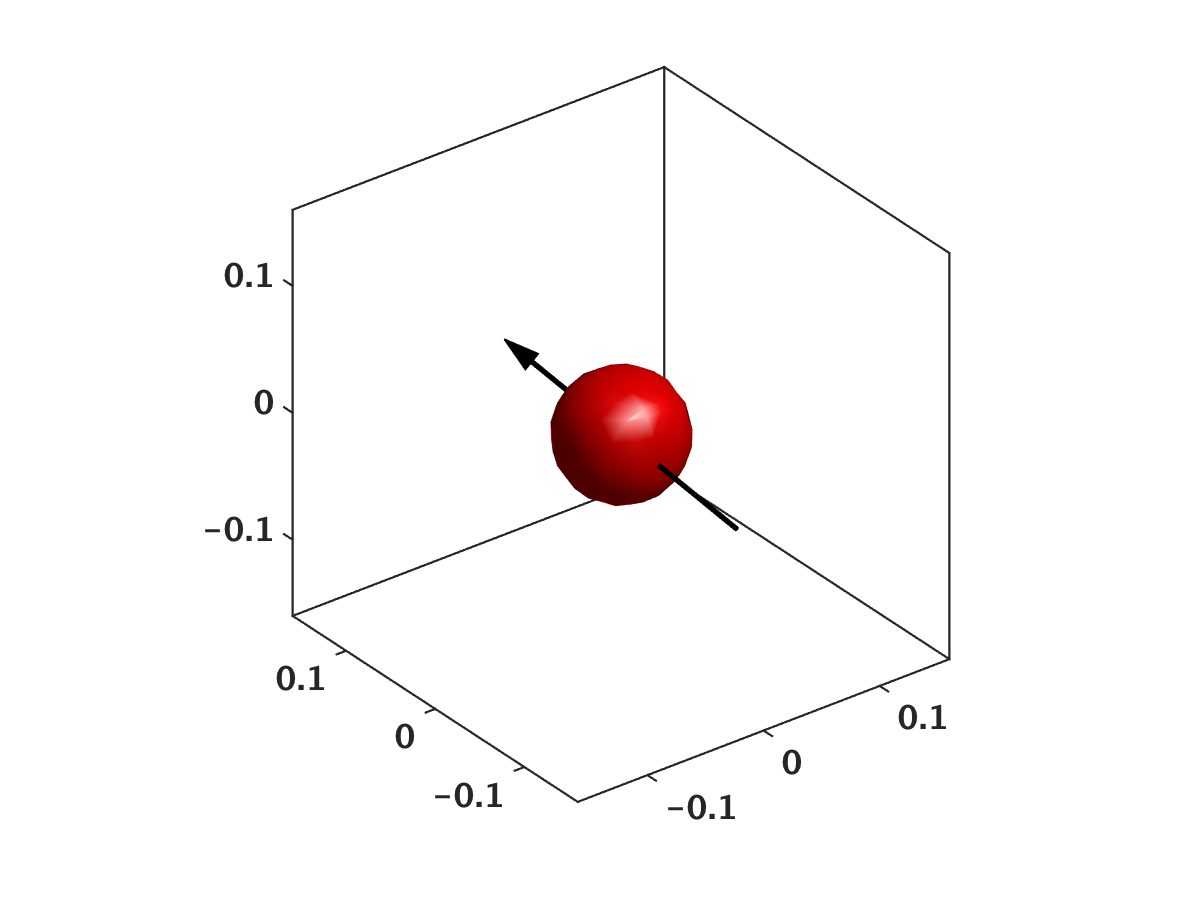}
	}
	\caption{The out of plane current $J_z$ with magnetic field contours superimposed (left) and an example initial proton distribution function (right) at a single point in configuration space for the initial condition of the Orszag-Tang vortex.}
	\label{fig:OTIC}
\end{figure}
\begin{figure}
	\subfloat{\includegraphics[width=0.5\linewidth]{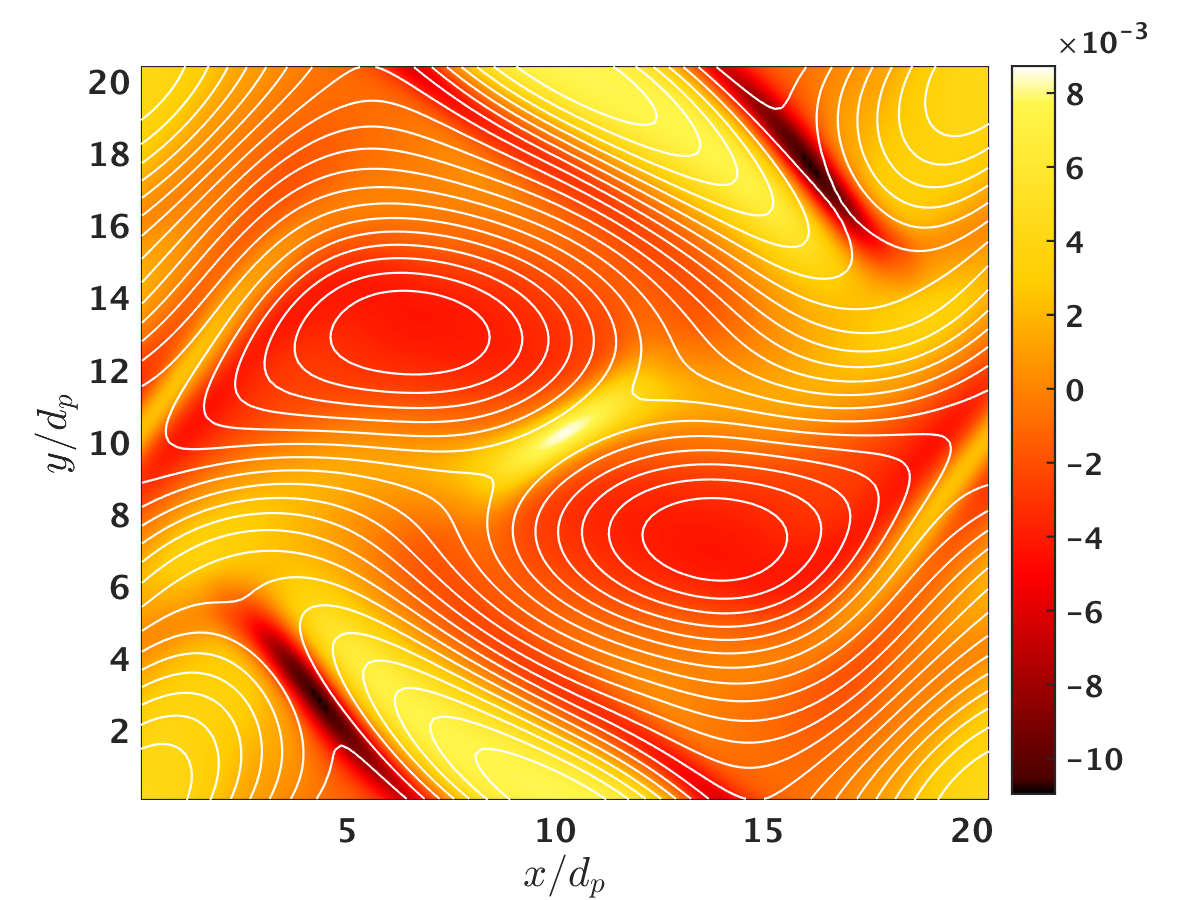}}
	\subfloat{\includegraphics[width=0.5\linewidth]{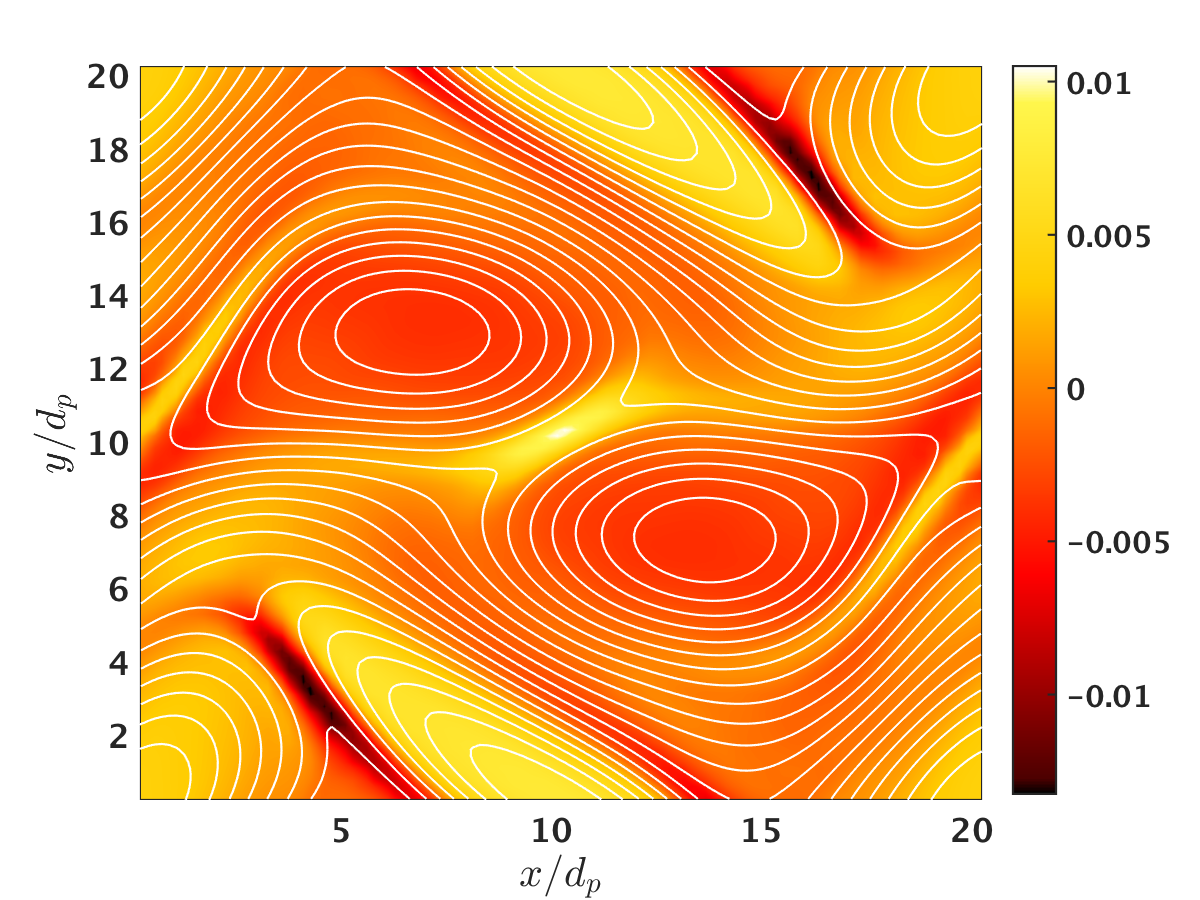}}\\
	\subfloat{\includegraphics[width=0.5\linewidth]{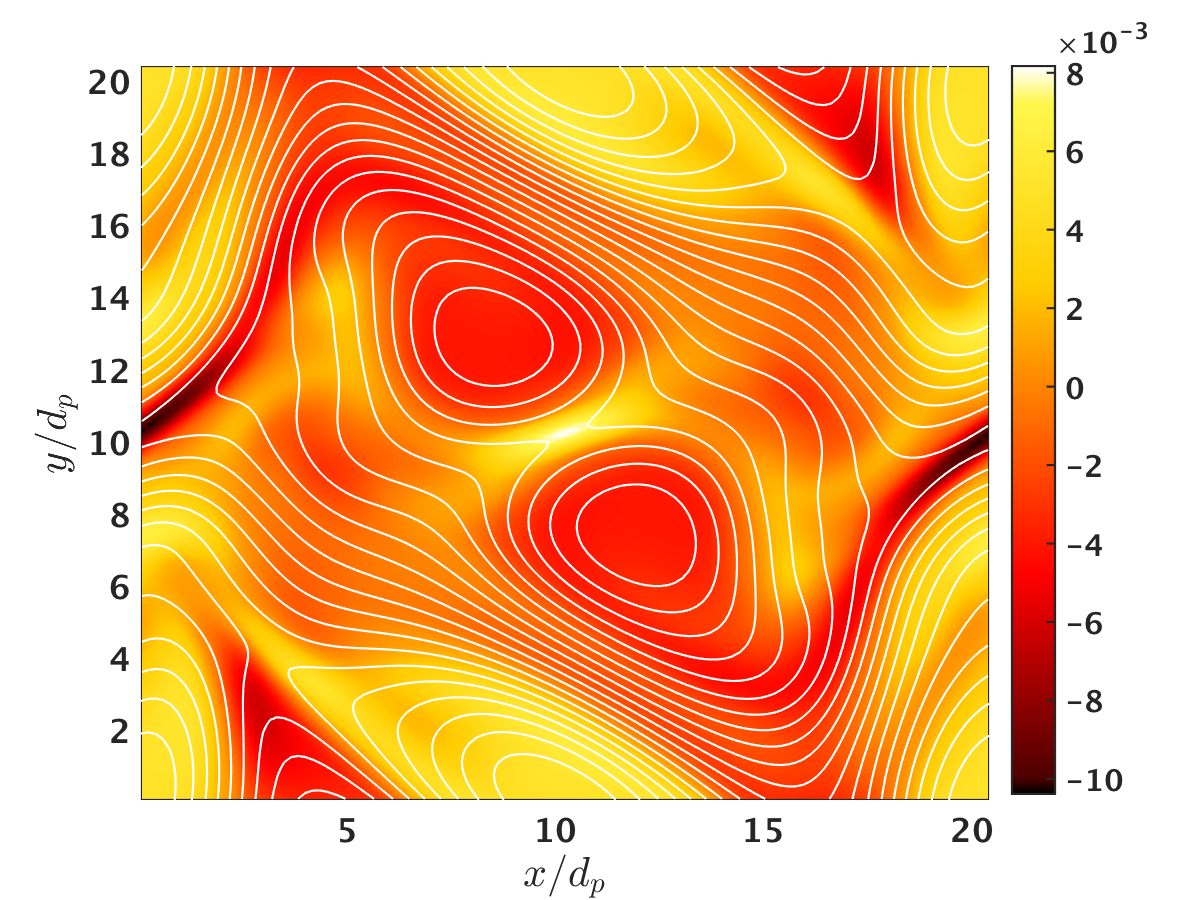}}    
	\subfloat{\includegraphics[width=0.5\linewidth]{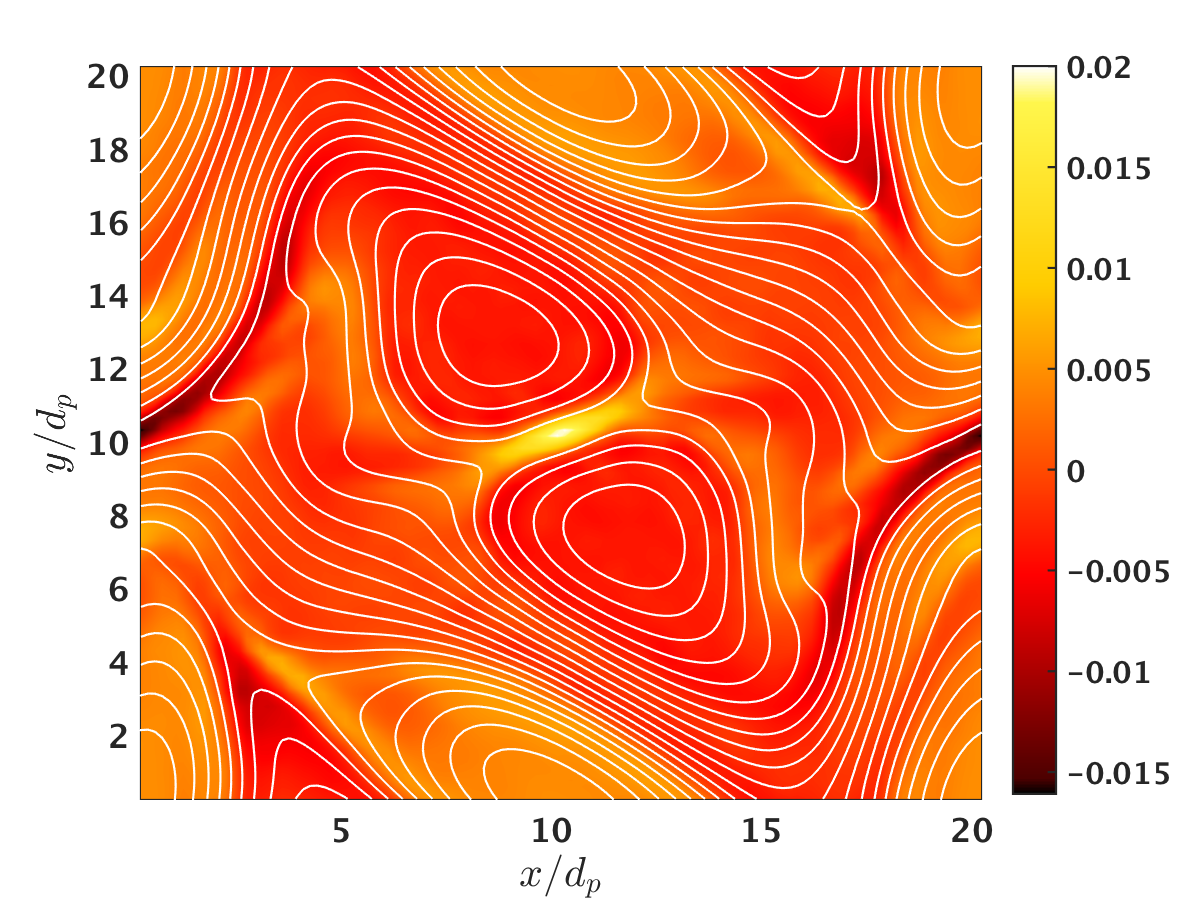}}
	\caption{The out of plane current $J_z$ with magnetic field contours superimposed at 20 $\Omega_{cp}^{-1}$ (top) and 40 $\Omega_{cp}^{-1}$(bottom) for 10 moment two fluid (left) and Vlasov-Maxwell (right) models from the Orszag-Tang vortex simulation. We note the difference in color bar for the Vlasov-Maxwell simulation, which shows clearly the currents are stronger by $\sim 20$ percent in the current layers which develop, for example, in the middle of the domain via the twisting motion of the flow and field at 20 $\Omega_{cp}^{-1}$. The currents in the Vlasov-Maxwell continue to grow in comparison to the fluid model at 40 $\Omega_{cp}^{-1}$.}
	\label{fig:OTFluidMoments}
\end{figure}
Though many large scale features between the two models agree as they should, there are key physical differences between the two models which highlight the more complete description of the plasma the Vlasov-Maxwell system provides. Since the 10 moment two fluid model can be thought of as a truncation of the Vlasov-Maxwell model, it is apparent from the larger currents in the Vlasov-Maxwell simulation that there is physics missing from the fluid model, either due to the higher fluid moments beyond the pressure tensor, or the closure of the fluid system, which causes the thinning current sheets in certain regions of the domain. It is also instructive to consider the energetics and spectra of the two simulations, plotted in Figures~\ref{fig:OTEnergetics} and \ref{fig:OTSpectra} respectively.
\begin{figure}
	\subfloat{\includegraphics[width=\textwidth]{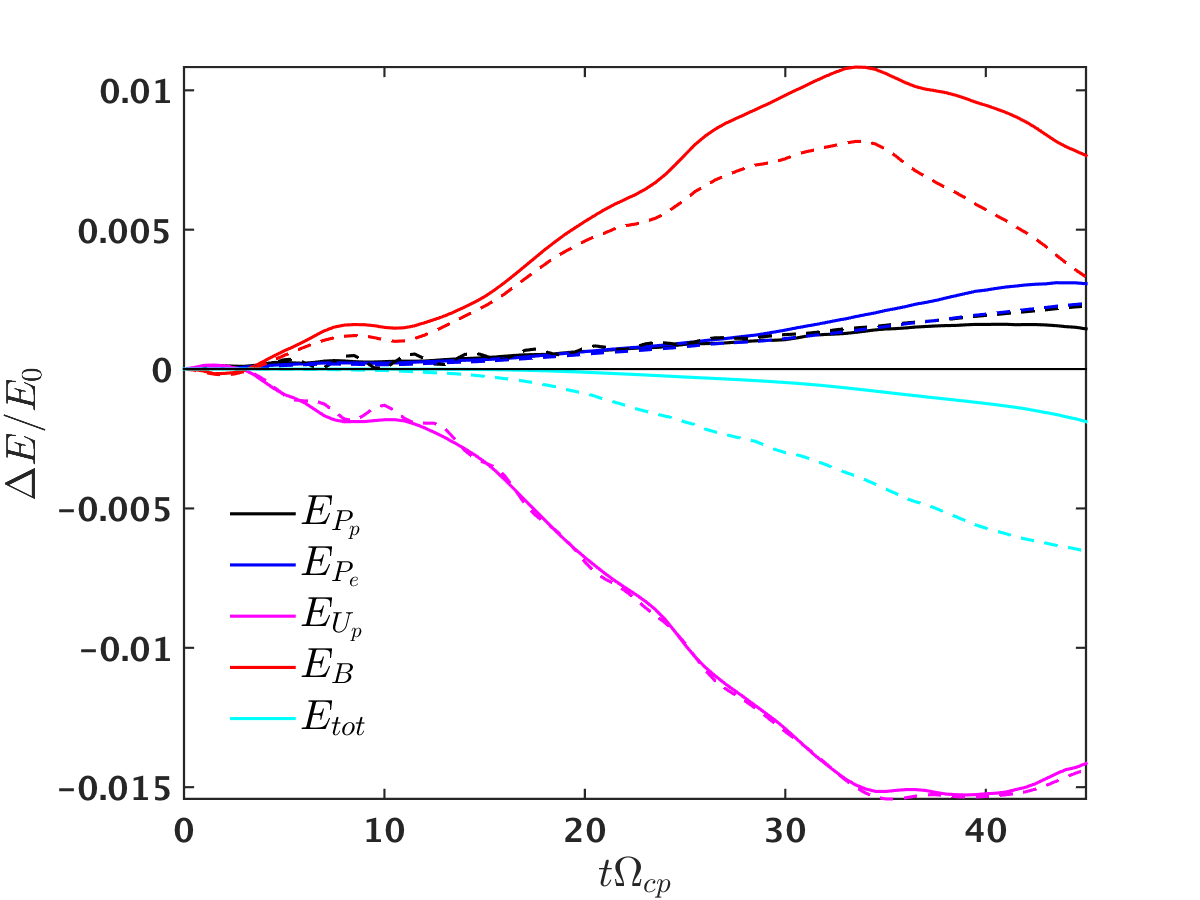}}
  	\caption{The evolution of various components of the energy, the proton thermal energy, $E_{P_p} = \frac{1}{2} m_p n_p v_{th_p}^2 $, the electron thermal energy, $E_{P_e} =  \frac{1}{2} m_e n_e v_{th_e}^2$, the proton flow energy, $E_{U_p} =  \frac{1}{2} m_p n_p |\mvec{u}_p|^2$, and the magnetic energy, $E_B = \frac{1}{2 \mu_0} |\mvec{B}|^2 $, along with the total energy, in the Orszag-Tang vortex simulation. Note that the quantities plotted are the relative changes in each of the energy components compared to the total energy at $t=0$, i.e, $\Delta E_{comp}/E_0$. Solid lines are the Vlasov-Maxwell simulation and dashed lines are the 10 moment two fluid simulation.}
	\label{fig:OTEnergetics}
\end{figure}
\begin{figure}
	\subfloat{\includegraphics[width=\textwidth]{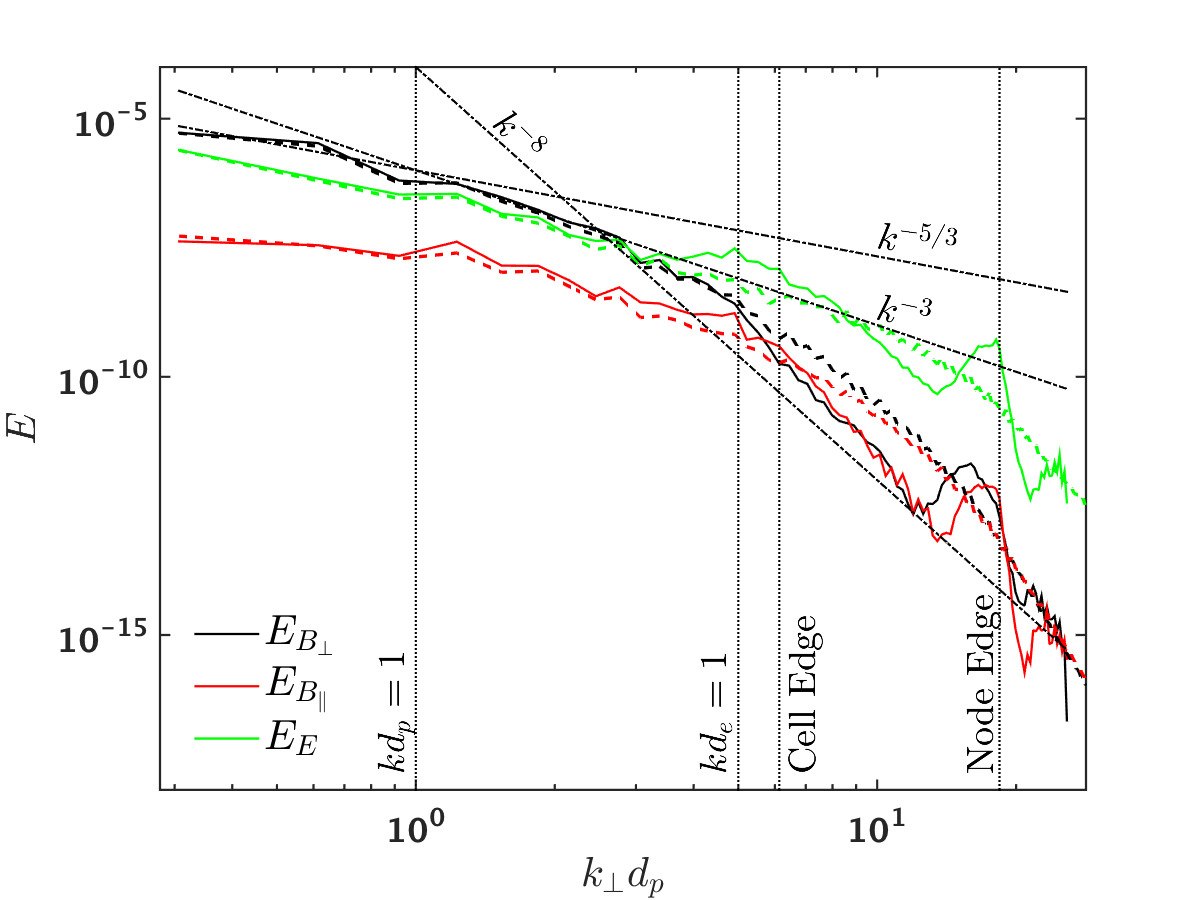}}
	\caption{The energy spectra for Vlasov (solid lines) and 10 moment two fluid (dashed lines) at $t = 35 \Omega_{cp}^{-1}$ of the Orszag-Tang vortex simulation. The perpendicular and parallel magnetic energy spectra are directions with respect to the magnetic field. The vertical dashed lines define a variety of scales where the spectra noticeably steepens, such as the proton inertial length, $d_p$, and the electron inertial length, $d_e$. In addition, we note the two limits of our resolution: the grid scale, and where our resolution cuts off from the polynomial representation in each cell, which we call the node edge. It is apparent that the polynomial representation in each cell does give resolution beyond the grid scale. The various slopes denote the fitted spectra at each of the spectral breaks. These compare well with previous results which suggest that at scales $k d_p < 1$, the spectra in $k_\perp$ is fitted by a -5/3 slope, before steepening at $k d_p > 1, k d_e < 1$, and steepening further at $k d_e > 1$. The build up in the Vlasov spectra at small scales is due to the fact that the collisionless system has no means of dissipating energy.}
	\label{fig:OTSpectra}
\end{figure}
Of course, one of the most interesting diagnostics remains: plots of the distribution function. Using a k-filter, where the distribution function is Fourier transformed, all Fourier coefficients except a single $k_x$ and $k_y$ are set to zero, and the distribution function is inverse Fourier transformed back to real space, one can consider how the distribution function is evolving at various scales. It is particularly noteworthy in Figure~\ref{fig:DistributionFunctionPlots} that a large amount of structure is contained in the Vlasov-Maxwell system. Such features could easily be smoothed away in a standard PIC simulation due to noise and coarse graining.
\begin{figure}
        \subfloat{\includegraphics[width=0.33\linewidth]{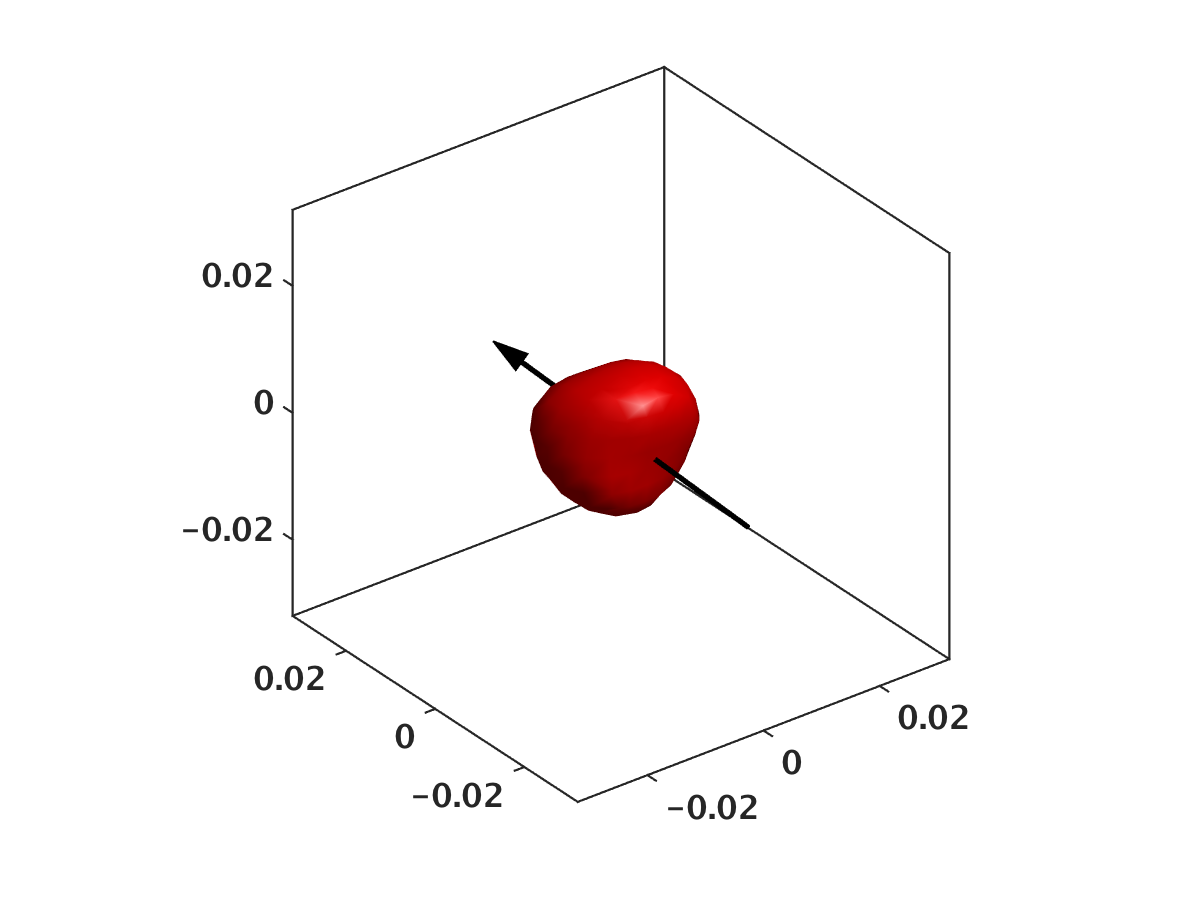}}
        \subfloat{\includegraphics[width=0.33\linewidth]{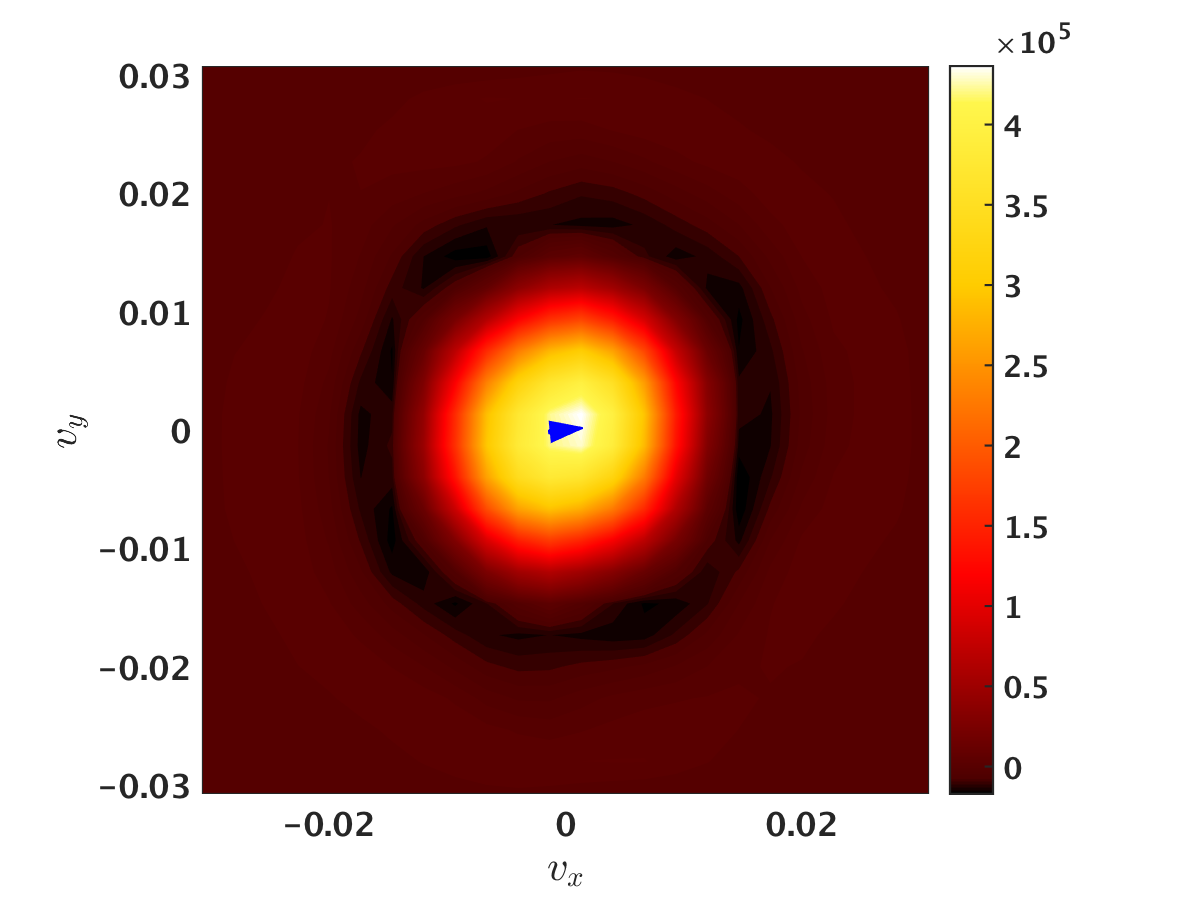}}
        \subfloat{\includegraphics[width=0.33\linewidth]{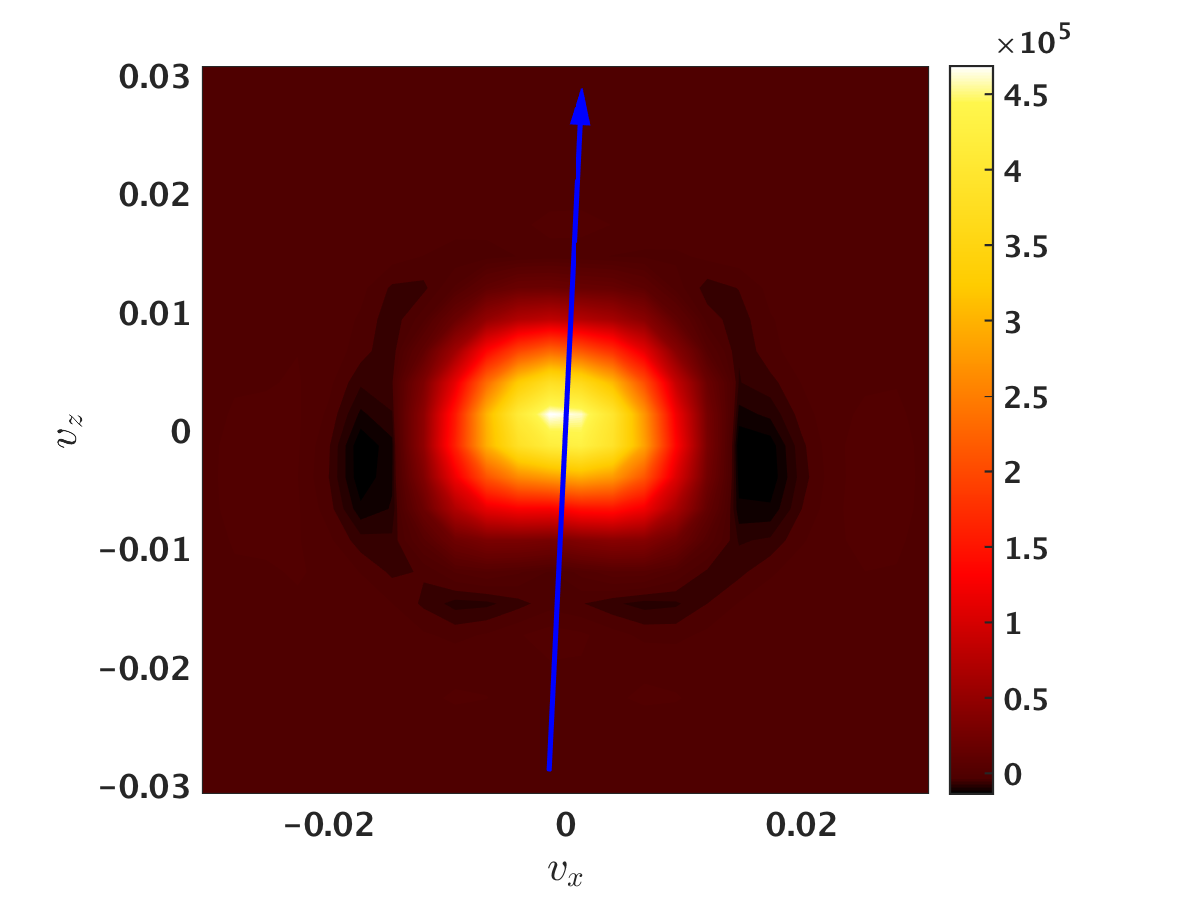}}\\
        
        \subfloat{\includegraphics[width=0.33\linewidth]{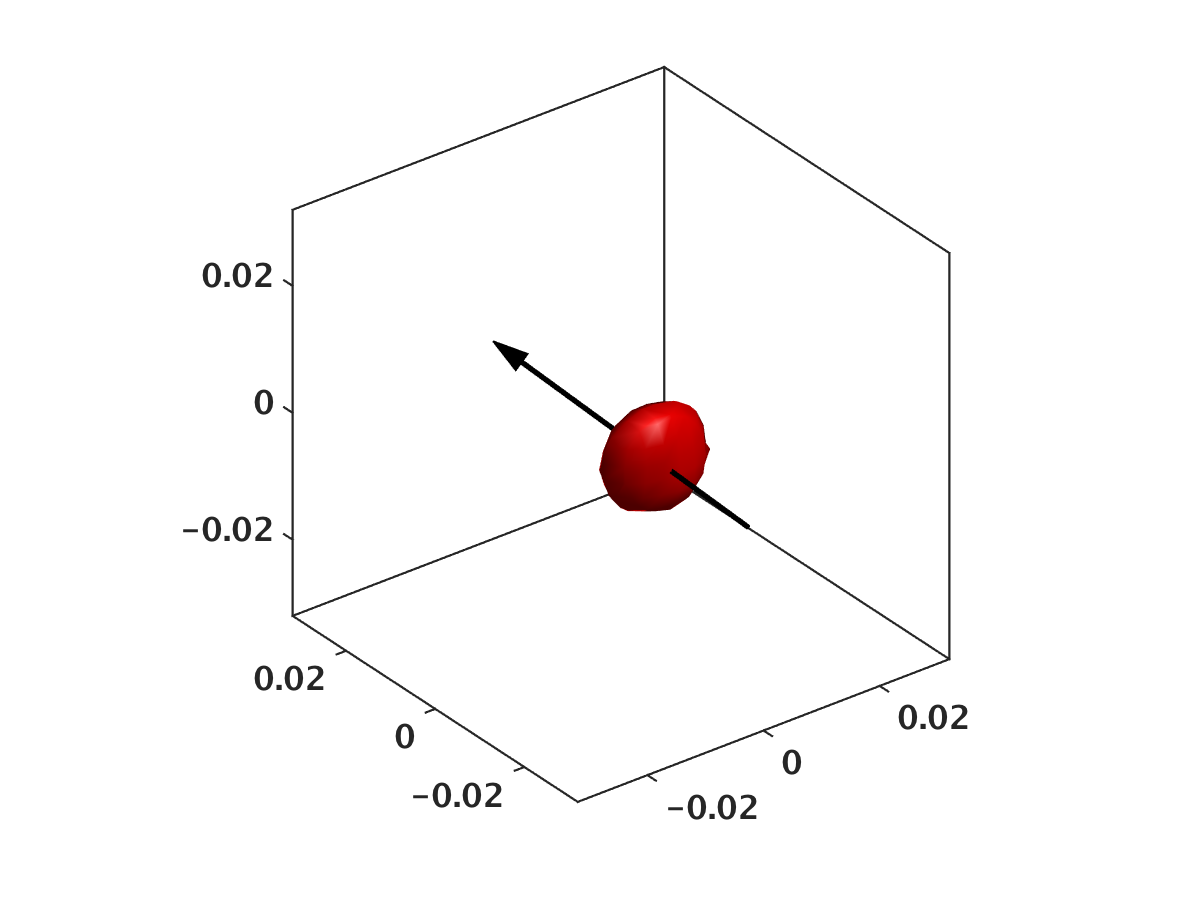}}
        \subfloat{\includegraphics[width=0.33\linewidth]{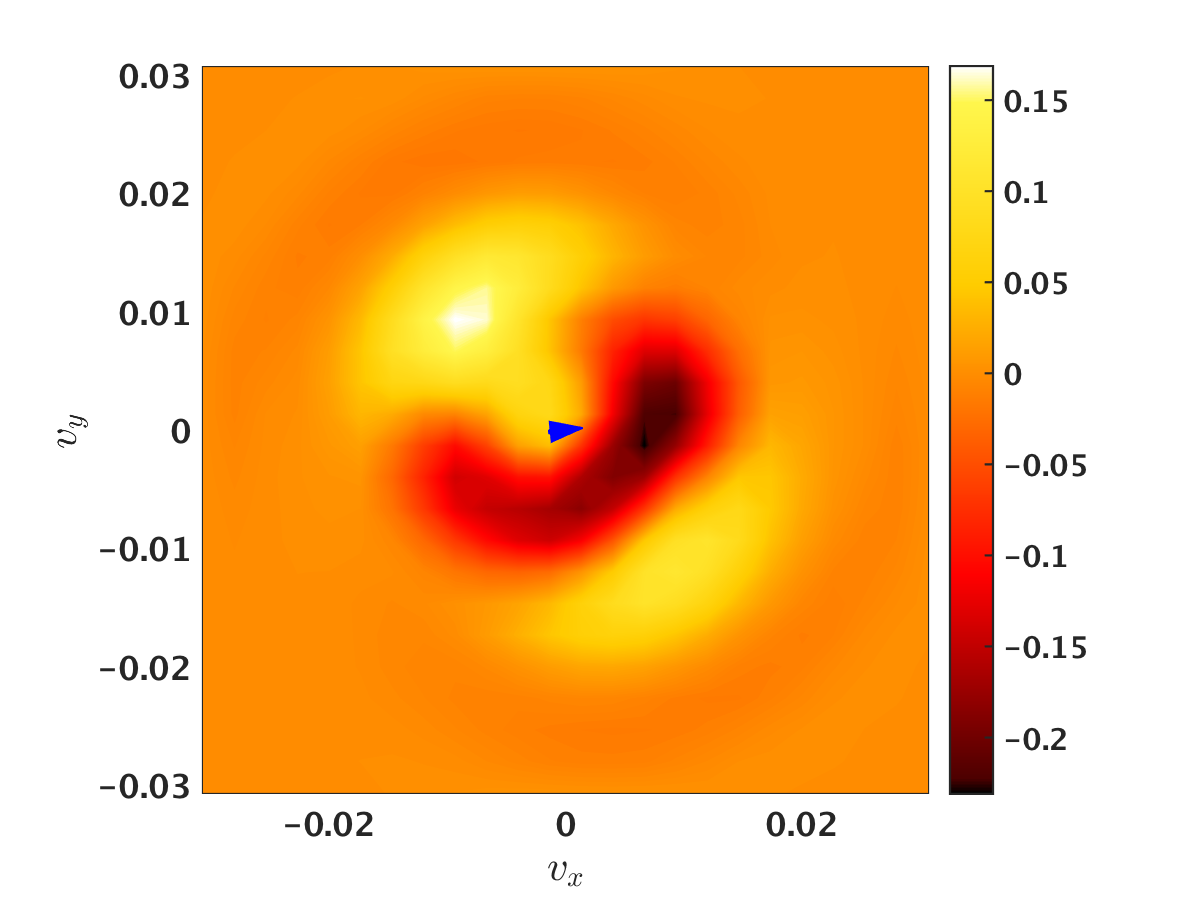}}
        \subfloat{\includegraphics[width=0.33\linewidth]{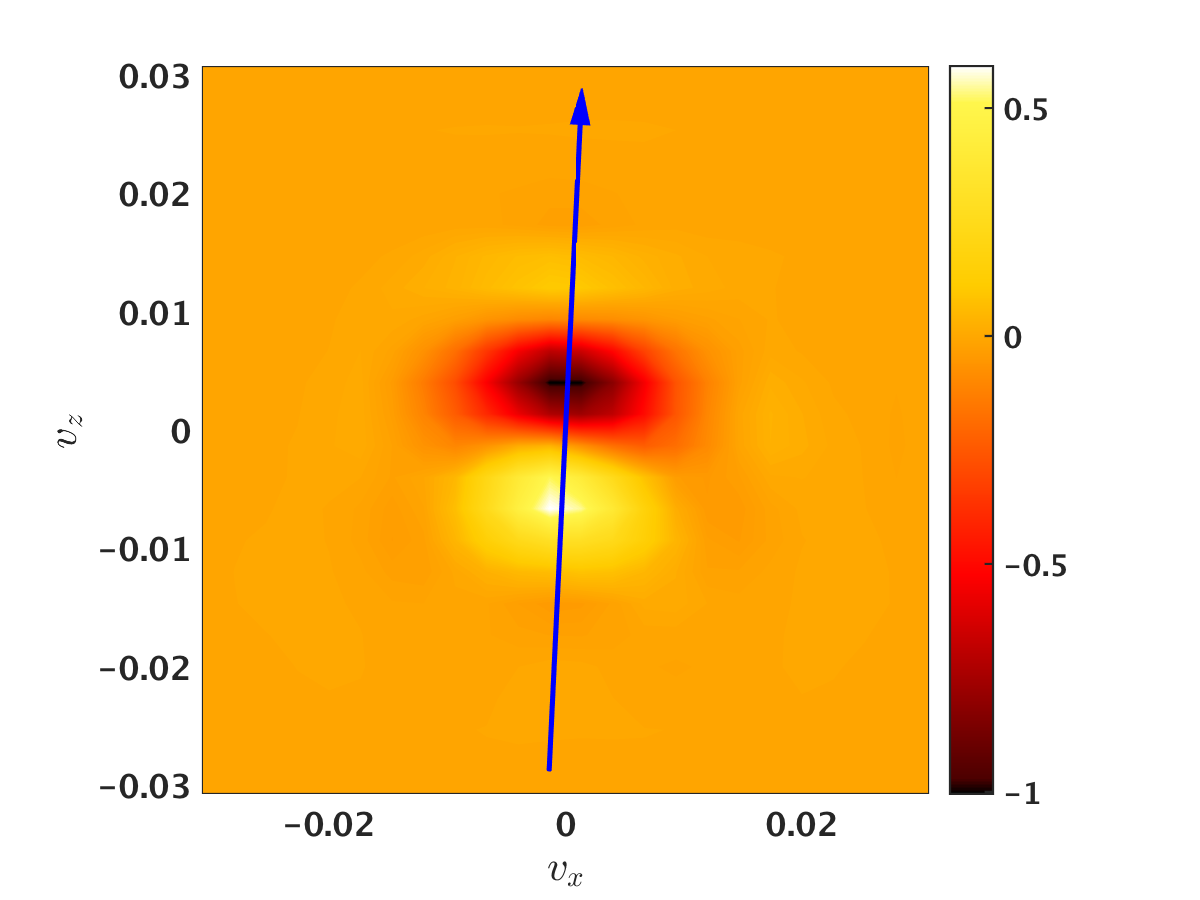}}\\
        
        \subfloat{\includegraphics[width=0.33\linewidth]{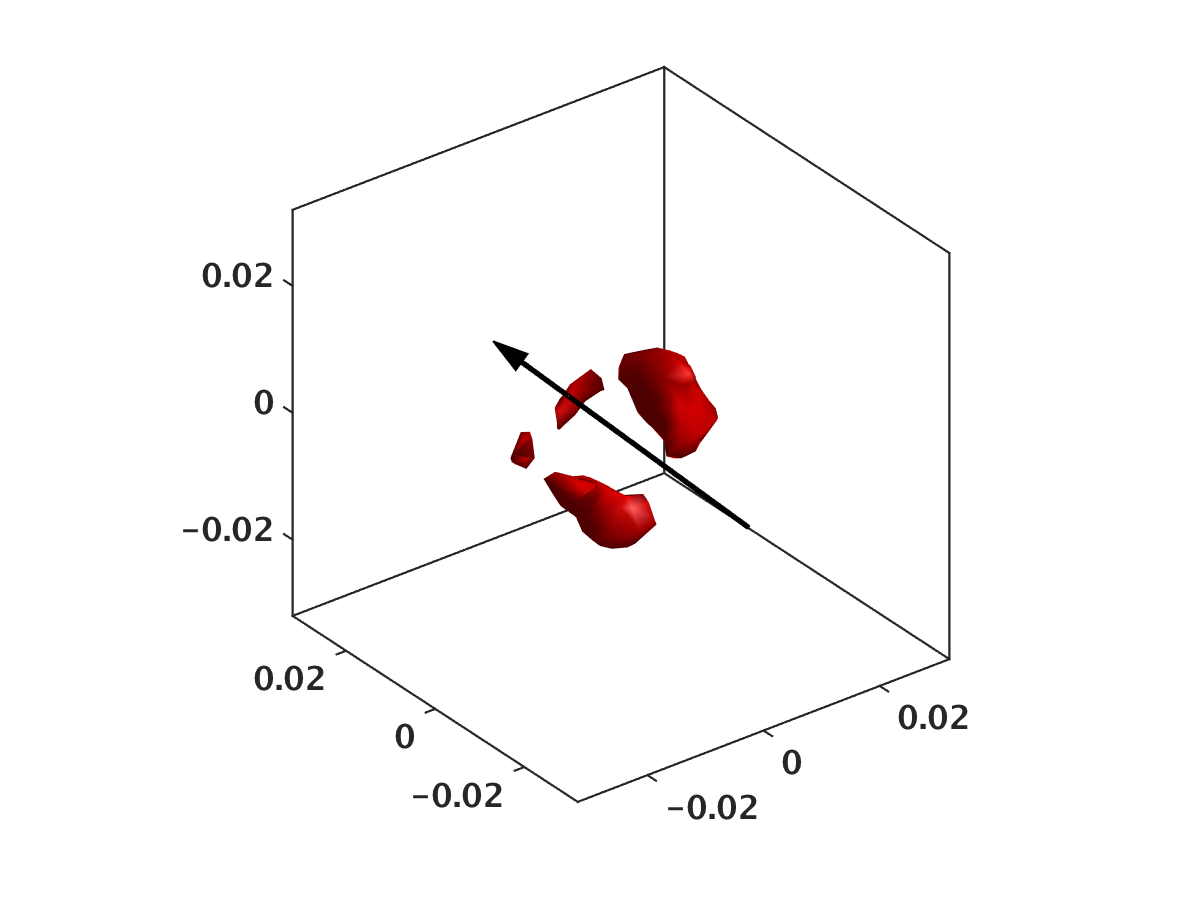}}
        \subfloat{\includegraphics[width=0.33\linewidth]{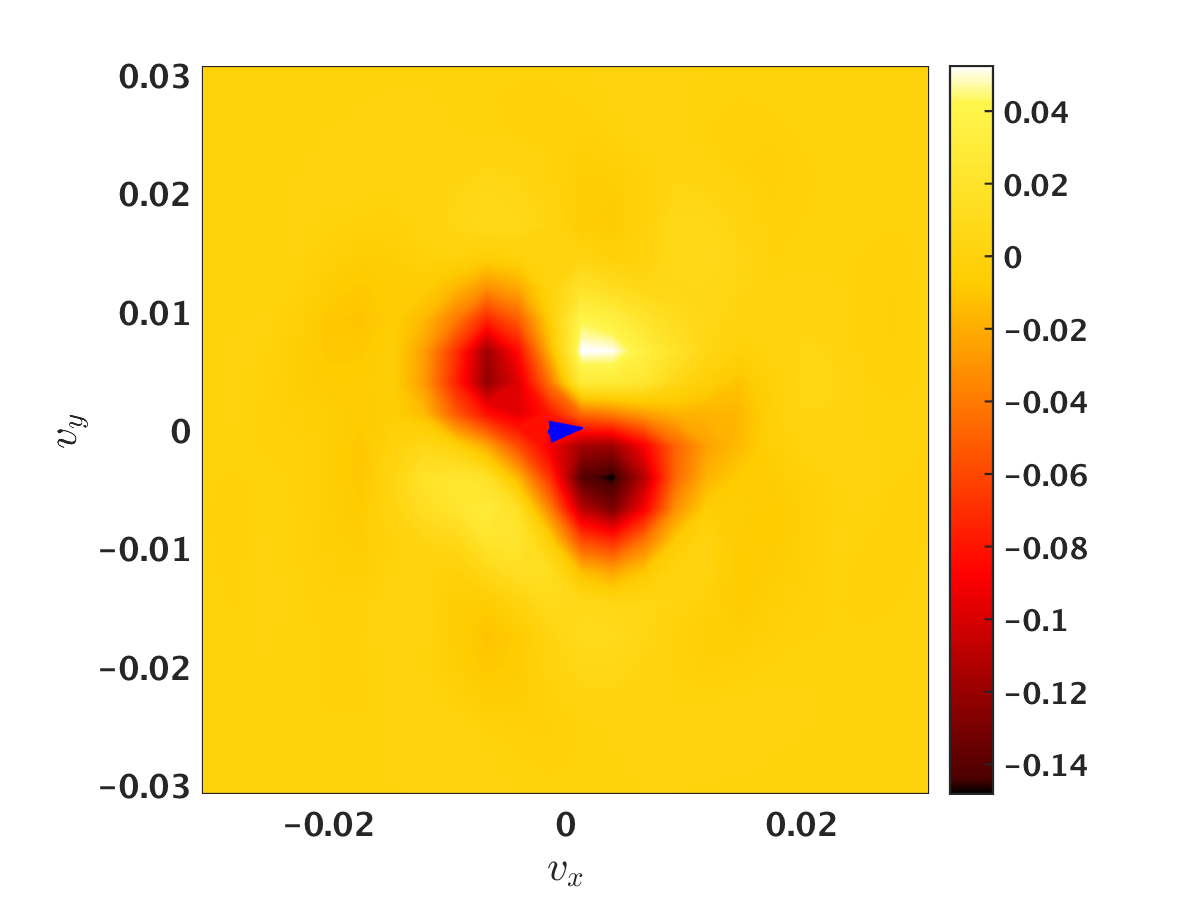}}
        \subfloat{\includegraphics[width=0.33\linewidth]{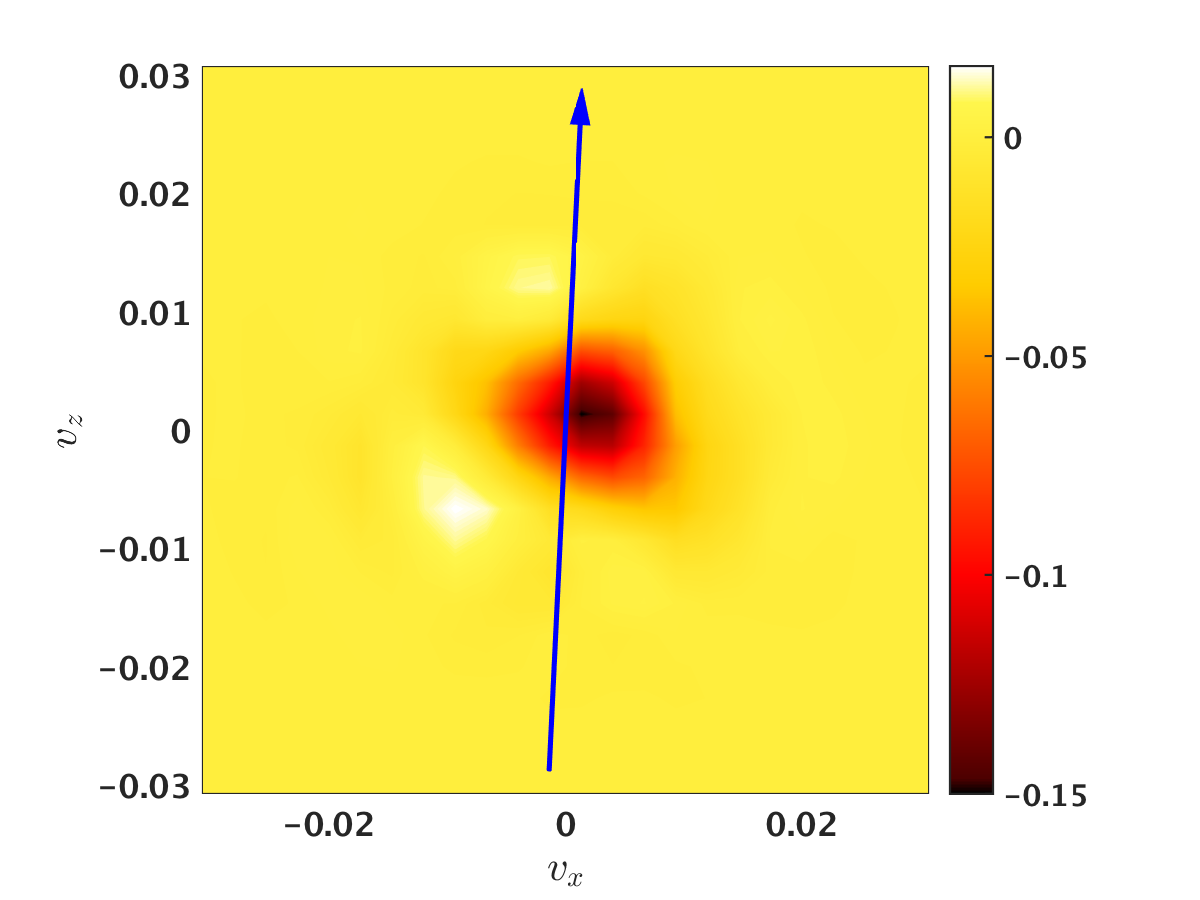}} \\

        \subfloat{\includegraphics[width=0.33\linewidth]{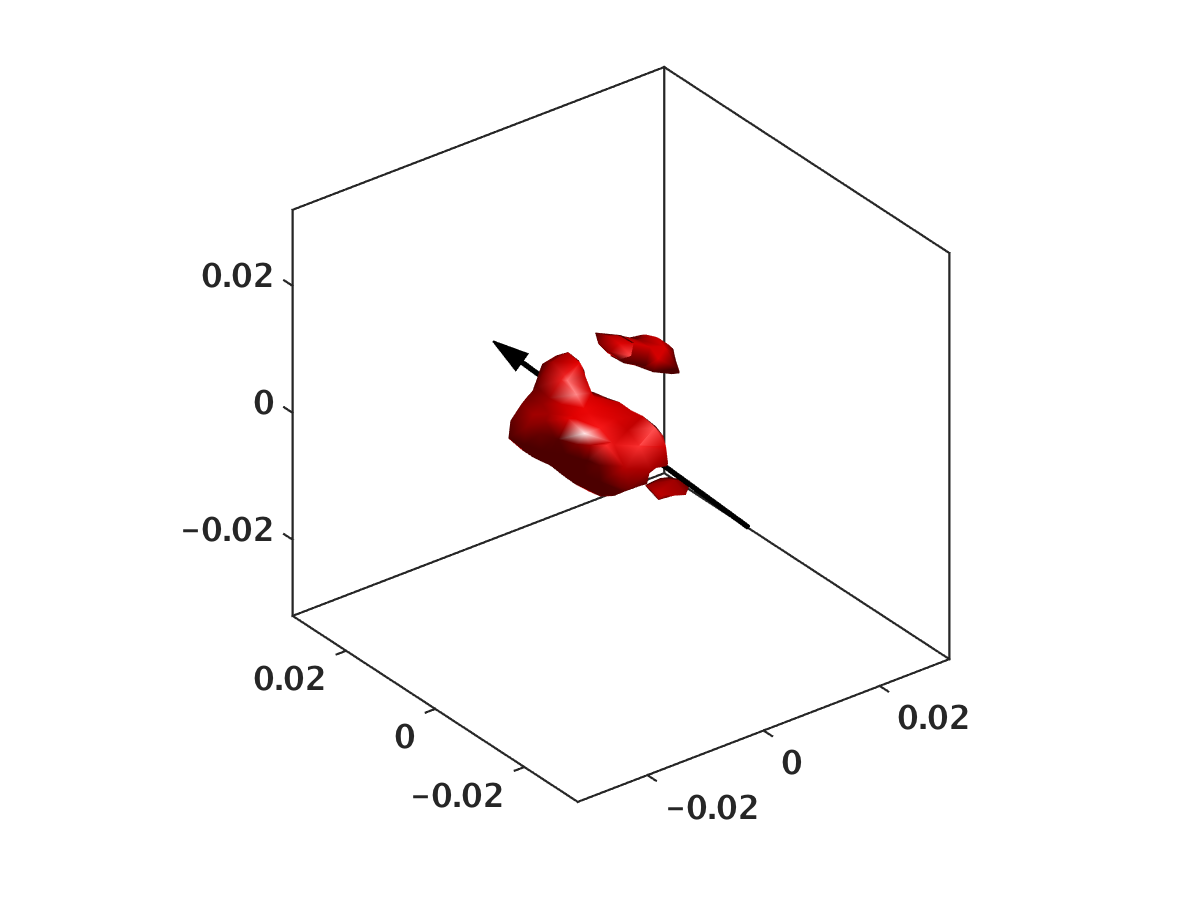}}
        \subfloat{\includegraphics[width=0.33\linewidth]{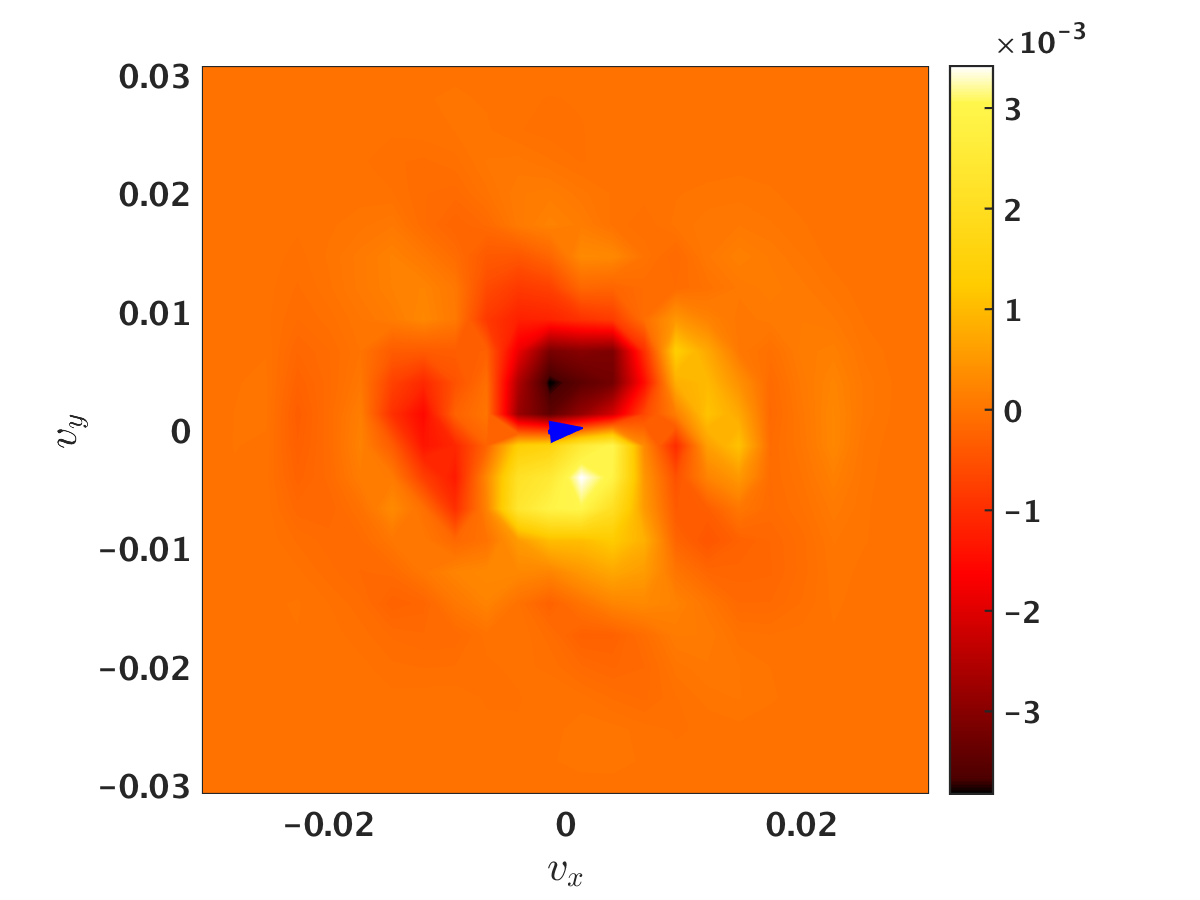}}
        \subfloat{\includegraphics[width=0.33\linewidth]{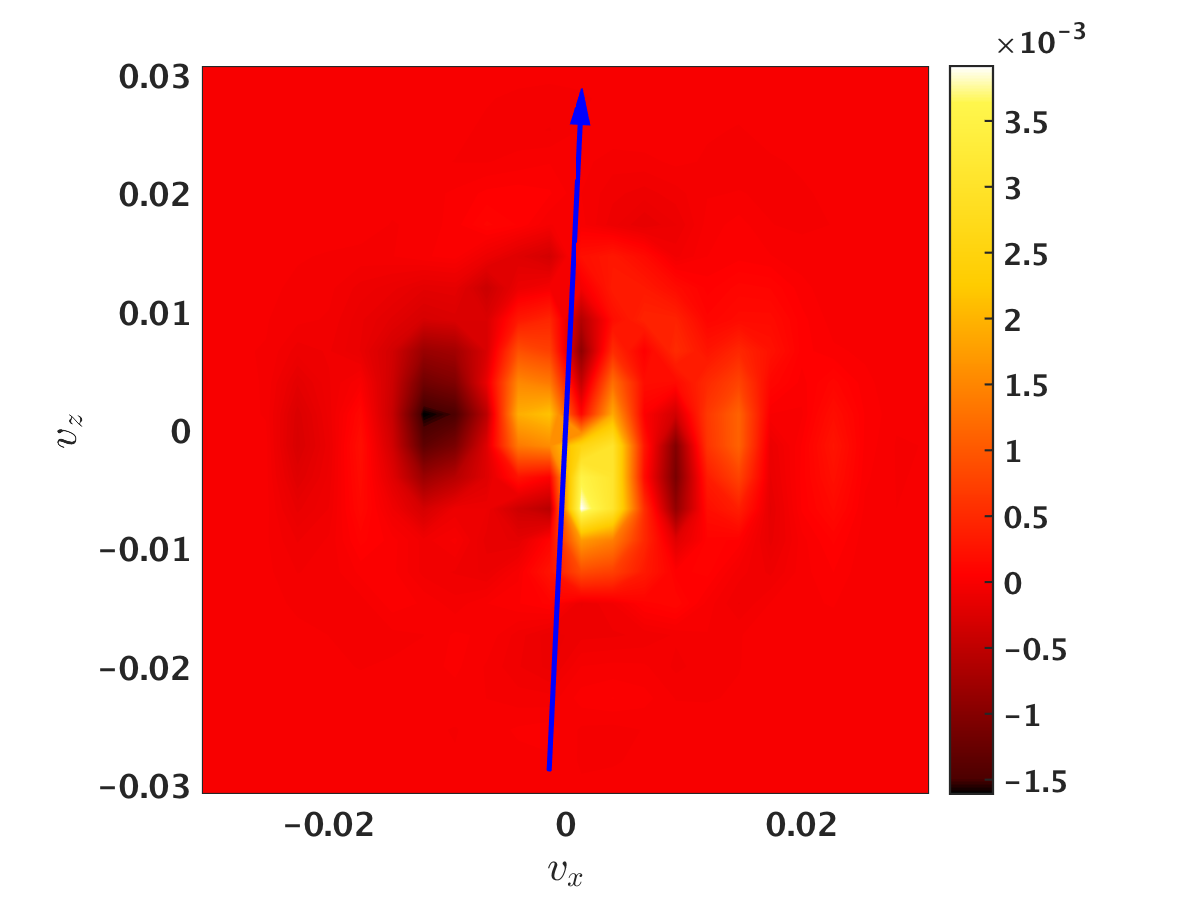}}
        \caption{Isosurfaces (left) and cuts through $v_z = 0$ (middle) and $v_y = 0$ (right) for the full proton distribution function (top) and the k-filtered distribution function at $k d_p \sim 1$ (top middle), $k d_p \sim 5$ (bottom middle), and $k d_p \sim 10$ (bottom) from the Orszag-Tang vortex simulation at $t= 40 \Omega^{-1}_{cp}$ and $(x,y) = (10.24 d_p, 10.24 d_p)$. The most striking feature of the k-filtered distribution functions is the level of signal displayed by the Vlasov-Maxwell solver, with eight orders of magnitude from the full distribution function to the smallest scale k-filter. There are small positivity violations in the full distribution function, but note that the distribution function being negative after k-filtering is not alarming, as k-filtering is a proxy for $\delta f$.}
        \label{fig:DistributionFunctionPlots}
\end{figure}
\end{par}
\begin{par}
A natural follow up question is then, what does such signal provide? Beyond the ability to more easily compare with observations and experiments if performing verification studies of the solver, having noise-free distribution functions increases the number of tools available for analysis of a simulation. For example, consider a diagnostic called a field-particle correlation \citep{Howes2017, Klein2016, Klein2017}. There are three field-particle-correlations which we will consider here, defined as follows:
\begin{align}
C_1(\mvec{v}, t, \tau) & = C(q_s v_\parallel f(\mvec{x}_0, \mvec{v}, t), E_\parallel(\mvec{x}_0, t)), \label{eq:LandauProxy} \\
C_2(\mvec{v}, t, \tau) & = C\left(-v_\parallel \frac{m_s v_\perp^2}{2 |\mvec{B}|} f(\mvec{x}_0, \mvec{v}, t), B_\parallel(\mvec{x}_0, t)\right), \label{eq:TransitProxy} \\
C_3(\mvec{v}, t, \tau) & = C(q_s v_\perp f(\mvec{x}_0, \mvec{v}, t), E_\perp(\mvec{x}_0, t)) \label{eq:CyclotronProxy},
\end{align}
where $\parallel, \perp$ are directions with respect to the local magnetic field, $\tau$ is the correlation time, and $\mvec{x}_0$ is the configuration space point at which the correlation is being performed. The function $C(\cdot, \cdot)$ is defined as,
\begin{align}
C(f(\mvec{x}_0, \mvec{v}, t_i), g(\mvec{x}_0, t_i)) = -\frac{1}{N} \sum_{j=i}^{i+N} f(\mvec{x}_0, \mvec{v}, t_j) g(\mvec{x}_0, t_j).
\end{align}
The correlation time $\tau$ determines the summation extents, i.e., $\tau = N dt$, $dt = t_j - t_{j-1}$, while the index $i$ determines the specific time at which the correlation is being measured. In other words, we determine the correlation at time $t_i$ over a specified correlation time $\tau$ by partitioning the correlation time $\tau$ into $N dt$ chunks. The power of this diagnostic lies in its ability to ascertain secular energy exchange between the particles and fields, which may be occurring amidst complicated dynamics such as the broadband turbulence found in the Orszag-Tang set up. For example, the first correlation, \eqr{\ref{eq:LandauProxy}}, provides a proxy for determining whether Landau damping or parallel acceleration may be taking place at the specified point. Landau damping of a particular wave mode can be determined with a judicious choice of correlation time. The other two correlations, Eqns.\thinspace(\ref{eq:TransitProxy})-(\ref{eq:CyclotronProxy}), provide proxies for two other types of wave particle resonances: transit time damping and cyclotron damping. There are two ways to represent this data, as stack plots in $t-v_\parallel$, for analyzing how much parallel phase mixing may be occurring, or in $v_\perp-v_\parallel$ at a specified time as a way of ascertaining where the wave-particle resonances may be occurring, and thus narrowing down what wave modes may be present in the simulation. These two ways are plotted in Figures~\ref{fig:OTStackPlots} and \ref{fig:OTvperpvparallel}. The level of signal is striking, as determining this amplitude of a correlation could be challenging in a standard particle-in-cell simulation without a sizable number of particles per cell, especially given the gradients in velocity space of the distribution function present in the field-particle-correlation definitions, which tend to exacerbate counting noise issues. A future publication will explore the significance of the correlations presented here.
\begin{figure}
	\subfloat{\includegraphics[width=\textwidth]{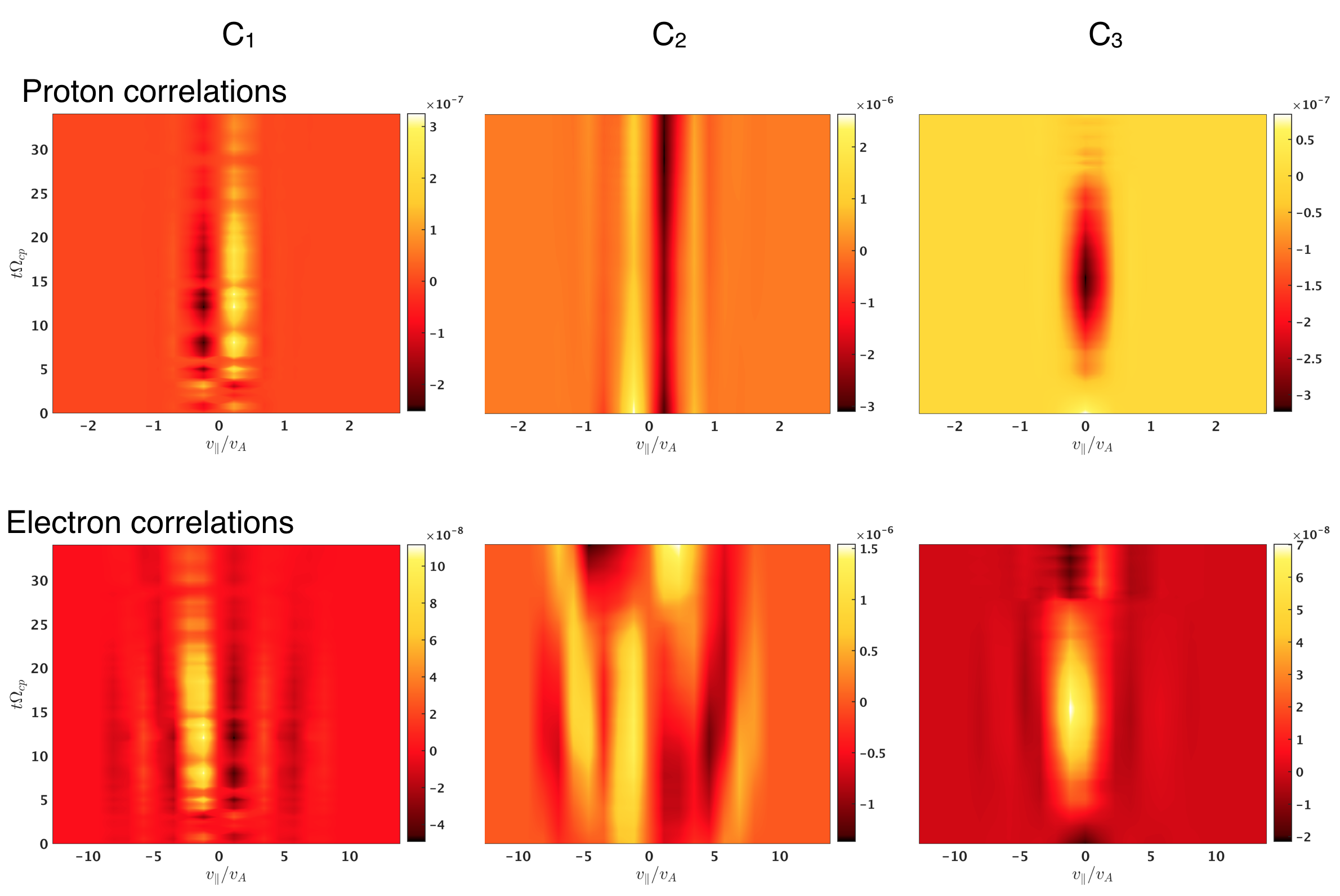}}
	\caption{Stack plots of $C_1$, \eqr{\ref{eq:LandauProxy}}, (left), $C_2$, \eqr{\ref{eq:TransitProxy}}, (middle), and $C_3$ , \eqr{\ref{eq:CyclotronProxy}}, (right) for the protons (top) and electrons (bottom) at $(x,y) = (10.24 d_p, 10.24 d_p)$ and $v_\perp = 0$ for a correlation time of $\tau = 15 \Omega_{cp}^{-1}$ in the Orszag-Tang vortex simulation.}
	\label{fig:OTStackPlots}
\end{figure}
\begin{figure}
	\subfloat{\includegraphics[width=\textwidth]{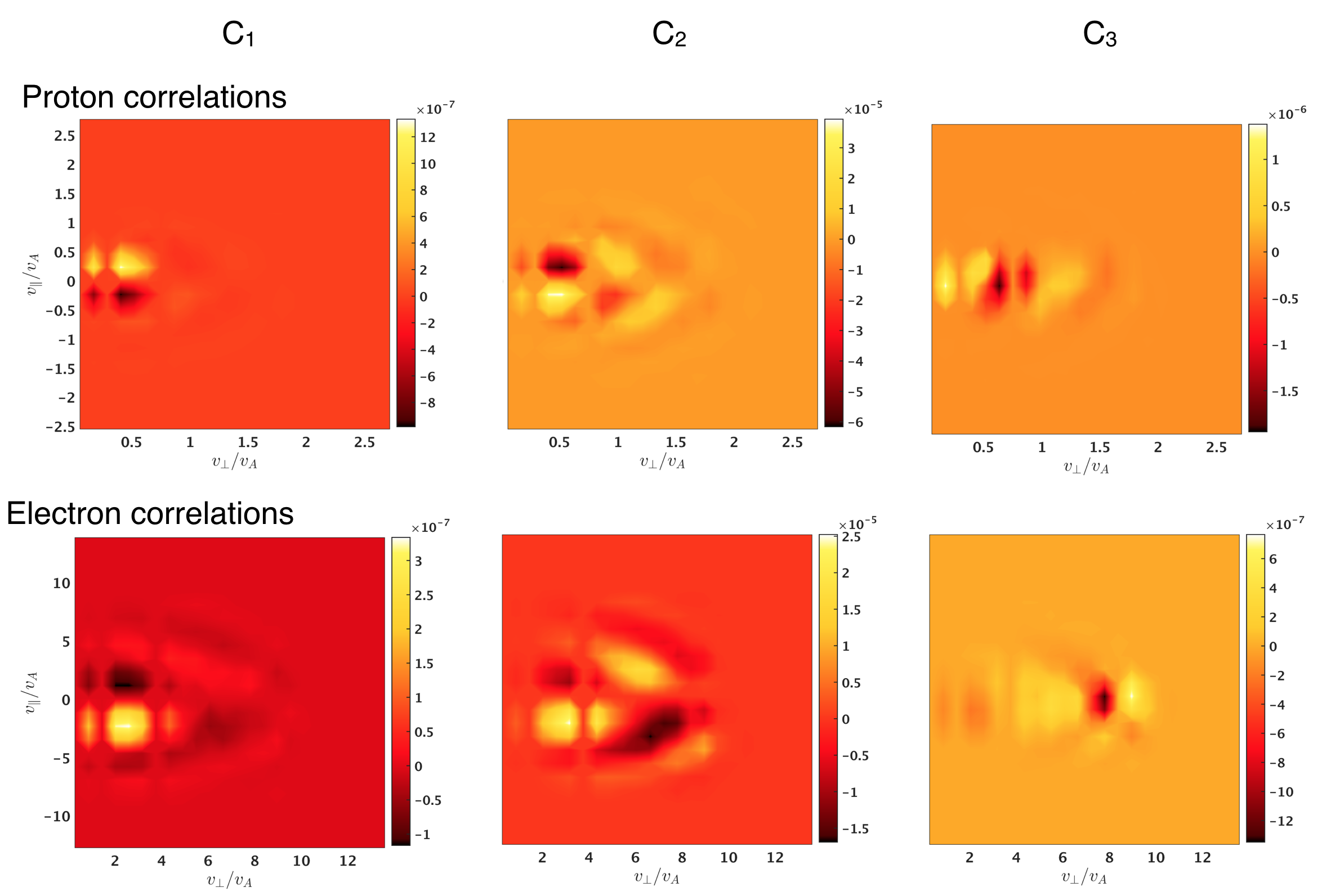}}
	\caption{$v_\perp-v_\parallel$ of $C_1$, \eqr{\ref{eq:LandauProxy}}, (left), $C_2$, \eqr{\ref{eq:TransitProxy}}, (middle), and $C_3$ , \eqr{\ref{eq:CyclotronProxy}}, (right) for the protons (top) and electrons (bottom), at $(x,y) = (10.24 d_p, 10.24 d_p)$ and $t = 30 \Omega_{cp}^{-1}$ for a correlation time of $\tau = 15 \Omega_{cp}^{-1}$ in the Orszag-Tang vortex simulation.}
	\label{fig:OTvperpvparallel}
\end{figure}
\end{par}
\begin{par}
We wish to make a couple final notes here about the Orszag-Tang simulation. One is that, out of all the benchmarks presented, it is perhaps most readily apparent that the small positivity violations described as inevitable in Section \ref{sec:ConservationTests} are present in this simulation, e.g. see the upper panels of Figure~\ref{fig:DistributionFunctionPlots}. We reiterate that this behavior is, of course, unphysical. We also want to emphasize again that the fluid moments such as the density and energy remain positive definite for the entire simulation in spite of these positivity violations in the distribution function, thus further adding credibility to our claim that small positivity violations do not pose as severe of a numerical threat as positivity violations do in standard fluid codes. Nonetheless, the unphysical nature of a negative distribution function, if sufficiently large, is enough to give one pause about any analysis one wishes to perform. Unfortunately, as we move to higher and higher dimensions, increasing the velocity resolution dramatically to reduce these positivity violations becomes a more intractable solution. Limiters which preserve the positivity of particular quantities in fluid algorithms are incredibly common, but due to the fact that conservation relations are implicit in the Vlasov-Maxwell system, fixing positivity in a kinetic system has additional subtleties. In Cagas \emph{et al.} \citep{Cagas:2017b}, they employ the Moe-Rossmanith-Seal limiter \citep{Moe2015} with the Vlasov-Maxwell solver,
\begin{align}
f^*(\mvec{z}) = f_0 + \theta (f(\mvec{z}) - f_0),
\end{align}
where $f_0$ is the cell average of the function $f(\mvec{z})$, and $\theta$ is defined as,
\begin{align}
\theta = \min\left(1, \frac{f_0}{f_0 - \min_{\mvec{z}} f(\mvec{z})} \right ). 
\end{align}
Such a limiter shifts the distribution function upward in a cell to guarantee positivity while keeping the cell average, and thus the number of particles, constant. However, energy conservation will be broken by this operation, and it could additionally make momentum non-conservation worse. In fact, Cagas \emph{et al.} \citep{Cagas:2017b} notes that the operation of fixing positivity gives rise to one to two percent energy conservation errors due to the increase in the particle energy. Conserving momentum and energy after the application of such a limiter can be done by scaling the distribution function by an additional function
\begin{align}
\pfrac{f}{t} = \alpha \gvs \cdot ((\mvec{v} - \mvec{u}) f),
\end{align}
where 
\begin{align}
\alpha = -\frac{1}{2} \ln \left (\frac{\epsilon}{\epsilon^*} \right ),
\end{align}
and $\epsilon$ is the second moment of the distribution function prior to the application of the limiter, while $\epsilon^*$ is the second moment of the rescaled distribution function after fixing positivity violations. Unfortunately, such an operation is nonlocal in velocity space, and it is as yet unclear what effect such an operation has on the numerical stability of the algorithm. A more natural limiter may involve modification of the differential operator, as opposed to the solution, to prevent the differential operator from ever generating unphysical positivity violations. By limiting the solution via changes to the differential operator, we could retain our conservation relations, but additional testing would be needed to determine if a modified operator still has the same convergence properties and produces the correct physical behavior of the Vlasov-Maxwell system. We defer a systematic study of limiters for kinetic systems to a future publication and for now comment that the physics of the Orszag-Tang simulation does not appears to be significantly impacted by the presence of positivity violations in the distribution function.
\end{par}
\begin{par}
The second note is that, on the coarse resolution explored here, it is even more important to discuss the divergence errors which will arise from how we have discretized Maxwell's equations. We focus in particular on the errors in the divergence of the magnetic field, since for the plasma parameters employed, the dynamics of the Orszag-Tang turbulence are magnetically dominated. In Figure \ref{fig:divBErrors}, we plot the divergence of the magnetic field normalized to the maximum of the curl of the magnetic field. The choice of normalization is due to how important the curl of the magnetic field is for the development of the current layers seen in the evolution of the out-of-plane current plotted in Figure \ref{fig:OTFluidMoments}.
\begin{figure}
\centerline{
    \setkeys{Gin}{width=0.5\linewidth}
	\includegraphics{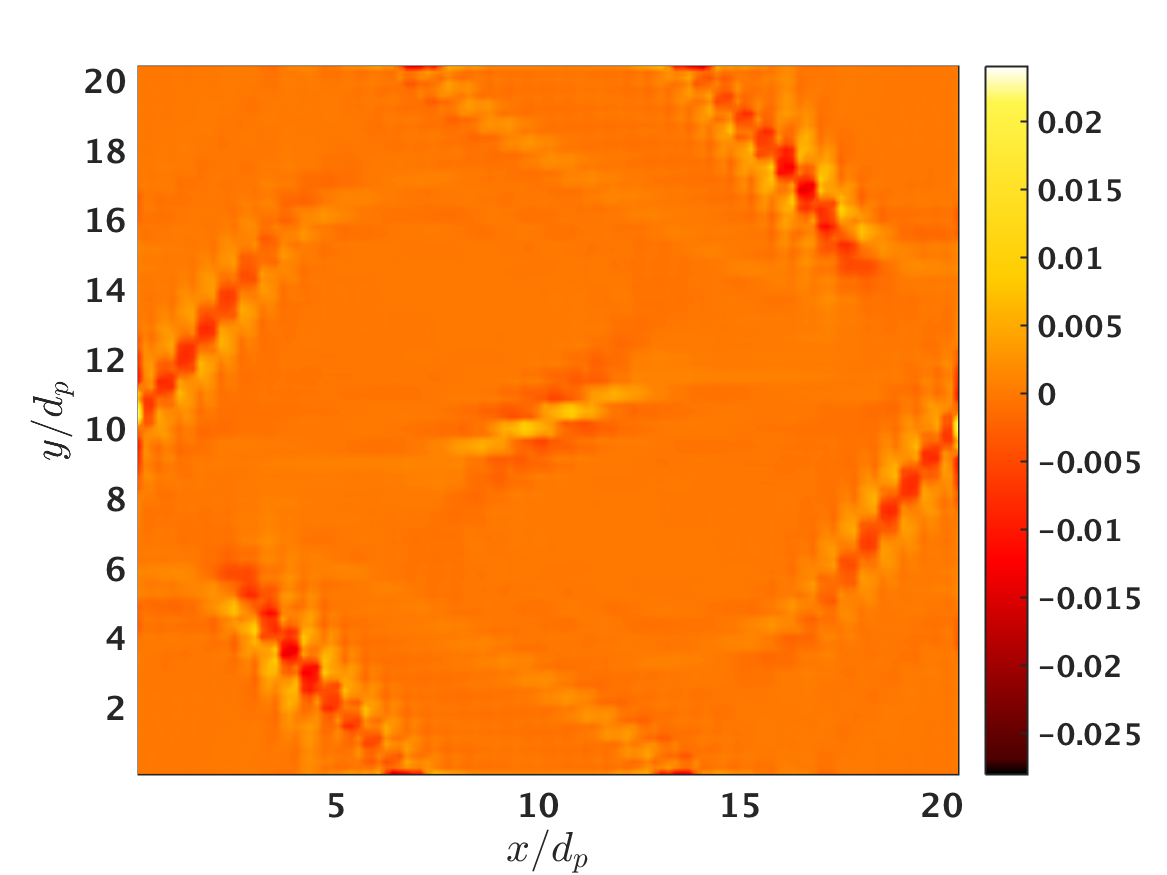}
	\includegraphics{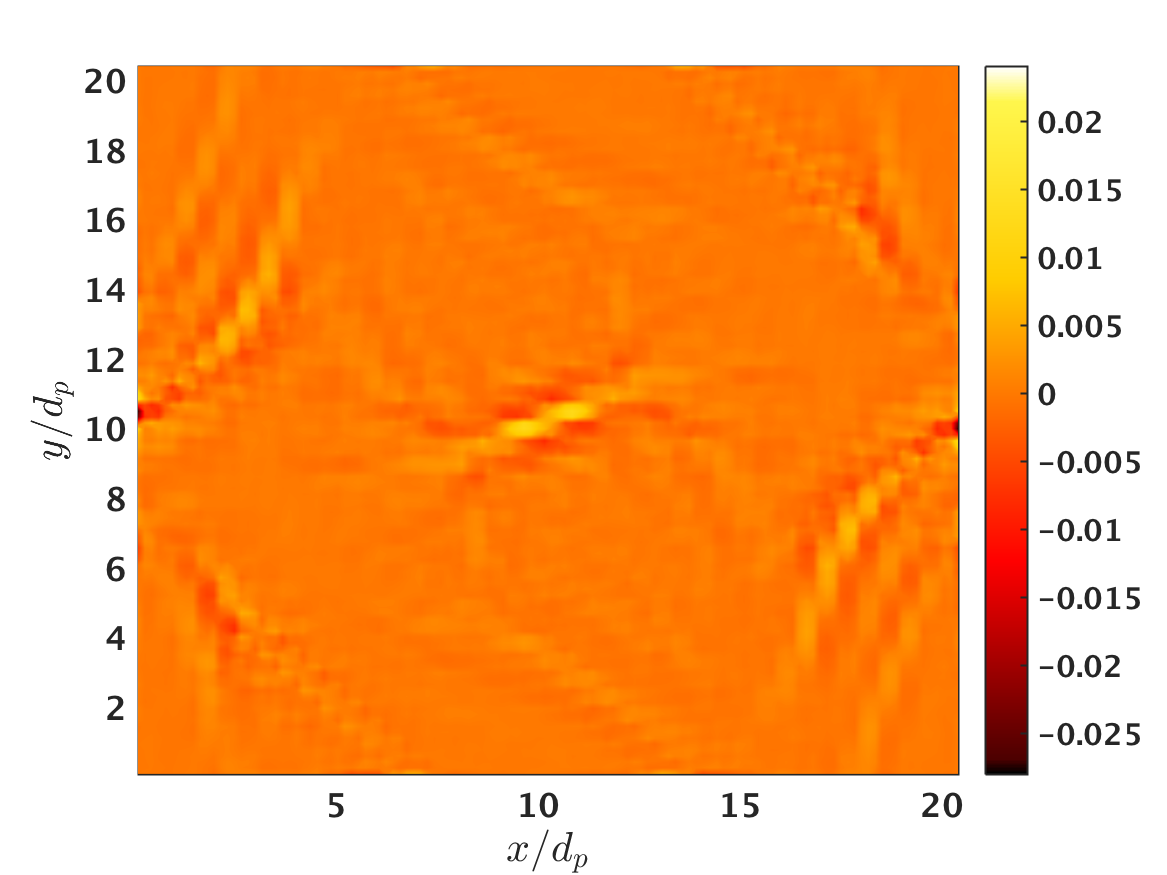}
	}
	\caption{The divergence of the magnetic field normalized to the maximum of the curl of the magnetic field at 20 $\Omega_{cp}^{-1}$ (left) and 40 $\Omega_{cp}^{-1}$ for the Orszag-Tang vortex. We note that the divergence errors appear to be small compared to the dynamics of interest and thus likely do not affect the physics.}
	\label{fig:divBErrors}
\end{figure}
Since the errors in Figure \ref{fig:divBErrors} appear to be small, at least when compared to a quantity of dynamical interest, we thus expect that these divergence errors, like the small positivity violations, have not significantly impacted the physics present in this turbulence simulation. However, it is worth acknowledging how we might eliminate, or at least further suppress, these sorts of errors since $\sim$ 2 percent errors are not completely trivial. One method of mitigating divergence errors is through so-called divergence cleaning schemes, such as Maxwell's equations in perfectly hyperbolic form \citep{munz_2000, munz_2000b, munz_2000c}. This system of equations adds the two divergence relations as dynamical equations, with their time evolution represented by correction potential which contain the accumulated errors in the divergence of the electric and magnetic fields,
 \begin{align}
   \frac{\partial \mvec{B}}{\partial t} + \nabla\times\mvec{E} + \gamma \nabla \psi &= 0, \label{eq:PHdbdt} \\
   \epsilon_0\mu_0\frac{\partial \mvec{E}}{\partial t} - \nabla\times\mvec{B} + \chi \nabla \phi &= -\mu_0\mvec{J}, \label{eq:PHdedt} \\
   \frac{1}{\chi} \pfrac{\phi}{t} + \nabla\cdot\mvec{E} &= \frac{\varrho_c}{\epsilon_0}, \label{eq:PHdivE} \\
   \frac{\epsilon_0 \mu_0}{\gamma} \pfrac{\psi}{t} + \nabla\cdot\mvec{B} &= 0, \label{eq:PHdivB}
 \end{align}
where $\gamma, \chi$ are dimensionless parameters which determine the speed at which errors propagate in the system. While this scheme is perfectly reasonable in vacuum, where the charge density, $\rho_c$, is zero, for a plasma, the correction potentials may modify the total energy, and thus the conserved quantity, or worse, they may modify the Lorentz force in a non-trivial way. Alternatively, reconstruction methods have been quite fruitful in magnetohydrodynamics \citep{Balsara2001614, Balsara2004, Balsara20095040}, where one only worries about the divergence of the magnetic field, and they have been recently extended to finite volume discretizations of relativistic two-fluid equations, as long as one explicitly evolves an equation for the charge density  \citep{Balsara2016357}. Finally, extensions to curl-conforming methods such as Nedelec finite elements, popular in the continuous Galerkin community, have been very recently achieved for both the discontinuous Galerkin method and equation systems with explicit sources, such as the Vlasov-Maxwell system \citep{Campos2016403, campospinto:hal-01303852, campospinto:hal-01303861}. These latter methods are particularly exciting, as they present a possible method of completely avoiding divergence errors, as opposed to just mitigating them, by virtue of the DG discretization containing a discrete version of the continuous vector identity $\nabla \cdot (\nabla \times \mvec{A}) = 0$, where $\mvec{A}$ is some vector field.
\end{par}
\section{Summary and Future Work}\label{sec:Summary}
\begin{par}
In this paper, we have described a new algorithm for the numerical solution of plasma kinetic equations. We have demonstrated that the semi-discrete discontinuous Galerkin (DG) discretization of the Vlasov-Maxwell system conserves particles exactly and conserves energy exactly when central fluxes are used for Maxwell's equations, independent of the choice of numerical flux function for the Vlasov equation. We have discussed at length an efficient implementation of the kinetic algorithm, which significantly reduces the number of multiplications required in a nodal basis expansion. This economic formulation of the algorithm scales incredibly well on standard computing architecture and has been used to verify the solver against common benchmarks, both electrostatic and fully electromagnetic, from only one configuration and velocity dimension, to a five dimensional calculation of turbulence.
\end{par}
\begin{par}
There remain many avenues of future work to further improve the Vlasov-Maxwell solver. As discussed in Section \ref{sec:TimeStepping}, the kinetic equation's CFL condition can be thought of as two distinct restrictions on the time step, a maximum velocity and a maximum acceleration, where the latter CFL can quite easily be more restrictive than the former. For instance, in the five dimensional turbulence run, because the magnetic field's amplitude is so large, the acceleration CFL is as much as two times more restrictive than the speed of light CFL with the given parameters. This constraint on the time-step arises despite being significantly non-relativistic, since $v_{Ae}/c = 0.1$. Splitting the update into a configuration space and velocity space update, and treating just the velocity space update implicitly, would pay huge dividends. Further, as we hinted at in the summary of the algorithm, a reformulation of the DG update in terms of a \emph{modal} update is an interesting path to pursue. While the nodal expansion often provides the simplest means of generalizing to complicated geometries and very sophisticated boundary conditions, for many astrophysical systems, simplicity is usually easily justified. As such, a modal expansion on structured, Cartesian grids, where a Gram-Schimdt orthogonalization procedure could be employed to produce an orthonormal basis expansion would produce a dramatic reduction in the number of operations required to perform the update of the kinetic equation. In doing so, it may be possible to reduce the scaling of the algorithm from $\mathcal{O}(N_p^2)$ to $\mathcal{O}(N_p)$ for the configuration space component of the update and $\mathcal{O}(N_q N_p)$ to $\mathcal{O}(N_c N_p)$ for the velocity space component of the update. In this case, $N_c$ is the number of basis functions in the expansion in configuration space for the DG discretization of Maxwell's equations. In fact, we find in a preliminary test of a modal basis expansion of the Vlasov-Maxwell system that these scaling do hold, encouraging further study of this alternative formulation. Traditionally, this approach is not justified because of how expensive the modal surface updates can become in 2D and 3D, but our preliminary tests additionally reveal a turnover from 3D to 4D, wherein even the surface integrals are much cheaper with the orthogonal modal basis. Other improvements to the solver can come from hardware considerations. Current computing paradigms are quickly moving towards many-core devices, which dramatically increase the number of floating point operations the architecture can do per second at the cost of available memory. This paradigm favors arithmetically intense algorithms like DG, as long as one embraces hybrid parallel programming models so that memory can be shared amongst cores and thus shared amongst floating point units.
\end{par}
\begin{par}
Finally, we consider further additions to the solver beyond the core algorithm which would further improve the tool's capabilities. As mentioned in the introduction, the algorithm derived and implemented here is in both the non-relativistic and collisionless limit of the Vlasov-Maxwell system of equations. Of course, there exist many plasma systems, especially in astrophysics, where relativistic effects are important, such as gamma ray bursts or the magnetospheres of neutron stars. And many plasma systems, such as the solar wind, are not in fact collisionless, but weakly collisional. The inclusion of the Lorentz boost factors and a discretized collision operator, either reduced such as the Bhatnagar-Gross-Krook\citep{Bhatnagar:1954}, \eqr{\ref{eq:BGKOperator}}, or Lenard-Bernstein, \eqr{\ref{eq:LBOperator}}, operators \citep{Kirkwood1947, Lenard1958, Dougherty1967}, or the full Landau-Fokker-Planck, \eqr{\ref{eq:FPOperator}}, operator \citep{Rosenbluth1957}, would be useful extensions to the model. It should be noted that the inclusion of relativistic effects in the Vlasov equation necessitates the inclusion of relativistic effects in the collision operator, e.g., the implementation of the Beliav-Budker operator \citep{Beliav:1956} instead of the Landau-Fokker-Planck operator, or some relativistic generalization of the example reduced collision operators. While the inclusion of the physics of relativistic effects would be vital for simulating extreme astrophysical environments such as pulsar wind nebulae, just as the study of dissipation and entropy production requires a discrete collision operator, there are challenges to the inclusion of these model extensions which are worth acknowledging. For example, when adding relativistic effects, one key element to our algorithm that must be respected is that the integrals in the discrete-weak form must be evaluated exactly. Unlike other systems in which DG has found success, such as the solutions of fluid equations, where aliasing errors from under-integration are often tolerated, no such concession can be made for a DG discretization of the Vlasov equation, lest one is willing to risk numerical instabilities from a lack of energy and particle conservation. Unfortunately, as can be seen in \eqr{\ref{eq:relativisticVlasov}} and \eqr{\ref{eq:LorentzBoost}}, the Lorentz boost factors in the relativistic Vlasov equation will inevitably lead to aliasing errors in our current algorithm due to the fact that standard Gauss-Legendre quadrature will fail to exactly evaluate the square roots and rational functions which arise from a DG discretization of the relativistic Vlasov equation. Even an algorithmic change to pre-evaluate the integrals, such as the one we are pursuing with the modal basis, may not work as closed form expressions for the integrals may not exist. The inclusion of a discrete collision operator has its own challenges, namely that collisions tend to manifest as a stiff source term in our equations, similar to the issues seen in the solution of advection-diffusion equations \citep{Pareschi2005, LiuKun:2016}. This challenge would normally motivate an operator split so as to treat the collision term implicitly, but since collision operators such as the Lenard-Bernstein and Landau-Fokker-Planck operators are nonlinear, such a split is nontrivial and may require very sophisticated implicit solves. These issues, while challenging, do have solutions in specific, non-general cases. For example, a worthwhile avenue to pursue for including relativistic effects would be to Taylor expand the Lorentz Boost factors locally in a momentum cell, thus allowing the treatment of some relativistic effects while maintaining an alias-free algorithm. And since many plasma systems are weakly collisional, a low enough collisionality such that the collision operator is not as stiff as the advection term could be justified for certain applications. These solutions would very likely break down when attempting to model highly relativistic, highly collisional plasmas; however, a path forward exists for including the physics of collisions and special relativity without solving the fully general problem.
\end{par}
\begin{par}
Beyond the addition of a physical model for collisions and the inclusion of relativistic effects, there is much to be gained in added flexibility of our velocity space discretization. Currently, velocity space boundary conditions remain an unphysical characteristic of the discrete system. An assumption that no particles will reach velocities at the edge of velocity space is a dangerous one, as sufficient particle acceleration or heating can quickly lead to numerical issues when the plasma interacts with the velocity space ``wall.'' Extending velocity space to $\pm \infty$ via something like a ``super-cell'' wherein the volume term is integrated with Gauss-Laguerre instead of Gauss-Legendre quadrature, as some gyrokinetic codes currently employ \citep{Numata:2010, Jenko:2000a}, would allow for the treatment of plasma systems which experience non-trivial amounts of particle acceleration and/or heating. Non-uniform grids in velocity space would allow one to concentrate velocity resolution around expected resonance processes. Adaptive mesh refinement in velocity space \citep{Arslanbekov2013, Zabelok2015455} would prove invaluable for highly nonlinear simulations such as the turbulence calculation presented in this paper, as velocity space requirements vary dramatically throughout the domain, from minimal in quiescent parts of the plasma to quite demanding in the middle of reconnecting current sheets such as the one which forms in the center of the domain in the Orszag-Tang vortex. Ideally, such mesh refinement would be anisotropic, as a regridding procedure in five or six dimensions would be quite demanding, and configuration space mesh refinement is not strictly necessary; for many problems of interest, the configuration space scales one wants to resolve are well-defined. Such flexibility in velocity space grids is not guaranteed to provide significant advantage over uniform, fixed, grids, as processes such as phase-mixing can create structure everywhere in velocity space at arbitrarily small velocity space scales; however, for simulations which include significant particle acceleration such as the electrostatic shock in Section \ref{sec:ESShock}, the generation of the highly energetic tail in the proton distribution function would be much cheaper. And relativistic laser-plasma Vlasov codes in 1X1P have demonstrated significant gains with AMR \citep{Wettervik2016}. Most importantly, as discussed in Sections \ref{sec:ConservationTests} and \ref{sec:OT}, there is currently nothing inhibiting the distribution function from going negative, nor is there anything preventing the equation system from accumulating errors in the divergence relations in Maxwell's equations, Eqns. \ref{eq:divE} and \ref{eq:divB}. Although the fluid moments such as density and temperature are positive definite for the extent of the run, the distribution function likewise should also be positive definite, and although the divergence errors remain small, charge should be conserved and the divergence of the magnetic field should remain zero. The implementation of a positivity, and perhaps monotonicity, enforcing limiter, either to the solution or the differential operator, which preserves particles and energy, and the implementation of a curl-conforming DG method for Maxwell's equations, such as the one outlined in Pintos \emph{et al.} \citep{campospinto:hal-01303861}, would be the two most valuable improvements of all the algorithmic enhancements discussed here. 
\end{par}
\begin{par}
Nonetheless, despite many possibilities for future development, we have demonstrated the robustness of a DG Vlasov-Maxwell solver through a variety of benchmarks. The level of signal present in the continuum discretization of the Vlasov-Maxwell system is quite substantial. The Orszag-Tang run exhibits greater than eight orders of magnitude of signal in the distribution function. Given cutting edge diagnostics such as the field particle correlations \citep{Howes2017, Klein2016, Klein2017}, the value of having a noise-free distribution function is made manifest. Many exciting problems can be tackled with this tool, spanning the space of plasma systems found in the lab, such as radio-frequency heating problems, and space and astrophysical plasmas, such as ongoing attempts to quantify dissipation in plasma turbulence in the heliosphere.
\end{par}

\vspace{1em}
\noindent {\bf Acknowledgements}
\nopagebreak
\vspace{1em}
\nopagebreak

The authors acknowledge fruitful discussions with Greg Hammett, Kristopher Klein, and Petr Cagas. The work of J. Juno and J. TenBarge was supported by the National Science Foundation SHINE award No. AGS-1622306 and the UMD DOE grant DE-FG02-93ER54197. The work of W. Dorland was also supported by the UMD DOE grant DE-FG02-93ER54197. The work of A. Hakim and E. Shi was supported by the U.S. Department of Energy under Contract No. DE-AC02-09CH11466, through the Max-Planck/Princeton Center for Plasma Physics and the Princeton Plasma Physics Laboratory. A. Hakim was also supported by the Air Force Office of Scientific Research under Grant No. FA9550-15-1-0193. This work used the Extreme Science and Engineering Discovery Environment (XSEDE), which is supported by National Science Foundation grant number ACI-1548562.

\vspace{1em}
\noindent {\bf References}
\bibliographystyle{elsarticle-num} 
\bibliography{abbrev2.bib,all.bib}


\end{document}